\newtheorem{theorem}{Theorem}[section]
\newtheorem{lemma}[theorem]{Lemma}
\newtheorem{claim}[theorem]{Claim}
\newtheorem{question}[theorem]{Question}
\newtheorem{remark}[theorem]{Remark}
\newtheorem{definition}[theorem]{Definition}
\def\eps{\e}
\def\ecd{L_{G, M}(S)}
\def\bdry{\partial\!}
\newcommand{\patt}{\mathbf{p}}
\newcommand{\Patt}{\mathbf{P}}
\newcommand{\ecc}{\mathsf{ecc}}
\def\reals{\mathbb{R}}
\newcommand{\Edist}[2]{\left\lVert #1,#2\right \rVert}
\newcommand{\rep}{\mathsf{rep}}
\title{Computing Diameter\texttt{+}1 in Truly Subquadratic Time for Unit-Disk Graphs\footnote{A conference version has appeared in the 40th International Symposium on Computational Geometry (SoCG 2024). }}
\author{%
Hsien-Chih Chang%
\thanks{Department of Computer Science, Dartmouth College. Email: {\tt hsien-chih.chang@dartmouth.edu}.}  
\and 
Jie Gao%
\thanks{Department of Computer Science, Rutgers University. Email: {\tt jg1555@rutgers.edu}. Supported partially by DMS-2220271, DMS-2311064, IIS-2229876, CCF-2208663, CCF-2118953.}  
\and 
Hung Le%
\thanks{Manning CICS, UMass Amherst. Email: {\tt hungle@cs.umass.edu}. Supported by the NSF CAREER Award No. CCF-2237288, an NSF Grant No. CCF-2121952, and a Google Scholar Research Award.}  
}
\begin{document}

\maketitle
\begin{abstract}
    Finding the diameter of a graph in general cannot be done in truly subquadratic assuming the Strong Exponential Time Hypothesis (SETH), even when the underlying graph is unweighted and sparse. When restricting to concrete classes of graphs and assuming SETH, planar graphs and minor-free graphs admit truly subquadratic algorithms, while geometric intersection graphs of unit balls, congruent equilateral triangles, and unit segments do not. 
    \emph{Unit-disk graphs}  is one of the major open cases where the complexity of diameter computation remains unknown.
    More generally, it is conjectured that a truly-subquadratic time algorithm exists for pseudo-disk graphs where each pair of objects has at most two intersections on the boundary.
    
    In this paper, we show a truly-subquadratic algorithm of running time $\Tilde{O}(n^{2-1/18})$, for finding the diameter in a unit-disk graph, whose output differs from the optimal solution by at most 1.
    This is the first algorithm that provides an additive guarantee in distortion, independent of the size or the diameter of the graph.
    Our algorithm requires two important technical elements. 
    First, we show that for the intersection graph of pseudo-disks, the \emph{graph VC-dimension} --- either of $k$-hop balls or the distance encoding vectors --- is 4.  This contrasts to the VC dimension of the pseudo-disks themselves as geometric ranges (which is known to be 3).
    Second, we introduce a \emph{clique-based $r$-clustering} for geometric intersection graphs, which is an analog of the $r$-division construction for planar graphs. We also showcase the new techniques by establishing new results for distance oracles for unit-disk graphs with subquadratic storage and $O(1)$ query time. 
    The results naturally extend to unit $L_1$- or $L_\infty$-disks and fat pseudo-disks of similar size. Last, if the pseudo-disks additionally have bounded ply, we have a truly-subquadratic algorithm to find the \emph{exact} diameter.
\end{abstract}

{

\section{Introduction}

Given a set $F$ of $n$ objects in the $d$-dimensional Euclidean space $\reals^d$, the \EMPH{geometric intersection graph $F^\times$} has vertices representing the objects in $F$ and edges representing two overlapping objects. When the objects $F$ are disks of radius $1$, the intersection graph is called the \EMPH{unit-disk graph}, where the vertices are centers of the disks in $F$ and two vertices are connected if and only if their distance is no more than $2$.
Unit-disk graphs have been widely used to model wireless communication. 
It is also an interesting family of graphs that admit approximation schemes for many graph optimization problems~\cite{Hunt1998-ns, Nieberg2005-hv}.

Geometric intersection graphs, unlike planar graphs, can be dense. 
But such graphs can be implicitly represented by storing only the set of objects, and the existence of an edge in the graph can often be verified by directly examining the two corresponding objects. 
Thus many algorithms on geometric intersection graphs avoid computing the set of edges explicitly.  
For example, single-source shortest paths in (unweighted) unit-disk graphs can be done in time $O(n\log n)$~\cite{Efrat2001-hm,Cabello2015-vo,Chan2016-sy}, even though the graph may have $\Theta(n^2)$ many edges. 
All-pairs shortest paths can be solved in near-quadratic time for several geometric intersection graphs, including disks, axis-parallel segments, fat triangles in the plane, and boxes in constant dimensional spaces~\cite{Chan2017-oa}. 

In this paper, we examine two distance-related problems, namely, the graph diameter problem and the distance oracle problem for geometric intersection graphs, in particular for unit-disk graphs.  See \Cref{sec:related} for a discussion of prior work on this problem.
A fundamental problem in this area is to determine whether 
\textsc{Diameter} problem can be solved in truly-subquadratic time for geometric intersection graphs. 
This is answered negatively for many types of geometric intersection graphs~\cite{Bringmann2022-me} using a reduction from the Orthogonal Vector Conjecture~\cite{williams2005new} (which is implied by SETH): 
Deciding if diameter is at most $3$ for unit segments in $\reals^2$ and congruent equilateral triangles (with rotation) in $\reals^2$; if diameter is at most $2$ for axis-parallel hypercubes in $\reals^{12}$; and if diameter is at most $k$ for unit balls in $\reals^3$, axis-parallel unit cubes in $\reals^3$ and axis-parallel line segments in $\reals^2$. 
On the positive side, one can decide in $O(n\log n)$ time whether graph diameter is \emph{at most two} for unit-square graphs in $\reals^2$.
However, for unit-disk graphs, arguably the most basic intersection graphs, the complexity of \textsc{Diameter} problem remains wide open. 

\begin{question}
    Can we compute the diameter of unit-disk graphs in truly-subquadratic time?
    \label{quest:uit-disk}
\end{question}

Currently, there is no strong evidence that the answer of Question \ref{quest:uit-disk} is positive or negative. 
As we mentioned above, \textsc{Diameter} for unit-ball graphs in dimension at least $3$ does not have a truly-subquadratic time algorithm unless the Orthogonal Vector Conjecture is false. 
On the other hand, dimension $2$ is fundamentally different from dimension 3 or above, and there are problems that are hard for dimension 3 or above but become much easier in $\mathbb{R}^2$~\cite{Bringmann2022-me}. 

Given the lack of progress on Question \ref{quest:uit-disk}, it is natural to consider approximation algorithms. 
When edges in unit-disk graphs are given their Euclidean distances as weights, finding $(1+\eps)$-approximation of the graph diameter takes $\tilde{O}(n^{3/2})$ time~\cite{Gao2005-kx}; this is later improved to near-linear time~\cite{Chan2019-vw}.  Their approach could be modified to handle unweighted unit-disk graphs to get a \emph{hybrid $(1+\eps, +(4+2\eps))$-approximation} algorithm for \textsc{Diameter}, meaning that the returned approximate diameter is at most $(1+\eps)D + (4+2\eps)$ where $D$ is the true diameter. 
The additional additive error is because when the edges are weighted, for a dense set of disks (e.g., forming cliques of arbitrary size) we can use a subset of disks of density $O(1/\eps^2)$ to obtain a $(1+\eps)$-multiplicative distance approximation; this is no longer true in the unweighted setting --- even removing one disk can potentially introduce a constant additive error to the diameter.  While these results indicate that being on the Euclidean plane helps, stronger evidence supporting a positive answer for Question \ref{quest:uit-disk} would be a \emph{$+\beta$-additive approximation}, where the returned diameter lies in between $D$ and  $D + \beta$. 

\begin{question}
\label{quest:uit-disk-additive}
    Can we compute $+\beta$-approximation of the diameter of (unweighted) unit-disk graphs for some constant $\beta$ in truly-subquadratic time?
\end{question}

A much more general and harder problem is to compute the diameter for the intersection graphs of pseudo-disks~\cite{Bringmann2022-me}. 
Not surprisingly, we are very far from having the answer, given that the unit-disk case remains wide open (\Cref{quest:uit-disk}). 
Unlike the unit-disk graphs, to the best of our knowledge, there are no known \emph{non-trivial approximation} of the diameter in truly-subquadratic time, even for pseudo-disks with constant complexity. In this work, we consider the possibility of obtaining a purely additive approximation of diameter for the intersection graphs of pseudo-disks with constant complexity that have reasonable shapes.  Specifically, we assume that the pseudo-disks are \emph{fat objects} that have \emph{constant complexity} and are similar in size --- those that can be sandwiched between two disks of the same center of radius $r$ and $R$, where $r\leq R$ being two universal constants. 
These objects generalize unit disks and include other objects like unit $L_{p}$-disks, as well as same-size constant-sided convex polygons.

\begin{question}
\label{quest:pseud-disk-additive}
    Can we compute $+\beta$-approximation of the diameter of (unweighted) intersection graphs of similar-size pseudo-disks with constant complexity for {some constant $\beta$} in truly-subquadratic time? 
\end{question}

One source of difficulty in computing diameter in truly-subquadratic time of geometric intersection graphs is that the explicit representation of the intersection graphs could have $\Theta(n^2)$ edges. 
This naturally raises the question of obtaining such an algorithm for \emph{sparse intersection graphs}, where the number of edges is $O(n^{2-\delta})$ for some constant $\delta>0$. 
The answer to this question also remains open. 
A significant progress toward answering this question would be the case of \emph{constant ply}. 
A set of objects is said to have \EMPH{ply $k$} if every point in the space can stab at most $k$ objects in the set. 

\begin{question}
\label{quest:pseud-disk-exact-ply}
    Can we compute the \emph{exact diameter} of (unweighted) intersection graphs of similar-size pseudo-disks with constant complexity and ply in truly-subquadratic time? 
\end{question}

A positive answer to \Cref{quest:pseud-disk-exact-ply} also provides evidence for a positive answer to \Cref{quest:uit-disk}, as unit-disk graphs of constant ply is a special case of similar-size pseudo-disks with constant complexity and ply.

\subsection{Main Results}

In this work, we resolve Questions~\ref{quest:uit-disk-additive}, \ref{quest:pseud-disk-additive}, and \ref{quest:pseud-disk-exact-ply} affirmatively. 
We can even set the additive approximation constant $\beta$ as small as $1$, which is as close to the true diameter as one can get.  First, we present our results for unit-disk graphs.

\begin{theorem}\label{thm:main} There is an algorithm computing a $+1$-approximation of the diameter of any given unweighted unit-disk graph with $n$ vertices in $\Tilde{O}(n^{2-1/18})$ time. 
\end{theorem}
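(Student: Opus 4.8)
The plan is to combine the two structural ingredients announced above --- the clique-based $r$-clustering and the VC-dimension bound --- in the spirit of ``small separator $+$ bounded VC dimension $\Rightarrow$ subquadratic diameter'' algorithms, while spending the additive budget so carefully that only a $+1$ slack leaks out at the end. Fix a parameter $r$, to be optimized at the end (a small polynomial in $n$). First, overlay the plane with an axis-parallel grid of cell side $\sqrt2$; since two unit disks intersect iff their centers are within distance $2$, each nonempty cell induces a clique, so this is a clique partition $K_1,\dots,K_t$ of $V(G)$ with $t\le n$, and because a unit disk reaches only $O(1)$ neighbouring cells, the cell-adjacency graph has bounded degree and balanced separators of size $O(\sqrt{(\text{number of cells})})$. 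Applying the clique-based $r$-clustering to this structure partitions $V(G)$ into clusters, each a union of at most $r$ of the $K_i$, together with a set $\mathcal B$ of $\tilde O(n/\sqrt r)$ \emph{boundary cliques}, so that any path leaving a cluster $C$ passes through a vertex of a boundary clique incident to $C$.

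Next, for every boundary clique $K\in\mathcal B$ run one multi-source BFS from the vertex set of $K$ using the $O(n\log n)$-time unit-disk BFS; this yields $d(v,K)=\min_{b\in K}d(v,b)$ for all vertices $v$, in total time $\tilde O(n^2/\sqrt r)$. For vertices $u,v$ lying in different clusters, let $C$ be the cluster of $v$ and put $\hat d(u,v)=\min_{K\in\mathcal B_C}\bigl(d(u,K)+d(K,v)\bigr)$, where $\mathcal B_C$ is the set of boundary cliques incident to $C$. Because every $K$ is a clique, the triangle inequality inside $K$ gives $\hat d(u,v)\ge d(u,v)-1$ for \emph{every} $K$, while the boundary clique crossed by a shortest $u$--$v$ path (which lies in $\mathcal B_C$) certifies $\hat d(u,v)\le d(u,v)$; so $\hat d(u,v)\in[d(u,v)-1,\,d(u,v)]$ --- already a valid lower bound. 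For $u,v$ in the \emph{same} cluster $C$ whose shortest path stays inside $C$, pick one representative per clique of $C$ and run BFS from each representative within $C$; this recovers $d(u,v)$ up to a small additive error in $\tilde O(r\cdot n)$ total time, and same-cluster pairs whose shortest path leaves $C$ are handled by the previous estimate.

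It remains to compute, for each cluster $C$ and each $v\in C$, the quantity $F(v)=\max_{u\notin C}\hat d(u,v)=\max_{u\notin C}\min_{K\in\mathcal B_C}\bigl(d(u,K)+d(K,v)\bigr)$; doing this naively over all $u$ costs $\Theta(n^2)$, and this is where the VC-dimension bound pays off. Both $(d(u,K))_{K\in\mathcal B_C}$ (the profile of $u$) and $(d(K,v))_{K\in\mathcal B_C}$ are already known from the boundary BFS, and $\hat d(\cdot,v)$ is $1$-Lipschitz in the profile. A threshold decomposition of the distance function together with the Sauer--Shelah lemma and the packing lemma for set systems of VC dimension $4$ (applied to the ball system / distance-encoding vectors, which the paper shows has VC dimension $4$ for pseudo-disk intersection graphs) shows that, up to $\ell_\infty$-rounding, there are only $\tilde O(|\mathcal B_C|^{O(1)})$ distinct profiles; picking one representative per profile class reduces the ``$\max$ over $u$'' to a max over this small set, and $F(v)$ for all $v\in C$ is then a tropical matrix--vector evaluation of affordable size. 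Summing over all clusters and balancing $n^2/\sqrt r$ against the profile-enumeration and local-BFS costs gives the running time $\tilde O(n^{2-1/18})$; finally one assembles $\widetilde\ecc(v)$ from these estimates \emph{kept as lower bounds on the true distances}, argues that the estimate for the diametral pair loses at most $1$, so that $\max_v\widetilde\ecc(v)$ lies in $[D-1,D]$, and outputs this value plus $1$.

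The main obstacle is the VC-dimension step: Sauer--Shelah controls only the number of distinct $0/1$ ball-traces on $\mathcal B_C$, whereas we need control on \emph{integer} distance vectors in $\ell_\infty$ and, more importantly, we must convert ``few profile types'' into a genuinely subquadratic eccentricity computation rather than merely a subquadratic number of BFS calls. Secondary difficulties are establishing the clique-based $r$-clustering with the claimed $\tilde O(n/\sqrt r)$ boundary-clique bound for unit-disk graphs (the cell-adjacency graph is not planar, so one must exploit that it is a bounded-degree near-grid graph), and the delicate tracking of additive errors --- every rounding and every use of clique representatives must be charged so that the total slack is exactly $1$ and not a larger constant.
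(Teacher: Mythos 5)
Your plan has the right skeleton --- cliques as proxies for boundary vertices, distance-encoding patterns bounded via VC dimension~$4$, and an eccentricity computation that iterates over pattern classes --- and your ``lower-bound-and-add-$1$'' accounting for the cross-cluster case (the triangle inequality through a boundary clique gives $\hat d(u,v)\in[d(u,v)-1,\,d(u,v)]$) is a legitimate reformulation of the paper's dummy-vertex trick. However, there is a genuine gap in the clustering step that the rest of your argument silently relies on. You build the $r$-clustering by taking an $r$-division of the \emph{cell-adjacency graph}, so a cluster is a union of at most $r$ grid cells. But a single cell can contain $\Theta(n)$ disks, so the \emph{interior vertex count} of a cluster is completely uncontrolled. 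The paper's clique-based $r$-clustering (Definition~\ref{def:clique-r-division-2nd}) explicitly requires $|R^\circ|\le r$ \emph{in vertices} while only $\partial R$ may be large (bounded by $\le r$ cliques), and this asymmetry is exactly what makes the eccentricity computation affordable: in Step~3b one runs BFS from each of the $\le r$ interior vertices inside $G[R^\circ]$, and takes a pointwise $\min$ of the local distance $d_{R^\circ}(u,v)$ with the pattern-based upper bound, per interior vertex $v$. With unbounded interiors you can no longer afford to evaluate $\max_{v\in C}\min\{\text{local},\,\text{boundary}\}$ --- the $\min$ cannot be split across the $\max$, and iterating over $(u,v)\in C\times C$ is $\Theta(|C|^2)$, which is $\Theta(n^2)$ for a dense cell. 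The paper avoids this precisely by recursing with the well-separated clique-based separator on the \emph{disk} set (not the cell graph), so that interiors shrink in vertex count while boundaries stay small in clique count; your ``first attempt''-style grid $r$-division is one of the natural ideas the paper's Section~\ref{subsec:ideas} discussion is warning against.

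A second, smaller but still fatal, issue is the same-cluster estimate. You propose BFS ``from one representative per clique of $C$,'' giving $d_{G[C]}(r_{K},\cdot)$, and claim this recovers $d(u,v)$ ``up to a small additive error.'' For $u\in K$ the value $d_{G[C]}(r_K,v)$ is only within $\pm 1$ of $d_{G[C]}(u,v)$, so it is not a lower bound; making it one by subtracting $1$ can undershoot the true distance by $2$, and then your final ``$\max+1$'' is only a $+2$-approximation. The fix the paper uses is to BFS from the whole clique $K$ at once (equivalently, from a dummy star center with half-weight edges, as in Section~\ref{sec:plusone}), giving $d_{G[C]}(K,u)=\min_{b\in K}d_{G[C]}(b,u)\in[d_{G[C]}(u,v)-1,\,d_{G[C]}(u,v)]$; this is the $+1$-preserving estimate. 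Your proposal never states that the multi-source variant is what's needed, nor that the profile vectors you want to count are distances to \emph{cliques} rather than to vertices --- which is why the paper proves a separate VC bound for the dummy graph (Theorem~\ref{thm:dummy-graph-VC}) rather than only for vertex-based encodings. You flag both of these as ``worries'' at the end, but without the interior-size bound and the clique-source BFS the proof does not go through.
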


Our algorithm is a combination of two technical ingredients. 
(1) We show that both the distance encoding vectors defined by Le and Wulff-Nilsen~\cite{lw2024} as well as the set of $k$-neighborhood balls defined by Ducoffe, Habib, and Viennot~\cite{ducoffe2022diameter} have VC-dimension of $4$ for unit-disk graphs and pseudo-disk graphs in general.
(2) We develop a new \emph{clique-based $r$-clustering} which is analogous to an $r$-division for planar and minor-free graphs~\cite{Frederickson1987,Wulff-Nilsen11}. 
The combination is inspired by recent developments in computing diameter in truly-subquadratic time for minor-free graphs~\cite{lw2024}; we will discuss these technical ideas in detail in \Cref{subsec:ideas}. 
We then generalize our algorithm for unit-disk graphs to work with similar-size pseudo-disks with constant complexity. 

\begin{theorem}\label{thm:pseudodisk-diam-addive} 
Given an unweighted $n$-vertex similar-size pseudo-disk graphs with constant complexity, we can compute a $+1$-approximation of the diameter in $\Tilde{O}(n^{2-1/18})$ time.
\end{theorem}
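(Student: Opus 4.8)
The plan is to establish \Cref{thm:pseudodisk-diam-addive} by checking that every component of the unit-disk algorithm behind \Cref{thm:main} only uses properties of unit disks that are shared by similar-size fat pseudo-disks with constant complexity. Concretely, the unit-disk algorithm rests on three pillars: (i) the graph VC-dimension bound (of $k$-hop balls and distance encoding vectors) being $4$; (ii) a clique-based $r$-clustering that partitions the graph into $O(n/r)$ pieces, each expressible as a bounded number of cliques, with a small boundary; and (iii) a subquadratic single-source shortest-path / BFS primitive. I would isolate each pillar as a black box whose hypotheses are stated purely in terms of ``the objects are pseudo-disks, fat, of constant complexity, and pairwise of comparable size,'' and then invoke them verbatim.

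First I would observe that pillar (i) already holds in the stated generality: the excerpt explicitly asserts that the graph VC-dimension is $4$ ``for unit-disk graphs and pseudo-disk graphs in general,'' so nothing new is needed there. Second, for pillar (ii), the clique-based $r$-clustering, I would rerun the construction using the two structural facts that similar-size fat pseudo-disks enjoy: (a) bounded \emph{local ply up to scaling} --- within any region of bounded diameter, a set of pairwise-intersecting such objects that are far apart would force two disjoint objects, so cliques correspond to spatially clustered objects; and (b) a grid of cells of side comparable to $R$ covers space so that objects in non-adjacent cells cannot intersect, and within a bounded neighborhood of cells the objects intersecting a fixed cell that pairwise intersect form $O(1)$ cliques (using fatness and comparable size, as in standard packing arguments for fat objects). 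These are exactly the combinatorial inputs the unit-disk clustering used; swapping ``distance $\le 2$'' for ``intersection of similar-size fat pseudo-disks of constant complexity'' changes only constants in the cell size and the number of cliques per piece, hence only the hidden constants in $\tilde O(\cdot)$, not the exponent $2-1/18$. Third, for pillar (iii), I would cite the known near-linear-time BFS/SSSP machinery for intersection graphs of fat objects of comparable size (the same line of work as \cite{Efrat2001-hm,Cabello2015-vo,Chan2016-sy,Chan2017-oa}), which again relies only on a shifted-grid data structure and bounded packing, both available here.

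With the three pillars in place, the proof is a matter of reassembling them: run the clique-based $r$-clustering, use the VC-dimension-$4$ bound to argue that the distance encoding vectors / $k$-hop ball systems restricted to each piece and its boundary have only polynomially many distinct patterns of bounded exponent, enumerate them with the subquadratic SSSP primitive, and combine across pieces exactly as in \Cref{thm:main}. I would choose the clustering parameter $r$ to be the same polynomial in $n$ as in the unit-disk case so that the two-term running-time trade-off balances at the identical exponent $2-1/18$, only with larger constants depending on the fatness ratio $R/r$ and the complexity bound.

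The main obstacle I anticipate is pillar (ii): verifying that the clique-cover guarantee of the $r$-clustering truly survives the generalization. For unit disks, ``a bounded-diameter region contains $O(1)$ cliques'' is immediate from the packing property of congruent disks; for fat pseudo-disks of comparable size it still holds, but the argument is subtler --- one must rule out pathological pairwise-intersecting families using the constant-complexity bound (so that pairwise intersection patterns are controlled, via the fact that two such pseudo-disks cross their boundaries $O(1)$ times) together with fatness and size-comparability to get a genuine packing/Helly-type statement. I would spend most of the proof making this precise, and everything else should follow by direct substitution into the proof of \Cref{thm:main}.
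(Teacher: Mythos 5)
Your three-pillar decomposition is exactly the paper's plan, and your instinct to generalize pillar (ii) via Remark~\ref{rm:well-sepclique-basd} plus packing-style arguments for fat similar-size objects is on target. However, you misallocate the effort. The paper finds pillar (ii) comparatively cheap: Lemma~\ref{lm:rclustering-pseudo-disk} simply invokes Remark~\ref{rm:well-sepclique-basd} (which shows the well-separated clique-based separator algorithm for unit disks already handles similar-size fat objects of constant complexity by tweaking constants) together with Remark~\ref{remark:clique-based}, and then delegates the remaining work --- building a spanning tree of the intersection graph --- to the SSSP routine. So pillar (ii) is \emph{not} the bottleneck you anticipate; it actually \emph{depends} on pillar (iii).

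Pillar (iii) is where the genuine new work lives, and your proposal treats it as citable when it is not. The references you name (\cite{Efrat2001-hm,Cabello2015-vo,Chan2016-sy,Chan2017-oa}) give near-linear SSSP for unit disks and a few other special shapes; the paper explicitly lists the known cases (unit disks, unit $L_1$/$L_\infty$ disks, arbitrary disks) and points out that general similar-size pseudo-disks are not among them. The paper has to prove Theorem~\ref{thm:SSSP} afresh: it adapts the Chan--Skrepetos grid-BFS by reducing each frontier-advance step to a red-blue intersection problem (Lemma~\ref{lm:red-blue-gen}), and solving \emph{that} requires computing upper envelopes of pseudo-disk boundary arcs. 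Because those arcs are not circular, the envelope complexity is controlled only via Davenport--Schinzel bounds ($\lambda_4(n) = O(n2^{\alpha(n)})$ for $2$-intersecting curve families), giving a running time of $O(n(2^{\alpha(n)}+\alpha(n)\log n))$. None of this follows from a shifted grid plus packing alone; it is a concrete algorithmic step your sketch omits. Separately, your write-up never mentions the dummy-vertex device of \Cref{sec:plusone} (and the accompanying VC bound for the $\mathcal{C}$-dummy graph, \Cref{thm:dummy-graph-VC}), which is what upgrades $+2$ to the claimed $+1$; ``combine as in \Cref{thm:main}'' can be read as implicitly importing it, but the omission compounds the underestimate, since the dummy-node multi-source BFS again rides on the same red-blue subroutine.
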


In this general case, we need an additional component: a single-source shortest path (SSSP) algorithm with $\Tilde{O}(n)$ running time for the intersection graphs of similar-size pseudo-disks with constant complexity.  
SSSP algorithms with running time $\Tilde{O}(n)$ are known for some special cases, including unit-disk graphs~\cite{Efrat2001-hm,Chan2019-vw}, unit $L_p$-disks for $p = 1$ or $p = \infty$~\cite{Klost2023-xs}, and arbitrary disks~\cite{KKSS23}.

\medskip
When the objects have bounded ply (or even $n^{\delta}$-ply for small $\delta$), the intersection graphs have truly-sublinear separators, using the observation by 
de Berg \etal~\cite{BBKMv20} that the intersection graph of fat objects has sublinear clique-based separators. 
(Indeed, the objects in each clique of the clique-based separators are stabbed by a single point.) We use this fact combined with our VC-dimension result for pseudo-disk graphs to prove the following theorem.

\begin{theorem}\label{thm:pseudodisk-diam-exact}  Let $G$ be an unweighted $n$-vertex similar-size pseudo-disk graphs of with constant complexity, and let $k$ be the ply of $G$.  We can compute the exact diameter in $\Tilde{O}(k^{11/9}n^{2-1/18})$ time. 
\end{theorem}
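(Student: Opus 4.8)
The plan is to re-run the algorithm behind \Cref{thm:pseudodisk-diam-addive}, replacing its generic clique-based $r$-clustering by the clique-based \emph{separators} of de~Berg \etal~\cite{BBKMv20} and using the bounded ply to eliminate the single source of additive slack. Recall that the one unit of error in the $+1$-approximation is incurred only because each boundary clique of the clustering is contracted to a representative vertex, and distances to or from a representative agree with distances to or from an arbitrary member of its clique merely up to $\pm 1$. Since similar-size pseudo-disks of constant complexity are fat objects, every induced subgraph of $G$ -- which is again the intersection graph of such objects -- admits a clique-based separator: a family $\mathcal{C}$ of $O(\sqrt{n})$ cliques whose removal leaves components of size at most $2n/3$, where, crucially, each clique of $\mathcal{C}$ is stabbed by a common point. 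When the ply is $k$, each such clique therefore has at most $k$ vertices, so $\bigcup\mathcal{C}$ is a genuine balanced \emph{vertex} separator of size $O(k\sqrt{n})$. Recursing down to pieces of size $O(r)$ gives an ``$r$-division'' analog: $O(n/r)$ pieces, each of size $O(r)$ with a boundary of $O(k\sqrt{r})$ vertices -- not merely cliques -- for a total of $O(kn/\sqrt{r})$ boundary vertices.

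On this decomposition I would run the distance-VC-dimension diameter framework of Ducoffe \etal~\cite{ducoffe2022diameter} and Le--Wulff-Nilsen~\cite{lw2024} essentially verbatim, the only change being that no clique is ever contracted. Given a recursive separator decomposition of a graph whose $k$-hop ball hypergraph (and distance-encoding set system) has bounded VC-dimension, that framework computes all eccentricities \emph{exactly}, and its cost is governed by (i) one run of a near-linear-time SSSP per separator vertex, (ii) within-piece BFS from every vertex, and (iii) a term controlled by the Sauer--Shelah bound on the number of distinct truncated distance profiles with respect to a separator; each of (i)--(iii) is available here because every induced subgraph of $G$ is again a similar-size pseudo-disk graph of VC-dimension $\le 4$ (by the paper's VC-dimension theorem) and admits the $\Tilde{O}(n)$-time SSSP routine. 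Concretely, for a vertex $u$ in a piece $X$ one computes $\ecc(u)=\max\bigl(\max_{v\in X^{+}}d_G(u,v),\ \max_{v\notin X^{+}}\min_{b\in\partial X}(d_G(u,b)+d_G(b,v))\bigr)$, with the second maximum reduced to a small number of queries of the form ``do the balls $\{B_{T-d_G(u,b)}(b)\}_{b\in\partial X}$ cover $V\setminus X^{+}$?''; bucketing vertices by their truncated distance profile to $\partial X$ keeps the number of distinct queries polynomial in the separator size (via VC-dimension $4$) and in $n^{o(1)}$ (via the recursion). The quantitative difference from \Cref{thm:pseudodisk-diam-addive} is that a separator now carries $O(k\sqrt{r})$ vertices instead of $O(\sqrt{r})$ cliques, so the cost terms -- $\Tilde{O}(nr)$ for within-piece BFS, $\Tilde{O}(kn^{2}/\sqrt{r})$ for the SSSP runs, and the query term -- acquire factors of $k$; re-balancing the dominant ones by choosing $r$ against these $k$-dependent quantities, in place of the choice that yields the exponent $1/18$ with no ply assumption, produces $\Tilde{O}(k^{11/9}n^{2-1/18})$.

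The main obstacle is this quantitative re-balancing together with verifying that exactness is truly recovered. For exactness one must check that, because every one of the at most $k$ vertices of each boundary clique is retained (no representative is ever substituted), every shortest path that leaves a piece is witnessed verbatim through some kept boundary vertex, so that combining the augmented-piece distances $d_{G[X^{+}]}(\cdot,\cdot)$ with the global boundary-to-boundary distances in the standard $r$-division fashion yields $d_G$ exactly and the $+1$ disappears. For efficiency the delicate point is that the Sauer--Shelah bound on the number of truncated distance profiles is now in terms of $|\partial X|=O(k\sqrt{r})$ vertices rather than $O(\sqrt{r})$ cliques, so the profile count -- hence the query cost -- picks up several powers of $k$; one must verify that, after re-optimizing $r$, the exponent of $n$ stays $2-1/18$ while the exponent of $k$ comes out to $11/9$, and that the recursion on sub-pieces keeps the profile count at $n^{o(1)}\cdot\mathrm{poly}(k)$ rather than $n^{\Omega(1)}$ -- the point at which a naive single-level argument would fail.
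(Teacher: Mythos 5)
Your high-level plan matches the paper's: use the bounded ply to turn the clique-based separator (each clique stabbed by a point, so at most $k$ vertices) into a genuine balanced vertex separator of size $O(k\sqrt n)$, build an $r$-division from it, and then run the distance-encoding eccentricity algorithm with \emph{all} boundary vertices in place of clique representatives so that the additive slack disappears. That is precisely the route the paper takes. However, two aspects of your write-up do not line up with the paper and leave a genuine gap.

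First, the inner computation you invoke is the Ducoffe--Habib--Viennot ball-coverage framework (truncated distance profiles, ``do the balls cover $V\setminus X^{+}$'' queries). The paper explicitly declines that route, pointing out that the dependency on $k$ in \cite{ducoffe2022diameter} is not explicit and the dependency on the VC-dimension is exponentially diminishing; it instead reuses its own Le--Wulff-Nilsen-style pattern algorithm from Section~3, with the only changes being that $S_R$ is all of $\partial R$ and the ``$+1$'' in $\Delta(u,\mathcal C)$ is dropped. Your appeal to the Ducoffe black box would not deliver the stated exponents without redoing that paper's analysis from scratch, which is a nontrivial step you do not supply. Second, and more concretely, you do not actually perform the ``quantitative re-balancing'' you flag as the main obstacle; you simply assert the answer. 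Worse, the numbers you do fix would not produce the stated bound. You set the boundary per piece to $O(k\sqrt r)$ (the tight Frederickson-style bound), whereas the paper's $r$-division, inherited from its clique-based $r$-clustering with $O(r)$ boundary \emph{cliques} per cluster, has $O(kr)$ boundary vertices per cluster and $O(kn/\sqrt r)$ clusters. Its pattern count is therefore $O((kr\cdot r)^d)=O(k^d r^{2d})$, and balancing $\tilde O(kn^2/\sqrt r)$ against $\tilde O(nk^{d+1}r^{(4d+1)/2})$ with $d=4$ yields $r=(n/k^d)^{1/(2d+1)}$ and the advertised $\tilde O(k^{11/9}n^{2-1/18})$. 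With your $O(k\sqrt r)$ boundary the pattern count is $O((k\sqrt r\cdot r)^d)=O(k^d r^{3d/2})$ and, with $O(n/r)$ pieces, balancing gives $r=(n/k^{d-1})^{2/(3d+1)}$ and a running time of $\tilde O(k^{16/13}n^{2-1/13})$ --- a \emph{different} (in fact uniformly better) bound, not the $\tilde O(k^{11/9}n^{2-1/18})$ you claim. So either carry out the balance with your tighter $r$-division and state the stronger bound (which would still imply the theorem), or adopt the paper's looser $r$-division statistics; as written, the asserted exponents do not follow from the ingredients you list.
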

The running time of \Cref{thm:pseudodisk-diam-exact} is truly subquadratic when $k = O(n^{1/22-\eps})$ for any constant $\eps$, including the special case of $k = O(1)$ as asked in \Cref{quest:pseud-disk-exact-ply}.

\medskip
Next, we showcase another application of our technique in constructing a \emph{distance oracle} for (unweighted) unit-disk graphs. 
The same technique in Chan and Skrepetos~\cite{Chan2019-vw} for the diameter problem mentioned above gives a distance oracle returning a hybrid $(1+\eps, +(4+2\eps))$-approximation of the shortest distance using $O(n\log^3 n)$ space and $O(1)$ query time. 
In the weighted setting, they got a multiplicative $(1+\eps)$-approximation with the same space and query time, improving upon an earlier result~\cite{Gao2005-kx}. 
Mark de Berg~\cite{deBerg23} considered the \emph{transmission graph} where each point has a transmission radius and can reach any vertex within the transmission radius. 
On this graph (which by definition is directed and unweighted), de Berg presented a distance oracle of size $\tilde{O}(n^{3/2}/\eps)$ that can answer approximate distance queries with a hybrid $(1+\eps,+1)$-approximation in time $\tilde{O}(n^{1/2}/\eps)$. 
The question is: can we develop a distance oracle with truly-subquadratic space and \emph{constant query time}, returning a purely additive approximation of shortest distances? 
We use the same technique developed for the diameter problem to answer this question positively.

\begin{theorem}\label{thm:main-oracle} Given an unweighted unit-disk graph with $n$ vertices, we can construct a distance oracle with $O(n^{2-1/18})$ space and $O(1)$ query time, returning a $+1$-approximation of the true distances. 
\end{theorem}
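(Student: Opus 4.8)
The plan is to compile the two ingredients behind \Cref{thm:main} --- the clique-based $r$-clustering and the VC-dimension~$4$ bound --- into a static data structure. Fix a parameter $r$, and compute the clique-based $r$-clustering of the given unit-disk graph $G$: a partition of $V(G)$ into clusters $C_1,\dots,C_t$ of size $O(r)$, where each $C_i$ comes with a family $\mathcal{K}_i$ of cliques covering its boundary vertices, every path that leaves $C_i$ meets $\bigcup\mathcal{K}_i$, and the total number $m$ of distinct boundary cliques is $\tilde O(n/\sqrt r)$ (the clique analogue of the per-piece boundary bound of an $r$-division). The point, already implicit in the diameter argument, is that a boundary clique $K$ has hop-diameter one, so rerouting a path through $K$ costs at most one extra edge: for all $u,v$ and every boundary clique $K$ we have $d_G(u,K)+d_G(K,v)\ge d_G(u,v)-1$, and moreover $d_G(u,K)+d_G(K,v)\le d_G(u,v)$ whenever some shortest $u$--$v$ path meets $K$. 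Hence, writing $C_i$ for the cluster of $u$, the quantity $1+\min_{K\in\mathcal{K}_i}\bigl(d_G(u,K)+d_G(K,v)\bigr)$ always lies in $\bigl[\,d_G(u,v),\ d_G(u,v)+1\,\bigr]$ as soon as some shortest $u$--$v$ path leaves $C_i$ --- which is automatic when $v\notin C_i$. When $u$ and $v$ share a cluster we also keep the induced-subgraph distance $d_{G[C_i]}(u,v)$, which is exact if a shortest path stays inside $C_i$ and is otherwise covered by the previous quantity; so the smaller of the two is a valid $+1$ estimate in every case, and neither quantity ever underestimates.

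\emph{Preprocessing and storage.} For each of the $m$ boundary cliques $K$, run the near-linear single-source shortest-path algorithm for unit-disk graphs~\cite{Efrat2001-hm,Chan2019-vw} from a super-source joined to all of $K$, recording $d_G(\cdot,K)$ on all of $V(G)$; this costs $\tilde O(mn)=\tilde O(n^2/\sqrt r)$ time. For each cluster $C_i$, compute all pairwise distances $d_{G[C_i]}(\cdot,\cdot)$ by the same algorithm applied to the $O(r)$-disk subgraph $G[C_i]$, which is itself a unit-disk graph, in $\tilde O\!\bigl(\sum_i|C_i|^2\bigr)=\tilde O(rn)$ time and $O(rn)$ space. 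From these tables alone a query is answerable in $\tilde O(\sqrt r)$ time by scanning $\mathcal{K}_i$; the rest of the argument pushes this to $O(1)$.

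\emph{Constant query time via VC-dimension.} For a fixed cluster $C_i$, the only data a vertex $v$ must expose to the estimator is the distance vector $\bigl(d_G(v,K)\bigr)_{K\in\mathcal{K}_i}$ --- precisely a distance encoding vector with respect to $\mathcal{K}_i$. By the VC-dimension~$4$ bound for pseudo-disk graphs, Sauer--Shelah caps the number of \emph{distinct} (bounded-complexity) such encodings by a fixed polynomial in $|\mathcal{K}_i|$, independent of $n$ and of the diameter. We therefore store, for every vertex $v$ and every cluster $C_i$, the class index of $v$'s encoding relative to $\mathcal{K}_i$ (a table of size $O(n\cdot t)=O(n^2/r)$), and for every cluster vertex $u\in C_i$ and every class $\sigma$ we pretabulate $1+\min_{K\in\mathcal{K}_i}\bigl(d_G(u,K)+e_K(\sigma)\bigr)$, where $e_K(\sigma)$ is the $K$-th distance recovered from $\sigma$. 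A query $(u,v)$ then performs $O(1)$ lookups: the cluster of $u$, the class of $v$ relative to that cluster, the corresponding pretabulated minimum, and --- if $u$ and $v$ share a cluster --- one entry of the within-cluster table, returning the smaller. The total space is the sum of $O(rn)$, $O(n^2/r)$, and $\sum_i|C_i|\cdot\mathrm{poly}(|\mathcal{K}_i|)$ (together with $\tilde O(n^2/\sqrt r)$ if the clique-distance tables are retained), and optimizing $r$ against these terms yields $O(n^{2-1/18})$.

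The main obstacle is exactly this compression. Achieving a genuine $O(1)$ query --- not $\mathrm{polylog}$ or $\tilde O(\sqrt r)$ --- forces the per-class pretabulation above, which is affordable only because the VC-dimension is the \emph{constant} $4$, so the number of classes is a fixed polynomial in $|\mathcal{K}_i|$ rather than a quantity that grows with the diameter; identifying the correct bounded-complexity distance encoding (the one already used in the proof of \Cref{thm:main}) and pinning down the degree of its Sauer--Shelah bound is the technical heart, and that degree together with the $n/\sqrt r$ bound on the number of boundary cliques is what forces the particular exponent $1/18$. The remaining points are routine: checking the $+1$ (and not $+2$) guarantee through every branch of the estimator, as sketched above; returning $\infty$ on disconnected inputs (a cluster that is a whole component has $\mathcal{K}_i=\emptyset$, which the formulas already cover); and observing that preprocessing runs in truly subquadratic time, being dominated by the $m$ near-linear shortest-path computations.
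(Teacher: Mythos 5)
Your blueprint is essentially the paper's: build the clique-based $r$-clustering from Lemma~\ref{lm:clique-division}, compress per-cluster distance information through the VC-dimension~$4$ bound of \Cref{thm:LW-dim} (via \Cref{lm:pattern-bound-undir}), and pretabulate the pattern-based estimates so a query is $O(1)$ lookups. Your way of getting $+1$ rather than $+2$ --- estimating $d_G(u,v)$ by $1+\min_{K}\bigl(d_G(u,K)+d_G(K,v)\bigr)$ where $d_G(\cdot,K)$ is the distance to the nearest vertex of the clique --- is mathematically identical to what the paper does in \Cref{sec:plusone} with a dummy star-center $s_C$ and half-weight spokes: $d(u,s_C)+d(s_C,v)=d_G(u,C)+d_G(C,v)+1$. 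So the two formulations are interchangeable, and this part of your argument is correct.

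However, you have mis-stated the structure of the clique-based $r$-clustering in a way that propagates through your parameter choices, and you should repair it before trusting the exponent. In \Cref{def:clique-r-division-2nd}, the clusters $\mathcal{R}$ do \emph{not} form a partition of $V(G)$, and a cluster $R$ is not of size $O(r)$ --- only its interior $R^{\circ}$ is bounded by $r$, while $R$ itself (which includes entire boundary cliques) can have $\Omega(n)$ vertices. Also, the bound on the number of boundary cliques per cluster is $|\mathcal{C}(\bdry R)|\le r$, not $O(\sqrt r)$ (this is a genuine divergence from planar $r$-division), and the number of clusters is $|\mathcal{R}|=O(n/\sqrt r)$, not $O(n/r)$. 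Consequently, your per-vertex-per-cluster class-index table is $O(n\cdot|\mathcal{R}|)=O(n^2/\sqrt r)$, not the $O(n^2/r)$ you wrote, and the scan-based fallback query would take $O(r)$ rather than $O(\sqrt r)$. None of this breaks the construction --- the paper's \Cref{sec:oracle-UDG} computes exactly $O(nr^{2d+1/2}+n^2/\sqrt r)$ space with $d=4$ and optimizes at $r=n^{1/9}$ to get $O(n^{2-1/18})$ --- but you should re-derive your space bounds using $|R^{\circ}|\le r$, $|\mathcal{C}(\bdry R)|\le r$, and $\sum_{R}|\mathcal{C}(\bdry R)|=O(n/\sqrt r)$ rather than the planar-style parameters you assumed.

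One more small precision issue: the quantity bounded by a polynomial in $r$ is the number of distinct \emph{pattern} vectors $\patt_v[i]=d_G(v,s_i)-d_G(v,s_0)$ (relative to a fixed reference $s_0\in S_R$), not the raw distance vectors $(d_G(v,K))_K$, which can take $\Omega(n)$ distinct values. Your pretabulated minimum therefore has to be expressed in the pattern coordinates and corrected at query time by the stored normalization $d_G(v,s_0)$ (this is exactly the $d_G(u,s_0)$ term in the paper's \Cref{eq:approx-dist} and in Step~2(b) of \Cref{sec:oracle-UDG}). You gesture at this but do not carry it through; once you do, the query is still $O(1)$ and the space bookkeeping is unchanged at the orders of magnitude above.
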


\Cref{thm:main-oracle} extends to pseudo-disk graphs as well. 

\begin{theorem}\label{thm:pesodo-dks-oracle}  
Given an unweighted $n$-vertex similar-size pseudo-disk graph with constant complexity, we can construct a distance oracle with $O(n^{2-1/18})$ space and $O(1)$ query time, returning a $+1$-approximation of the true distances. 
\end{theorem}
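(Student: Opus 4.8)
The plan is to reuse, essentially verbatim, the distance-oracle construction behind \Cref{thm:main-oracle} for unit-disk graphs, and to substitute for each unit-disk-specific step the corresponding ingredient for similar-size pseudo-disks of constant complexity --- mirroring exactly how \Cref{thm:pseudodisk-diam-addive} is obtained from \Cref{thm:main}. Concretely, the unit-disk oracle rests on three geometric inputs: (i) a clique-based $r$-clustering, which plays the role of an $r$-division; (ii) the bound of $4$ on the graph VC-dimension of $k$-hop balls and of distance encoding vectors; and (iii) a single-source shortest-path routine running in $\tilde O(n)$ time. The first step is to observe that (ii) already holds for pseudo-disk graphs in general, since that is exactly the generality in which the VC-dimension bound is proved in this paper, so nothing needs to change there.

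For (i), the clique-based $r$-clustering for similar-size pseudo-disks is the same object constructed in the proof of \Cref{thm:pseudodisk-diam-addive}: impose a grid whose cells have diameter smaller than the common inner radius $r$, so that any two objects with centers in a single cell intersect and hence the objects in each cell form a clique; fatness together with the similar-size condition guarantees that each such clique shares an edge only with objects in $O(1)$ nearby cells, which is what lets the clique cover be turned into an $r$-clustering with the same boundary-size guarantees as in the disk case. The genuinely new ingredient is (iii): a $\tilde O(n)$-time SSSP algorithm on intersection graphs of similar-size pseudo-disks of constant complexity, which is precisely the extra component the paper already develops for \Cref{thm:pseudodisk-diam-addive}. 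At a high level it is a grid-based BFS in which, the first time a vertex of a cell is dequeued, the entire clique of that cell is settled at the next layer, and the only nontrivial work is detecting the $O(1)$ cross-cell adjacencies per cell, each of which takes $O(1)$ time by the constant-complexity assumption; this touches every cell $O(1)$ times and runs in $\tilde O(n)$ overall.

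With (i)--(iii) in hand, the oracle construction and its analysis go through unchanged. The space and query-time bounds depend only on the combinatorial parameters of the clustering and the VC-dimension --- the number $O(n/r)$ of clusters, the boundary size per cluster, the number of distinct distance profiles (bounded through VC-dimension $4$), and the resulting tradeoff optimized at the exponent $2-\frac{1}{18}$ --- and none of these depends on the objects being genuine disks. The $+1$ additive error is also preserved: it arises solely from answering a query about a vertex $v$ through the representative of $v$'s clique, which is at hop distance at most $1$ from $v$, a property of cliques that has nothing to do with the underlying geometry.

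The main obstacle is ingredient (iii). For unit disks the near-linear SSSP exploits explicit Euclidean distance thresholds together with additively-weighted Voronoi and grid tricks; under only ``fat, constant complexity, similar size'' one must verify that the neighbor-finding step of the BFS can still be implemented in amortized $O(1)$ time per object, so that the clique-by-clique sweep does not blow up, and that configurations in which many cliques meet cannot force a superlinear number of cross-cell edge tests. A secondary point to check carefully is that the boundary-size bound of the clique-based $r$-clustering --- the quantity that, combined with the VC-dimension bound, pins down the $\frac{1}{18}$ in the exponent --- really does come out the same for pseudo-disks, with no extra shape-dependent factor; this should follow from fatness, but it is where the proof needs to be explicit.
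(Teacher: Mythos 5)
Your high-level plan matches the paper's proof exactly: the oracle for pseudo-disks is obtained from the unit-disk oracle by substituting the pseudo-disk versions of the clique-based $r$-clustering, the VC-dimension bound, and near-linear SSSP, and the paper indeed assembles these ingredients (via Lemma~\ref{lm:rclustering-pseudo-disk}, Theorems~\ref{thm:distance-VC} and \ref{thm:LW-dim}, and Theorem~\ref{thm:SSSP}) with the combinatorial analysis carrying over unchanged.

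However, there is a genuine gap in your justification of the $+1$ bound. You write that the $+1$ error ``arises solely from answering a query about a vertex $v$ through the representative of $v$'s clique, which is at hop distance at most $1$ from $v$.'' That reasoning in fact yields a $+2$ error, not $+1$: a shortest $u$--$v$ path that crosses a boundary clique through a vertex $x$ is replaced by a route through the clique's representative $s$, and $d(u,s)+d(s,v)\le (d(u,x)+1)+(d(x,v)+1)=d(u,v)+2$. This is precisely the bound proved in Lemma~\ref{lm:dist-via-pattern} and is why Sections~\ref{sec:diameter-UDG} and~\ref{sec:oracle-UDG} only claim $+2$. The improvement to $+1$ in Theorems~\ref{thm:main-oracle} and~\ref{thm:pesodo-dks-oracle} requires the dummy-vertex modification of Section~\ref{sec:plusone}: replace each clique $C$ by a star on a dummy center $s_C$ with edges of weight $1/2$, so that detouring through $s_C$ costs only $1/2$ on each side. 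This in turn requires two supporting facts you do not mention: that the VC-dimension bound continues to hold for the resulting half-weighted ``dummy'' graph (Theorem~\ref{thm:dummy-graph-VC}), and that SSSP can be run from dummy sources (via multi-source BFS from the clique vertices plus a $1/2$ offset). Without these, your construction only certifies $+2$.

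Two further, smaller imprecisions are worth flagging because they are exactly the places the paper warns are delicate. First, you describe the clique-based $r$-clustering as coming directly from a grid-based clique cover plus fatness; but as the paper explains in Section~\ref{subsec:ideas}, a single clique cover does not control the number of boundary cliques per cluster, and the actual construction recurses on \emph{well-separated} clique-based separators (Definition~\ref{def:well-sep}, Lemma~\ref{lm:disk-well-sep}) precisely to obtain the bound $|\mathcal{C}(\bdry R)|=O(r)$ and $\sum_R |\mathcal{C}(\bdry R)|=O(n/\sqrt r)$. Second, for SSSP you claim each cross-cell adjacency check costs $O(1)$ by constant complexity; in fact a pair of adjacent cells may each contain $\Theta(n)$ objects, and the detection is done via a red-blue intersection routine using upper envelopes and Davenport--Schinzel bounds (Lemma~\ref{lm:red-blue-gen}), giving $\tilde O(n)$ total only because each cell is visited $O(1)$ times. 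Your final conclusion about running time is correct, but the per-cell $O(1)$ claim as stated is false.
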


\subsection{Technical Ideas}\label{subsec:ideas}

Our technique is inspired directly by recent developments in computing exact diameters for minor-free graphs~\cite{ducoffe2022diameter,lw2024} that combine two well-known tools in geometric algorithms: \emph{VC-dimension} and \emph{$r$-division}.   An $r$-division is a decomposition of the graph into $\Theta(n/r)$ pieces, each with $O(r)$ vertices and $O(\sqrt{r})$ boundary vertices that are incident to other pieces. The result by Chepoi, Estellon, and Vax\`{e}s~\cite{Chepoi2007-ar} showed that the set of all $k$-neighborhood balls in a $K_h$-minor-free graph, when treated as a set system over the vertices, has VC-dimension at most $h-1$. Ducoffe \etal~\cite{ducoffe2022diameter} was the first to combine the VC-dimension result~\cite{Chepoi2007-ar} and $r$-division to design truly-subquadratic time algorithm for minor-free graphs. 
Le and Wulff-Nilsen~\cite{lw2024} designed 
a different VC set system based on that of Li and Parter~\cite{Li2019-li}, which is easier to combine with $r$-division. 
They obtained, among other things, an improved algorithm for computing exact diameter in minor-free graphs. 

\bigskip
\noindent We follow a path similar to the one taken for minor-free graphs~\cite{lw2024} to design an algorithm for unit-disk graphs.  To carry out this plan, we have to develop the two corresponding technical components in the geometric setting: An appropriate VC set system and an $r$-division for unit-disk graphs. 
There are two main challenges.
The first challenge is that
while the definitions of the VC set systems proposed in~\cite{Li2019-li,lw2024} are naturally applicable to any graphs, their proof technique heavily depends on graphs being minor-free (by building a minor directly from a system of high VC-dimension), and in some case involves tedious case analysis.
Our proof for unit disks only relies on their topological property of being \emph{pseudo-disks}.
The second challenge is rooted from the reality that $r$-division does not exist for unit-disk graphs. 
Here we introduced a new notion called \emph{clique-based $r$-clustering}, which allows cliques to be on the boundary of each region (called \emph{cluster} in our terminology). 
Our notion of clique-based $r$-clustering is inspired by clique-based balanced separators for geometric intersection graphs~\cite{BBKMv20,BKMT23}. 
However, formulating the right definition for clique-based $r$-clustering ends up to be delicate and challenging; the paragraphs after Remark~\ref{rm:DKP23} explain why several na\"ive approaches do not work, and our eventual solution.
We now elaborate on the two main technical components in more details.

\paragraph{Various definitions of VC-dimensions on graphs.}
In computational geometry literature, VC-dimension has been used to characterize the complexity and ``richness'' of geometric shapes~\cite{cw-qorss-1989}. 
The VC-dimension for unit-disks (as well as disks of all possible radii) is $3$ --- no four points can be shattered by disks in the plane~\cite{Matousek1990-gd}.
We remark that in prior work the \emph{VC-dimension of a graph} is defined on the set system of the closed \emph{immediate} neighborhoods, i.e., for each vertex $v$, the set of vertices including $v$ and its one-hop neighbors~\cite{Haussler1986-rr,Alon2006-wd,Bousquet15identifying}. In this definition, the VC-dimension of a unit-disk graph is $3$ as well~\cite{Bousquet15identifying}. 

Here we study the VC-dimension of two set systems: 
(1) the set of balls in the geometric intersection graph with radius $r$ ranges over all possible non-negative integers---this is referred to as the VC-dimension of the \emph{ball hypergraph} of $G$, also called the \emph{distance VC-dimension} of $G$~\cite{Chepoi2007-ar,Bousquet2015-pv,ducoffe2022diameter};
(2) the distance encoding vectors as defined in~\cite{lw2024} in a unit-disk graph with respect to a set $S$ of $k$ vertices. 
For both cases we show that the VC-dimension is exactly $4$ (not $3$) --- and we have an example of $4$ points that are shattered. 
In fact, we present a proof that is purely topological and thus can be generalized to the intersection graphs of pseudo-disks --- topological disks in the plane bounded by Jordan curves such that the boundaries of any two objects have at most two intersection points.  The pseudo-disk requirement is actually crucial and cannot be dropped. 
For example, we can construct $n$ unit-size equilateral triangles (possibly with rotations) with VC-dimension $\Omega(\log n)$ by modifying the fine-grained hardness construction in Bringmann \etal~\cite[Theorem 17]{Bringmann2022-me}.

\begin{remark}
\label{rm:DKP23} 
In~\cite{Abu-Affash2021-vs}
it is shown that unit-disk graph has distance VC-dimension $4$. As we completed our first technical component --- the VC-dimension results for pseudo-disk graphs --- we discovered an independent work posted on arXiv by Duraj, Konieczny and Pot\c{e}pa~\cite{duraj2023better}.  They showed that  geometric intersection graphs of objects that are closed, bounded, convex, and center symmetric has distance VC-dimension at most $4$, which is a subset of our result.  
Both of their proof techniques rely on geometry.  Our proof approach is purely topological, thus applies for general pseudo-disk graphs and distance encoding vectors. 

Duraj, Konieczny and Pot\c{e}pa~\cite{duraj2023better} combined their distance VC dimension bound with an (improved) argument along the lines of  Ducoffe, Habib, and Viennot~\cite{ducoffe2022diameter} to design truly-subquadratic time algorithms for intersection graphs of unit squares and translations of convex polygons with center of symmetry when the diameter is small. 
However, as noted above, it remains an open problem if the same result could be obtained for unit-disk graphs even when the diameter is small --- one missing element is a data structure that can efficiently build the $r$-neighborhood with increasing $r$. 
It is unclear if such a data structure could be constructed for unit-disk graphs. 
\end{remark}

\paragraph{Clique-based \boldmath{$r$}-clustering.}
As we mentioned above, $r$-division does not exist in unit-disk graphs. Here we develop an analogous clique-based $r$-clustering.  
A \EMPH{$\delta$-balanced clique-based separator} of a geometric intersection graph $G$~\cite{BBKMv20,BKMT23} is a collection $\mathcal{C}$ of vertex-disjoint cliques whose removal will partition the graph into two parts of size at most $\delta n$, with no edges between the parts. The \EMPH{clique size} of $\mathcal{C}$ is the number of cliques in $\mathcal{C}$, and the \EMPH{vertex size} of $\mathcal{C}$ is the total number of vertices in all cliques in $\mathcal{C}$.

As alluded to earlier, the definition and construction of the clique-based analog of $r$-division requires handling several subtleties. 
To explain these subtleties, we will suggest some natural ideas and discuss why these ideas do not work. 
\begin{itemize}
\item \emph{First attempt:}
Let $D$ be the input set of $n$ disks, whose intersection graph is $G$.  We could apply the clique-based  separator~\cite{BBKMv20,deBerg23} to find a set of $\sqrt{n}$ cliques $\mathcal{S}$ such that $D\setminus \mathcal{S}$ could be partitioned into sets $\{D_1,D_2,\ldots\}$ such that each set of disk $D_i$ has size at most $2n/3$ and induces a maximally connected intersection graph.  We call cliques in $\mathcal{S}$ \EMPH{boundary cliques} and disks in $\mathcal{S}$ \EMPH{boundary disks}.  Ideally, we want to recursively apply the clique-based separators to each set $D_i$ until we obtain the set of clusters $\mathcal{R}$ of size at most $r$ each. The issue here is that the number of boundary cliques adjacent to each region in $\mathcal{R}$ could be arbitrarily large, up to $\Omega(\sqrt{n})$.  Note that we want each set to have only $O(r)$ boundary cliques in the same way that $r$-division guarantees each region to have $O(r)$ boundary vertices. 

\item \emph{Second attempt:}
Instead of separating each $D_i$ directly, we could add the boundary disks in $\mathcal{S}$ back to $D_i$, and then recursively apply the clique-based separator theorem on each resulting $D_i$, as done in algorithms for constructing an $r$-division of planar and minor-free graphs~\cite{Frederickson1987,Wulff-Nilsen11}. There are several issues, and one of them is running time. Specifically, $\mathcal{S}$ could contain up to $\Omega(n)$ disks, and by reinserting the boundary disks across different $D_i$, the number of disks (counted with multiplicity) might be more than $n$, and hence the total number of disks arising over the course of the entire recursion could be up to $\Omega(n^2)$. 

\item \emph{Third attempt:}
One way to avoid adding too many boundary disks to $D_i$ is to add only \emph{one} boundary disk per clique in $\mathcal{S}$. 
Specifically, for each clique in $\mathcal{S}$, we choose a disk in the clique to be its \emph{representative}. 
Next, we add the representative of each clique to $D_i$, if the clique intersects at least one disk in $D_i$.  We then recursively apply the clique-based separator to the resulting set of disks. 
Here the total number of disks, counted with multiplicity, is $n + O(\sqrt{n})$ at the second level, and $O(n)$ over all levels. 

However, there is another technical issue with using representative disks of cliques in $\mathcal{S}$.  
Suppose that we apply the clique-based separator to $D_i$ (after adding the representative boundary disks) to find a clique-based separator $\mathcal{S}_i$. Removing $\mathcal{S}_i$  partitions $D_i$ into two balanced sets of disks  $X_1$ and $X_2$. 
There could be a representative disk $x \in D_i$ of a clique in $\mathcal{S}$ that is assigned to $X_1$ and not to $X_2$.  Yet, the clique represented by $x$ might contain a disk (other than $x$) that intersects disks in $X_2$. 
As $x$ is not in $X_2$, the algorithm does not correctly capture the boundary disks of $X_2$, and hence, when the algorithm terminates, the number of boundary cliques of each region could still be $\Omega(\sqrt{n})$.  
\end{itemize}

We ended up with the following (rather delicate) definition of a clique-based $r$-clustering.

\begin{definition}[Clique-based $r$-clustering]
\label{def:clique-r-division-2nd} Let $r\geq 1$ be a parameter. 
A \EMPH{clique-based $r$-clustering} of a geometric intersection graph $G$ is a pair $(\mathcal{R}, \mathcal{C})$ where $\mathcal{R}$ contains subsets of $V(G)$ called \EMPH{clusters}, 
and $\mathcal{C}$ is a set of vertex-disjoint cliques of $G$ such that: 
\begin{enumerate}
    \item Every set $R\in \mathcal{R}$ induces a connected subgraph of $G$.  Furthermore, $|\mathcal{R}| = O(n/\sqrt{r})$. 
    
    \item Every cluster $R\in \mathcal{R}$ can be partitioned into two parts, \EMPH{boundary $\bdry R$} and \EMPH{interior $R^{\circ}$}, such that all vertices in $R$ having neighbors outside $R$ belongs to $\bdry R$, and furthermore, (i) $R^{\circ}$ has at most $r$ vertices and  (ii) $\bdry R$ contains at most $r$ cliques in $\mathcal{C}$, denoted by \EMPH{$\mathcal{C}(\bdry R)$}.  
    
    \item  $\sum_{R\in \mathcal{R}}|\mathcal{C}(\bdry R)| = O(n/\sqrt{r})$. This in particular implies that $|\mathcal{C}| = O(n/\sqrt{r})$.
    \item  Every vertex of $G$ either belongs to a clique in $\mathcal{C}$ or  to $R^{\circ}$ for some cluster  $R\in \mathcal{R}$.
\end{enumerate}
\end{definition}

There are several differences between our clique-based $r$-clustering and an $r$-division in planar graph literature~\cite{Frederickson1987}. First of all, in our clique-based $r$-clustering we can no longer guarantee that each cluster $R\in \mathcal{R}$ has size at most $r$; we can only guarantee that its internal part $R^\circ$ has size at most $r$. 
Indeed, the size of $R$ could be $\Omega(n)$, thus computing an explicit representation of $\mathcal{R}$ could take $\Omega(n^2/\sqrt{r})$ time; thus, we only compute an implicit representation of $\mathcal{R}$. 
Second, the fact that $R$ could have size $\Omega(n)$ makes other algorithms relying on clique-based $r$-clustering more challenging:  we cannot go through every vertex of $R$ to do the computation in the way other planar algorithms do.  
Third, the number of cliques in the boundary of $R$ is $O(r)$ in the clique-based $r$-clustering, instead of $O(\sqrt{r})$ in a standard planar $r$-division.
Last but not least, we cannot simply compute a clique-based $r$-clustering from a balanced clique-based separator. 
Instead, we have to rely on a different kind of separator, called a \emph{well-separated clique-based separator}.  
The basic idea is that we can find a balanced clique-based separator such that the remaining disks can be partitioned into two sets that are far from each other relative to the radii of the disks. We defer the details to \Cref{sec:clique-division}.
We show that well-separated clique-based separators exist for unit-disk graphs or fat pseudo-disks of roughly the same size.

Now, we state our algorithm for computing a clique-based $r$-clustering. We will compute an \EMPH{implicit representation} of $\mathcal{R}$: 
for each clique in $\mathcal{C}$, we will choose an arbitrary vertex to be the \EMPH{representative} of the clique, and for each cluster $R\in \mathcal{R}$, we explicitly store vertices in $R^{\circ}$ and all representatives of the cliques in $\mathcal{C}(\bdry R)$, denoted by \EMPH{$\rep(R)$}. 
Furthermore, for each vertex $u\in R^{\circ}$, we will maintain a list of representatives $x$ by which $u$ has a neighbor in the clique represented by $x$. 

 \begin{lemma}
 \label{lm:clique-division} 
     For any given integer $r$ and an $n$-vertex unit-disk graph $G$, we can find the implicit representation of a clique-based $r$-clustering $(\mathcal{R},\mathcal{C})$ of $G$ in $O(n \log^2 n)$ time.
 \end{lemma}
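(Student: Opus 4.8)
The plan is to construct the pair $(\mathcal{R},\mathcal{C})$ by a top-down recursive decomposition in the spirit of Frederickson's $r$-division for planar graphs, but driven by the \emph{well-separated clique-based separators} whose existence and $\tilde{O}(m)$-time construction on an $m$-vertex unit-disk graph is established in \Cref{sec:clique-division}. The two features of such a separator that make the construction go through --- and that are exactly what the three naive attempts above lacked --- are: (a) when it removes a family of vertex-disjoint cliques and splits the remaining disk set $D'$ into $X_1\sqcup X_2$, the two parts lie at distance $\Omega(1)$ apart relative to the unit radius; and (b) each removed clique is a single clique of $G$, so a dense clump of disks gets absorbed into $\mathcal{C}$ as one clique rather than being cut. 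First I would compute the connected components of $G$ using the $\tilde{O}(n)$-time single-source shortest path (BFS) algorithm for unit-disk graphs, and process each component independently. We then run a recursion on subproblems $\textsc{Decompose}(D',\mathcal{B})$, where $D'$ induces a connected subgraph and $\mathcal{B}$ is a set of vertex-disjoint cliques of $\mathcal{C}$ --- the \emph{inherited boundary cliques} --- each stored only through its representative disk, maintaining the invariant that no disk of $D'$ has a neighbor outside $D'\cup\bigcup\mathcal{B}$ and every clique of $\mathcal{B}$ is adjacent to $D'$.

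The recursion has two phases. In the first phase we balance on the number of disks: if $|D'|\le r$ we stop; otherwise we apply the well-separated balanced clique-based separator to $D'$ to obtain $O(\sqrt{|D'|})$ vertex-disjoint cliques $\mathcal{S}'$ with $D'\setminus\bigcup\mathcal{S}' = X_1\sqcup X_2$, $|X_i|\le\tfrac23|D'|$, and $\mathrm{dist}(X_1,X_2)=\Omega(1)$; we add the cliques of $\mathcal{S}'$ to $\mathcal{C}$ and recurse on $\textsc{Decompose}(Y,\mathcal{B}_Y)$ for each connected piece $Y$ of $X_1\cup X_2$, where $\mathcal{B}_Y$ collects the cliques of $\mathcal{B}\cup\mathcal{S}'$ adjacent to $Y$. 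The crucial point is that every clique of $\mathcal{B}$ has diameter $O(1)$, hence it cannot be adjacent to both $X_1$ and $X_2$ (which are $\Omega(1)$ apart), so each inherited boundary clique is handed to exactly one side and never duplicated --- precisely the failure mode of the ``third attempt''. In the second phase, a Phase-1 leaf $D'$ (with $|D'|\le r$) that still carries $|\mathcal{B}|>r$ inherited cliques is further split using a weighted variant of the separator that roughly halves the number of inherited cliques on each side (distributing each clique of $\mathcal{B}$ to the single side it touches, again well-defined by the $\Omega(1)$ gap), until every piece carries at most $r$ of them. Each surviving piece becomes a cluster $R$ with $R^\circ:=D'$, $\bdry R:=\bigcup\mathcal{B}$, and $\mathcal{C}(\bdry R):=\mathcal{B}$; the list $\rep(R)$, and for each $u\in R^\circ$ the list of representatives whose clique contains a neighbor of $u$, are recorded at the moment a clique is declared a boundary clique of $R$, using the unit-disk range-searching structure to find the disks of $R^\circ$ within distance $2$ of that clique.

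Conditions~1, 2, and~4 of \Cref{def:clique-r-division-2nd} are immediate from the stopping rules and from recursing only on connected pieces. For Condition~3 and the bound $|\mathcal{R}|=O(n/\sqrt r)$, the heart of the matter is that well-separatedness forces every clique $C\in\mathcal{C}$ to be adjacent to only $O(1)$ clusters: the disks adjacent to $C$ all lie in an $O(1)$-diameter ball, so once the recursion has descended to the scale at which splits are $\Omega(1)$ apart they can never be separated again and collapse into a single descendant leaf on each side of the split that created $C$; hence $\sum_{R}|\mathcal{C}(\bdry R)|=O(|\mathcal{C}|)$. A geometric-series count over the first phase --- a piece of size $m$ contributes $O(\sqrt m)$ cliques, the pieces of size $\approx m$ at one recursion level hold $O(n)$ disks in total and thus contribute $O(n/\sqrt m)$ cliques, summed over $m$ from $n$ down to $r$ --- yields $|\mathcal{C}|=O(n/\sqrt r)$, while the second phase adds only $O\big(\sum_R |\mathcal{B}_R|/r\big)=O(n/r^{3/2})$ extra clusters. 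Finally, since every disk lies in one piece per recursion level and every clique is passed to only $O(1)$ pieces per level, the total size of all subproblems at a fixed depth is $O(n)$, so running the $\tilde{O}(\cdot)$-time separator and the adjacency bookkeeping costs $\tilde{O}(n)$ per level; with $O(\log n)$ levels across the two phases this is $O(n\log^2 n)$ overall.

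The step I expect to be the main obstacle is not this bookkeeping but the well-separated clique-based separator itself: designing a balanced clique-based separator whose two sides are additionally $\Omega(1)$ apart, that absorbs dense clumps into single cliques, and that can be computed in near-linear time --- together with the amortized claim that no clique ends up in the boundary of more than $O(1)$ clusters. This geometric locality --- a clique is tiny relative to the inter-part gap, so it attaches to exactly one part --- is what must substitute for the small-separator property that planar $r$-divisions rely on, and it is exactly what \Cref{sec:clique-division} is devoted to.
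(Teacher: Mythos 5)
Your high-level plan --- recurse with a well-separated clique-based separator and track boundary cliques per piece --- is the same in spirit as the paper's, and you correctly identify the well-separated separator as the key new ingredient. But the recursion itself contains a concrete error at exactly the point where the paper's construction is subtle.

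You remove each separator clique entirely from the recursion and keep it only as a symbolic entry of $\mathcal{B}$, relying on the claim that ``every clique of $\mathcal{B}$ has diameter $O(1)$, hence it cannot be adjacent to both $X_1$ and $X_2$.'' This is false for the $>\!2$ gap that the well-separated separator actually provides. A clique $C$ stabbed at a point $p$ has disk centers within distance $1$ of $p$; a disk adjacent to $C$ has its center within distance $3$ of $p$. Two such neighbors $y_1,y_2$ can therefore have $\Edist{y_1}{y_2}$ as large as $4$, so they may perfectly well lie in the two $(>\!2)$-separated sides $X_1,X_2$ of a later split. (Concretely: $C$ with disks centered at $(\pm 0.5,0)$, $y_1$ at $(-2.4,0)\in X_1$, $y_2$ at $(2.4,0)\in X_2$; then $\Edist{y_1}{y_2}=2.8>2$ yet both touch $C$.) So a clique can be ``handed to'' both sides, cliques can be duplicated across the recursion, and both your accounting $\sum_R |\mathcal{C}(\bdry R)|=O(|\mathcal{C}|)$ and your time bound fall apart.

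The paper sidesteps exactly this by keeping each clique's \emph{representative disk} as a real vertex of the recursion. Working in the extended intersection graph $\Gamma(\hat D)$ --- where a regular disk $y$ is joined to every representative $x$ whose clique contains a neighbor of $y$ --- the well-separated separator is applied to a set that actually \emph{contains} $x$, and the inequality $\Edist{y}{x}\le 2$ (one disk hop from $y$ into $K(x)$, plus the clique's common stabbing point) is what makes the $>\!2$ gap bite: if $y\in D_1$ then $x$ cannot be in $D_2$, so $x$ lands in $D_1$ or in the new separator, and all of $K(x)$'s neighbors follow $x$. Your argument would instead need $\Edist{y_1}{y_2}\le 2$ with two clique members in between, which is a factor of $2$ worse and not delivered by the construction. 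To make your variant work you would have to strengthen the separation gap to $>\!4$ in \Cref{def:well-sep} (which changes the separator construction), or, more simply, adopt the paper's device of keeping the representatives in $\hat D$ and running the separator on $\Gamma(\hat D)$. With that fix, your Phase-1/Phase-2 split is unnecessary: a single stopping rule on the number of vertices of the spanning tree of $\Gamma$ (which counts both regular and representative disks) bounds $|R^\circ|$ and $|\mathcal{C}(\bdry R)|$ simultaneously, and the usual Frederickson recurrence on disks counted with multiplicity gives the $O(n/\sqrt r)$ bounds and the $O(n\log^2 n)$ running time.
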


\paragraph{Pseudo-disk graphs with constant ply.}  
Ducoffe \etal~\cite{ducoffe2022diameter} showed that if a monotone class of graphs $\mathcal{G}$ has truly-sublinear balanced separators and distance VC-dimension at most $d$, then we can compute the diameter in time $O(n^{2-\eps_{\mathcal{G}}(d)})$ where $\eps_{\mathcal{G}}(d) =  1/2^{O_{\mathcal{G}}(d)}$; the $O_{\mathcal{G}}(\cdot)$ notation hides a dependency on the family $\mathcal{G}$.  
Our results above imply that the family of intersection graphs of similar-size pseudo-disks of constant complexity and ply has truly-sublinear balanced separators and distance VC-dimension at most $d$. 
Thus, we can solve diameter exactly in time $O(n^{2-\eps_{k}(d)})$ where the constant $\eps_k(d)$ depends on the ply $k$ using the algorithm of Ducoffe \etal~\cite{ducoffe2022diameter} as a black box. 
However in their algorithm the dependency on $k$ is not explicitly computed, and furthermore, the dependency on $d$ is exponentially diminishing. 
Instead, we modify our approximation algorithm for unit-disk graphs to obtain a better dependency on $k$ and a smaller constant in the exponent of $n$.  The basic idea is that we could now use $r$-division instead of clique-based $r$-division. We note that $r$-division for low-ply geometric intersection graphs was used earlier to solve different problems by Har-Peled and Quandrud~\cite{HQ17}. 

\paragraph{From \boldmath{$+2$}-approximation to \boldmath{$+1$}-approximation.} 
In the conference version~\cite{Chang2024-ab}, we had a $+2$-approximation for both graph diameter and distance oracles for unit-disk graphs and pseudo-disk graphs. In this version, we have improved the additive constant to $+1$. The main difference is to choose a dummy vertex to represent each clique with edges of weight $1/2$ to other vertices in a clique. This allows for the saving of additive error of $1$ but requires some minor (but unintuitive) edits of the overall arguments. To keep the argument clean, we present all results with $+2$ approximation and in \Cref{sec:plusone} we present the improvement to $+1$. We believe that presenting the +1-approximation of diameter in its full generality, without the +2-appproximation algorithm as an important context, will seriously hinder the readability of this paper.

\subsection{Additional Related Work}\label{sec:related}

\paragraph{{Diameter} in General Graphs.} 
Finding the diameter of a given graph can be easily done by computing all-pairs shortest paths (APSP) in $O(n^3)$ time using the classical Floyd-Warshall algorithm, or in time $O(n^3/2^{\Omega(\log n)^{1/2}})$~\cite{Williams2014-mj} after a long line of improvement of polylogarithmic factor; see \cite{Williams2014-mj} for a historical discussion.  
No truly-subcubic time algorithm is known for either all-pairs shortest paths or for computing the graph diameter. It is also not clear if computing diameter is as hard as APSP\/.  If the edges are unweighted, computing the diameter can be done in time $\tilde{O}(n^{\omega})$ where $\omega$\footnote{The recent bounds are: $\omega<2.37286$~\cite{Alman2021-np}, $\omega<2.371866$~\cite{duan2023faster} and $\omega<2.371552$~\cite{williams2023new}.} is the exponent of the running time of matrix multiplication~\cite{Seidel1995-hn}.

In the sparse setting, when the graph has only a linear number of edges, one can run single-source shortest paths algorithm (SSSP) from each vertex, achieving $O(n^2)$ running time, or even $O(n^2/\log n)$~\cite{Chan2012-gn} by compressing the bits; none of these algorithms are truly subquadratic. 
In fact, assuming strong exponential time hypothesis (SETH)~\cite{Impagliazzo2001-sr}, there is no truly-subquadratic algorithm for computing the diameter of a graph using a reduction from the orthogonal vector problem --- even distinguishing between $2$ and $3$~\cite{Roditty2013-zr}.  For many special graphs including planar graphs and graphs with forbidden minors,
one can find subquadratic algorithms.
We will review these results below.

\paragraph{Planar Graphs.} 
\textsc{Diameter} is first shown, in a breakthrough paper by Cabello~\cite{Cabello2018-gz}, to be solvable for planar graphs within time $\tilde{O}(n^{11/6})$ and later improved to $\tilde{O}(n^{5/3})$ \cite{Gawrychowski2018-zy}. 
Both algorithms use two major elements:  (1) the $r$-division by Frederickson~\cite{Frederickson1987}
and (2) for each vertex $v_0$ and each piece $P$ build an additive Voronoi diagram within $P$ with boundary vertices as sites and each Voronoi cell containing vertices that share the same boundary vertex on their shortest paths to $v_0$.  While the $r$-division can be efficiently computed in time $O(n)$ for planar graphs~\cite{Klein2013-xv}, computing the additive Voronoi diagrams efficiently requires a lot of technicalities.

Li and Parter~\cite{Li2019-li} addressed distributed algorithms for \textsc{Diameter} in planar graphs and avoided using the abstract Voronoi diagrams. 
Instead, they used the approach of \emph{metric compression} --- intuitively, given a sequence of $k$ vertices $S=\Seq{s_1,\dots,s_k}$, for each vertex $v$ define a set of tuples $\{(i, \Delta)\}$ with $\Delta$ being an upper bound on the difference of distances $d(v, s_i)$ and $d(v, s_{i-1})$. These distance vectors encode (approximately) the distance from $v$ to each vertex in $S$.  For \textsc{Diameter}, Li and Parter use $S$ as vertices on a cycle separator of the planar graph. Thus, the distance encoding vectors provide a compression of all shortest path distances from $V$ with the separator $S$. Due to planarity, this set system of distance encoding vectors has VC-dimension at most $3$. Therefore the size of distinct tuples is polynomially bounded in the size of $S$, which is crucial for bounding computation time.

\paragraph{Approximating \textsc{Diameter}.} 
Approximating the diameter in weighted (di)graphs can be done in $\tilde{O}(m)$ time for a $2$-approximation~\cite{HENZINGER19973} or in time $\tilde{O}(m^{3/2})$ for a $(3/2)$-approximation~\cite{Berman2007-rx,Roditty2013-zr, Aingworth1999-qc, Chechik2013-gh}. Assuming SETH, on an undirected unweighted graph any $(3/2-\eps)$-approximation with $\eps>0$ of the diameter requires time $n^{2-o(1)}$~\cite{Roditty2013-zr}, any $(5/3-\eps)$-approximation requires time $n^{3/2-o(1)}$~\cite{Li21settling}, and any $(7/4-\eps)$-approximation requires time $n^{4/3-o(1)}$~\cite{Bonnet2022-xb}.
For a weighted undirected \emph{planar} graph with non-negative edge weights, $(1+\eps)$-approximation to the diameter can be done in running time near-linear in $n$ (but exponential in $1/\eps$)~\cite{Weimann2015-py} and later improved to $O(n \log n \cdot (\log n + (1/\eps)^5))$~
\cite{Chan2019-rh} and to $O_\e(n \log n)$ time~\cite{CKT22}.

\section{VC-dimension of Unit-Disk and Pseudo-Disk Graphs}

\subsection{Unit-Disk Graphs and Pseudo-Disk Graphs}

An undirected, unweighted \emph{unit-disk graph} is a graph obtained from a set of points $P$ in the plane such that two points are connected by an edge if and only if their Euclidean distance is at most $2$. 
A unit-disk graph is a special type of \emph{geometric intersection graph}, which can be defined for a set $F$ of objects in $\reals^d$ where an edge exists between two vertices if and only if the two corresponding objects overlap. 

One interesting family of geometric intersection graphs is when the objects are \emph{pseudo-disks}. 
Specifically, a simple closed Jordan curve $C$ partitions the plane into two regions, one of them is bounded, called the \EMPH{interior} of $C$. 
A family of simple closed Jordan curves is called \EMPH{pseudo-circles} if every two curves are either disjoint or properly crossed at precisely two points. 
(Without loss of generality we assume there are no tangencies.)
In a family of pseudo-circles, the interior of each pseudo-circle is called a \EMPH{pseudo-disk}. 
Each pseudo-disk is a simply connected set and the intersection of a pair of pseudo-disks is either empty or is a connected set~\cite{Buzaglo2013-lj}.
For a family of pseudo-disks $D_1, D_2, \ldots, D_n$, we can construct the intersection graph $G$ of the pseudo-disks --- combinatorially, we use a set of vertices with $v_i$ corresponding to $D_i$ and connect an edge for $v_i$ and $v_j$ if and only if $D_i$ and $D_j$ have non-empty intersections. 

The following property of unit-disk graphs is folklore (for example, see Breu~{\cite[Lemma~3.3]{Breu-thesis}}).

\begin{lemma}
\label{lem:intersect}
If two edges $ab$ and $cd$ in a unit-disk graph intersect, then one of the four vertices $a, b, c, d$ is connected to the rest of three vertices.
\end{lemma}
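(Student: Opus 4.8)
\textbf{Proof proposal for \Cref{lem:intersect}.}

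The plan is to work directly with the geometry. Identify each vertex with the center of its unit disk (radius $1$), so that two vertices are adjacent exactly when the Euclidean distance between their centers is at most $2$. Suppose the segments $ab$ and $cd$ cross at a point $x$. I want to show that among $a,b,c,d$ there is one vertex whose distance to each of the other three is at most $2$. The first step is to record the crossing-point inequalities: since $x$ lies on both segments, $|ax| + |xb| = |ab| \le 2$ and $|cx| + |xd| = |cd| \le 2$. In particular $|ax|, |xb| \le 2$ and $|cx|, |xd| \le 2$, so the distance from $x$ to each of the four centers is at most $2$ (indeed at most the relevant half-sum bound).

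The second step is a short averaging/triangle-inequality argument. For any of the four pairs that are ``opposite'' through $x$ — say $a$ and $c$, or $a$ and $d$, etc. — we have, e.g., $|ac| \le |ax| + |xc|$. Summing the four cross-pair bounds $|ac| + |ad| + |bc| + |bd| \le 2(|ax|+|xb|) \cdot \tfrac12 + \dots$ is one way, but cleaner is: add $|ax|+|xb| \le 2$ and $|cx|+|xd|\le 2$ to get $(|ax|+|cx|) + (|xb|+|xd|) \le 4$, hence at least one of the two bracketed sums is $\le 2$; by symmetry the same holds for the pairing $(|ax|+|dx|) + (|xb|+|cx|) \le 4$. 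This will let me locate a vertex that is close to two of the three others; combined with the fact that it is already close to its segment-partner (distance $\le 2$ along the crossed segment), that vertex is close to all three. I expect the bookkeeping to require splitting into the cases according to which of the bracketed sums is the small one, but each case is immediate.

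Concretely: WLOG $|ax| \le |xb|$ and $|cx| \le |xd|$ (relabel within each segment). Then $|ax| \le \tfrac12|ab| \le 1$ and $|cx| \le \tfrac12|cd| \le 1$, so $|ac| \le |ax| + |xc| \le 2$, giving the edge $ac$. Also $a$ is adjacent to $b$ (they share the segment $ab$, so $|ab|\le 2$). It remains to connect $a$ to $d$: if $|ad| \le 2$ we are done with $a$ as the central vertex; otherwise $|ad| > 2$, and then I switch to $c$ as the candidate center — $c$ is adjacent to $d$ (segment $cd$), adjacent to $a$ (just shown), and I must check $|cb| \le 2$. Here is where I use that $|ad| > 2$ forces $|bc| \le 2$: from $|ad| \le |ax| + |xd| \le 1 + |xd|$ we get $|xd| \ge |ad| - 1 > 1$, so $|cx| \le |xd|$ and in fact $|cx| \le \tfrac12|cd| \le 1$, while $|xb| \le 2 - |ax|$; then $|bc| \le |bx| + |xc| \le (2 - |ax|) + |xc|$. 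I need this $\le 2$, i.e. $|xc| \le |ax|$, which need not hold directly — so the genuinely delicate point is choosing the right candidate vertex.

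The main obstacle, then, is the case analysis: picking which of the four endpoints serves as the universal neighbor. I anticipate handling it by always choosing the endpoint that is \emph{closer to the crossing point $x$ among the two on the shorter sub-half}; more robustly, among the four quantities $|ax|,|bx|,|cx|,|dx|$ let the two smallest be realized by endpoints $p$ (on segment $ab$) and $q$ (on segment $cd$) — note one smallest comes from each segment since the two halves of each crossed segment sum to $\le 2$ so the smaller half is $\le 1$. Then $|pq| \le |px| + |xq| \le 1 + 1 = 2$. If $p = a$, say, then $a$ is adjacent to its partner $b$, to $q \in \{c,d\}$, and the only remaining check is adjacency to the \emph{other} endpoint of $cd$; this is exactly the spot that forces the careful choice, and I would resolve it by observing that whichever of $c,d$ is farther from $x$ is still within distance $|px| + |x\cdot| \le 1 + 2 = 3$ — not good enough — so instead I will argue by contradiction: if no vertex dominates, each of the four ``long'' cross-distances exceeds $2$, sum them, and derive $|ab| + |cd| > 4$ via $x$, contradicting $|ab|,|cd| \le 2$. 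That contradiction route is likely the cleanest and is what I would write up.
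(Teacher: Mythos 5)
The paper cites this lemma as folklore from Breu's thesis and gives no proof of its own, so there is no ``paper proof'' to compare against; I will just evaluate your argument.

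You have the right key observation in hand, but you do not close the argument, and the contradiction route you settle on at the end has a genuine gap. The good part: you correctly note that, writing $x$ for the crossing point,
\[
(|ax|+|cx|) + (|bx|+|dx|) = |ab|+|cd| \le 4, \qquad
(|ax|+|dx|) + (|bx|+|cx|) = |ab|+|cd| \le 4,
\]
so by the triangle inequality $\min(|ac|,|bd|) \le 2$ and $\min(|ad|,|bc|) \le 2$. These two facts already finish the proof: WLOG (swapping $a\leftrightarrow b$ if necessary) $|ac|\le 2$; then from the second either $|ad|\le 2$, in which case $a$ is within distance $2$ of $b,c,d$, or $|bc|\le 2$, in which case $c$ is within distance $2$ of $a,b,d$. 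Done, with no further case analysis. Instead you drift into a WLOG-on-halves attempt, correctly notice it stalls, and then propose a contradiction route as your final write-up: ``if no vertex dominates, each of the four cross-distances exceeds $2$, sum them, contradiction.'' That inference is not valid. The hypothesis ``no vertex dominates'' only says: for each of $a,b,c,d$ at least one of its two cross-distances exceeds $2$. For example $|ac|>2$, $|bd|>2$, $|ad|\le 2$, $|bc|\le 2$ makes all four vertices fail to dominate while only two cross-distances are large, so you cannot get the sum past $8$. What actually rules this configuration out is precisely $\min(|ac|,|bd|)\le 2$ --- the inequality you already derived --- at which point you are back to the direct argument, which is cleaner. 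So the proposal is salvageable in one line, but as written the final argument you commit to does not go through.
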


We now prove an analog of Lemma~\ref{lem:intersect} for pseudo-disks. The proof uses only the topological properties of pseudo-disks. 

\begin{claim}\label{clm:pseudodisk-inter}If two pseudo-disks $D$ and $D'$ intersect, for any two points $p\in D$ and $p'\in D'$, we can find a curve $\pi(p, p')$ from $p$ to $p'$ inside $D \cup D'$ such that this path can be partitioned into three pieces, at point $q, q'$ with $\pi(p, q)\in D\setminus D'$, $\pi(q, q')\in D\cap D'$ and $\pi(q', p')\in D'\setminus D$. Any of the three pieces may be empty. 
\end{claim}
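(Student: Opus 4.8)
\textbf{Proof plan for Claim~\ref{clm:pseudodisk-inter}.}

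The plan is to exploit two topological facts about a pair of intersecting pseudo-disks: first, each pseudo-disk $D$ is simply connected, so it is path-connected; second, the intersection $D\cap D'$ is a nonempty connected set (this is the property cited from~\cite{Buzaglo2013-lj} right before the claim). The idea is to route the curve $\pi(p,p')$ through the intersection region: go from $p$ to some point $q$ on $D\cap D'$ staying inside $D$, then from $q$ to some point $q'$ inside $D\cap D'$, then from $q'$ to $p'$ staying inside $D'$.

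More concretely, here is the sequence of steps I would carry out. (1) Fix any point $x\in D\cap D'$ (nonempty by hypothesis). (2) Since $D$ is path-connected, pick a path $\gamma_1$ from $p$ to $x$ inside $D$; let $q$ be the first point along $\gamma_1$ (starting from $p$) that lies in $D\cap D'$ — this exists since $x$ itself is such a point, and ``first'' is well-defined because $D\cap D'$ is closed. Then the portion of $\gamma_1$ from $p$ to $q$ lies in $D$ and, by the choice of $q$, its relative interior lies in $D\setminus D'$; call this piece $\pi(p,q)$. (3) Symmetrically, since $D'$ is path-connected, pick a path $\gamma_2$ from $x$ to $p'$ inside $D'$, and let $q'$ be the \emph{last} point along $\gamma_2$ that lies in $D\cap D'$; the portion from $q'$ to $p'$ then lies in $D'$ with relative interior in $D'\setminus D$; call it $\pi(q',p')$. (4) Finally, since $D\cap D'$ is connected and (being an intersection of two open — or suitably regular — regions) path-connected, pick a path $\pi(q,q')$ from $q$ to $q'$ entirely inside $D\cap D'$. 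Concatenating $\pi(p,q)\cdot\pi(q,q')\cdot\pi(q',p')$ gives the desired curve inside $D\cup D'$, split at $q$ and $q'$ into the three required pieces, any of which may be empty (e.g. if $p\in D\cap D'$ already, then $q=p$ and the first piece is empty).

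The main obstacle is the path-connectivity of $D\cap D'$ (not merely its connectedness). For open pseudo-disks this is automatic since a connected open subset of the plane is path-connected; if the pseudo-disks are taken to be closed, one should argue that $D\cap D'$ is still path-connected — this follows because $D\cap D'$ is a compact connected set bounded by arcs of the two Jordan curves (which cross in at most two points), so it is in fact homeomorphic to a closed disk, hence path-connected. A second, more minor point is making rigorous the claim that the relative interior of $\pi(p,q)$ avoids $D'$: this is immediate from choosing $q$ to be the first hitting point of the closed set $D\cap D'$ along $\gamma_1$, but one should note that if one works with closed pseudo-disks the boundary-versus-interior bookkeeping only needs the weaker statement ``$\pi(p,q)\subseteq D$ and $\pi(p,q)\cap D'\subseteq\{q\}$'', which still suffices for the way the claim is used. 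I would keep the write-up at the level of these three topological inputs (simple connectivity of each disk, connectedness of the intersection, and promotion to path-connectivity) and avoid any explicit coordinate computation.
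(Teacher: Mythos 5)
Your proposal is correct and rests on the same idea as the paper's proof: identify the transition points $q$ (first hit of $D\cap D'$) and $q'$ (last point in $D\cap D'$) and route the middle segment through $D\cap D'$, using the connectedness (hence path-connectedness in the plane) of $D\cap D'$ cited from Buzaglo et al. The only cosmetic difference is that the paper starts from an arbitrary curve $p\to p'$ in $D\cup D'$ and patches its middle portion, while you build the curve from scratch via a chosen $x\in D\cap D'$; both instantiations of the same argument.
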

\begin{proof}
     We first take any curve $\pi(p, p')$ inside $D \cup D'$ and take $q$ to be the first point on the curve that enters $D'$ and $q'$ the last point on the curve that leaves $D$. This partitions the curve into three pieces,  $\pi(p, q)$, $\pi(q, q')$ and $\pi(q', p')$. Clearly $\pi(p, q)\in D\setminus D'$ and $\pi(q', p')\in D'\setminus D$ by definition. Now if $\pi(q, q')$ is not entirely inside $D\cap D'$, we replace it by a curve purely inside $D\cap D'$ since $D\cap D'$ is connected. See Figure~\ref{fig:pseudo-disk-path} for an example.
\end{proof}
We call the curve $\pi(p, p')$ in Claim~\ref{clm:pseudodisk-inter} a \EMPH{proper curve}  connecting $p$ and $p'$, and $q, q'$ the \EMPH{entrance} and \EMPH{exit} point respectively. 

\begin{figure}[htbp]
\centering
\includegraphics[width=0.375\linewidth]{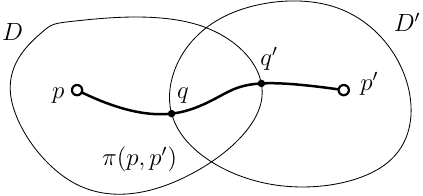}
\internallinenumbers
\caption{If two pseudo-disks $D$ and $D'$ intersect, for any two points $p\in D$ and $p'\in D'$, we can find a curve $\pi(p, p')$ from $p$ to $p'$ inside $D \cup D'$ such that this path can be partitioned into three pieces, at point $q, q'$ with $\pi(p, q)\in D\setminus D'$, $\pi(q, q')\in D\cap D'$ and $\pi(q', p')\in D'\setminus D$. }
\label{fig:pseudo-disk-path}
\end{figure}

\begin{lemma}[Lemma 1 in~\cite{Buzaglo2013-lj}] \label{lem:pseudo-even}
Let $\gamma$ and $\gamma'$ be arbitrary non-overlapping curves contained in pseudo-disks $D$ and $D'$\!, respectively. If the endpoints of $\gamma$ lie outside of $D'$ and the endpoints of $\gamma'$ lie outside of $D$, then $\gamma$ and $\gamma'$ cross an even number of times.
\end{lemma}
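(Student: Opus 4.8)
The statement to prove is Lemma~\ref{lem:pseudo-even}, cited from~\cite{Buzaglo2013-lj}: two non-overlapping curves $\gamma \subseteq D$ and $\gamma' \subseteq D'$, whose endpoints lie outside the other pseudo-disk, cross an even number of times. Since the paper invokes this as a known result, I would nonetheless reconstruct the argument as follows. The plan is to reduce everything to the parity of crossings between $\gamma'$ and the \emph{boundary} $\bdry D$ of the pseudo-disk containing $\gamma$. The key observation is that $\gamma$ lives entirely inside $D$, so a crossing of $\gamma$ with $\gamma'$ can only happen while $\gamma'$ is inside $D$; I want to track how $\gamma'$ enters and leaves $D$ and use a continuity/parity argument. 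Concretely, first I would push $\gamma$ slightly: since $\gamma$ is a compact curve strictly inside the open pseudo-disk $D$ (after a small perturbation — its endpoints are in $D$ by hypothesis, and we may assume $\gamma'$ meets $\gamma$ transversally and misses $\bdry D$ at finitely many points), $\gamma$ is contained in a closed sub-disk away from $\bdry D$.

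Next I would analyze $\gamma'$ as a path that starts and ends outside $D$. Walking along $\gamma'$, it alternately crosses $\bdry D$ from outside-to-inside and inside-to-outside; because both endpoints of $\gamma'$ are outside $D$, the total number of crossings of $\gamma'$ with $\bdry D$ is even, and $\gamma'$ decomposes into a sequence of maximal sub-arcs, each either entirely outside $D$ or entirely inside $D$ (with endpoints on $\bdry D$). Only the ``inside'' sub-arcs can meet $\gamma$. For each such inside sub-arc $\sigma$, both of its endpoints lie on $\bdry D$, hence outside the open region where $\gamma$ lives; within the disk $D$ (which is simply connected, being homeomorphic to a closed disk), an arc with both endpoints on the boundary circle together with a boundary arc forms a Jordan curve, and $\gamma$ — being a connected arc with both endpoints in one of the two complementary regions — must cross $\sigma$ an even number of times. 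Summing over all inside sub-arcs of $\gamma'$ gives that $\gamma$ and $\gamma'$ cross an even number of times. The pseudo-disk hypothesis enters only to guarantee that $D$ is a genuine topological closed disk (Jordan domain) so these separation arguments apply; the ``at most two boundary intersections'' property of pseudo-circles is not even needed here — it is a statement about a single pseudo-disk $D$ and an arbitrary curve $\gamma'$.

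I expect the main obstacle to be the bookkeeping when the curves are not in general position: tangencies of $\gamma'$ with $\bdry D$, or of $\gamma$ with $\gamma'$, where a ``crossing'' must be counted with the correct parity (a tangency that does not pass through the curve contributes $0$, a transversal crossing contributes $1$). The clean way around this is to first perturb $\gamma'$ by an arbitrarily small homotopy, supported away from its endpoints and away from any point where $\gamma'$ already lies outside $\bar D$, so that $\gamma'$ meets both $\bdry D$ and $\gamma$ transversally in finitely many points; one then argues that such a perturbation changes each of the two relevant crossing numbers by an even amount (crossings are created or destroyed in pairs), so the parity statement for the perturbed curves implies it for the originals. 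The only subtle point is that the endpoints of $\gamma'$ must stay outside $\bar D$ throughout the perturbation, which is fine since they start in the open exterior; similarly the hypothesis that endpoints of $\gamma$ lie outside $D'$ is actually the symmetric condition and, in this proof, is subsumed by the fact that we never needed to move $\gamma$ at all — it is already strictly inside $D$ and its endpoints being outside $D'$ is used (if at all) only to keep the setup symmetric. I would then remark that this is exactly Lemma~1 of~\cite{Buzaglo2013-lj} and refer the reader there for the fully detailed general-position treatment.
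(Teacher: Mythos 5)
Your reconstruction has a genuine gap at the key step. You decompose $\gamma'$ into maximal sub-arcs alternately inside and outside $D$, and for each inside sub-arc $\sigma$ (with endpoints on $\bdry D$) you assert that $\gamma$ must cross $\sigma$ evenly because its two endpoints lie ``in one of the two complementary regions'' of $\sigma$ in $D$. That assertion is never justified, and it is exactly where both hypotheses you dismiss are required. You need the hypothesis that the endpoints of $\gamma$ lie outside $D'$ (not merely ``to keep the setup symmetric''), and you need the pseudo-disk property, which you claim ``is not even needed here.'' Since $\sigma \subseteq \gamma' \subseteq D'$, the endpoint hypothesis places both endpoints of $\gamma$ in $D \setminus D'$, a set disjoint from $\sigma$; and the pseudo-disk property (at most two crossings of $\bdry D$ with $\bdry D'$) is precisely what makes $D \setminus D'$ \emph{connected}, hence contained in a single component of $D \setminus \sigma$. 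Only with both facts does the even-crossing conclusion for each $\sigma$ follow. Without them the step is unsupported.

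Indeed, dropping the pseudo-disk hypothesis makes the lemma itself false, and your argument breaks on the same example. Let $D$ be a round disk and let $D'$ be a thin U-shaped Jordan domain whose two arms both pierce $D$, so $\bdry D \cap \bdry D'$ has four points and $D \setminus D'$ has three components. Let $\gamma'$ run inside $D'$ from one tip of the U to the other (both endpoints outside $D$), and let $\gamma$ run inside $D$ from a point to the left of the left arm to a point between the two arms (both endpoints outside $D'$). Then $\gamma$ crosses $\gamma'$ exactly once, an odd number. Your claim for the inside sub-arc $\sigma$ lying in the left arm fails because $\sigma$ separates the two endpoints of $\gamma$ within $D$. (As an aside, the paper only cites this lemma from Buzaglo and Pinchasi without reproducing a proof, so there is no internal argument to compare against; the above is a correctness check of your reconstruction on its own terms.)
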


\begin{lemma}\label{lem:intersect-pseudo} 
For four pseudo-disks $\{D_a, D_b, D_c, D_d\}$ with $D_a$ intersects $D_b$ and $D_c$ intersects $D_d$, take four points $a \in D_a$, $b\in D_b$, $c\in D_c$ and $d\in D_d$ and proper curves $\pi(a, b), \pi(c, d)$. If  $\pi(a, b), \pi(c, d)$ have an odd number of intersections and none of point $i\in \{a, b, c, d\}$ stays inside any pseudo-disk $D_j$ with $j \in \{a, b, c, d\}$ and $j\neq i$, then one of the four pseudo-disks intersects all three other pseudo-disks.
\end{lemma}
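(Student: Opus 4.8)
The plan is to mirror the Euclidean argument behind \Cref{lem:intersect}, but carry it out topologically. In the unit-disk setting a single crossing point $x$ of the two segments, together with the triangle inequality, already exhibits a vertex within distance $2$ of the other three; here a single crossing point of $\pi(a,b)$ and $\pi(c,d)$ only certifies that two of the four pseudo-disks meet, so instead I would exploit the \emph{parity} of the number of crossings through \Cref{lem:pseudo-even}. The hypothesis that $\pi(a,b)$ and $\pi(c,d)$ cross an odd number of times is exactly what makes this work, and the hypothesis that none of $a,b,c,d$ lies in a foreign $D_j$ is what rules out the "trivial" reasons a parity could be odd.

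First I would decompose each proper curve along its entrance and exit points using \Cref{clm:pseudodisk-inter}: write $\pi(a,b)=A\cup M\cup B$ with $A=\pi(a,q)\subseteq D_a$, $M=\pi(q,q')\subseteq D_a\cap D_b$, $B=\pi(q',b)\subseteq D_b$, so that $q,q'\in D_a\cap D_b$; symmetrically $\pi(c,d)=C\cup N\cup E$ with $C\subseteq D_c$, $N\subseteq D_c\cap D_d$, $E\subseteq D_d$, and $s,s'\in D_c\cap D_d$. The feature to record is that each of the three arcs of $\pi(a,b)$ lies inside one of $D_a,D_b$ and has all of its endpoints in $\{a,b,q,q'\}$, with any endpoint other than $q,q'$ being $a$ or $b$; symmetrically for $\pi(c,d)$. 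After an arbitrarily small perturbation I may assume the two curves are in general position, hence cross transversally at finitely many points, which is also the setting required by \Cref{lem:pseudo-even}. Now the total number of crossings of $\pi(a,b)$ with $\pi(c,d)$ equals the sum, over the nine pairs consisting of an arc of the first curve and an arc of the second, of the number of crossings of that pair; since the total is odd, some arc pair $(X,Y)$ has an odd, hence nonzero, number of crossings. Write $X\subseteq D_i$ with $i\in\{a,b\}$ and $Y\subseteq D_j$ with $j\in\{c,d\}$ (choosing either option when $X=M$ or $Y=N$).

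Then I would apply the contrapositive of \Cref{lem:pseudo-even} to $X\subseteq D_i$ and $Y\subseteq D_j$: an odd number of crossings forces some endpoint of $X$ to lie in $D_j$, or some endpoint of $Y$ to lie in $D_i$. In the first case the guilty endpoint lies in $\{a,b,q,q'\}$, but the hypothesis forbids $a$ and $b$ from lying in $D_c$ or $D_d$, so it is $q$ or $q'$ — giving a point of $D_a\cap D_b\cap D_j$. Since $D_j\in\{D_c,D_d\}$ already meets the other of $D_c,D_d$, it meets all three of $D_a$, $D_b$, and that one, which is the conclusion. The second case is symmetric and yields that $D_i$ meets the other three. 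The main obstacle I anticipate is making the parity bookkeeping airtight: one must be certain the crossing count really forces an arc pair with an \emph{odd} (not merely positive) count — a lone crossing can be parity-canceled and leave every endpoint innocent, which is why the odd-crossing hypothesis is essential — and one must check the degenerate cases of \Cref{clm:pseudodisk-inter} (empty arcs, $q=q'$, and the like) so that $q,q',s,s'$ genuinely lie in the claimed pairwise intersections. Everything past that is a routine application of \Cref{clm:pseudodisk-inter} and \Cref{lem:pseudo-even}.
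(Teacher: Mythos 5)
Your proof is correct and follows essentially the same route as the paper: decompose each proper curve at its entrance/exit points, observe that the total crossing parity must be inherited by some arc pair, then apply Lemma~\ref{lem:pseudo-even} in contrapositive form together with the hypothesis that none of $a,b,c,d$ lies in a foreign pseudo-disk. The only bookkeeping difference is that you treat all nine arc pairs uniformly, whereas the paper first disposes of crossings that land on a middle piece $M$ or $N$ (any such crossing immediately lies in $D_a\cap D_b$ or $D_c\cap D_d$ plus one more disk) and then applies the parity argument to the remaining four outer-arc pairs; your version is marginally more uniform but substantively identical.
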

\begin{proof}
Suppose there are $k$ intersections of $\pi(a, b), \pi(c, d)$, where $k\geq 1$ is an odd number.  First, suppose at least one of the intersections of $\pi(a, b), \pi(c, d)$, say $w$, stays in between the entrance and exit of $\pi(a, b)$. Recall that $w$ also stays on $\pi(c, d)$ and thus stays either inside $D_c$ or $D_d$. This means that either $w$ stays inside all three pseudo-disks $D_a, D_b, D_c$ (which means that $D_c$ intersects all three other pseudo-disks) or inside all three pseudo-disks  $D_a, D_b, D_d$  (which means that $D_d$ intersects all three other pseudo-disks). The same argument can be applied if one intersection $w$ stays in between the entrance and exit of $\pi(c, d)$. 

Now consider the $k$ intersections on $\pi(a, b)$, none of them stays in between the entrance $q_{ab}$ and exit $q'_{ab}$. Similarly, none of these $k$ intersections stays in between the entrance $q_{cd}$ and exit $q'_{cd}$. Define $k_{ac}$ to be the number of intersections of $\pi(a, q_{ab})$ and $\pi(c, q_{cd})$. Define $k_{bc}, k_{ad}, k_{bd}$ in a similar manner.
We have $k=k_{ac}+k_{bc}+k_{ad}+k_{bd}$. Since $k$ is odd, at least one of the four numbers is odd. Without loss of generality, assume that $k_{ac}$ is odd, i.e., $\pi(a, q_{ab})$ and  $\pi(c, q_{cd})$ intersect each other an odd number of times. 
Now $\pi(a, q_{ab})$ is entirely in $D_a$ and $a$ is outside $D_c$. If $q_{ab}$ is inside $D_c$ we have $D_c$ intersecting both $D_a, D_b, D_d$ and we are done. Thus we assume that $q_{ab}$ is also outside of $D_c$. By the same argument $\pi(c, q_{cd})$ is entirely inside $D_c$ and both end points $c$ and $q_{cd}$ are outside of $D_a$. $\pi(a, q_{ab})$ and $\pi(c, q_{cd})$ intersect each other an odd number of times. This contradicts Lemma~\ref{lem:pseudo-even} and therefore is not possible. 
\end{proof}

Now we consider a pseudo-disk graph. 
We can find a planar drawing of the pseudo-disk graph in the plane in the following manner: 
for each pseudo-disk $D$, we take one \EMPH{representative point} $p\in D$. 
If two pseudo-disks $D$ and $D'$ intersect, we connect their representative points $p\in D$ and $p'\in D'$ by a proper curve $\pi(p, p')$. 
This graph is unweighted, i.e., the proper curve $\pi(p, p')$ has length of $1$.  
\EMPH{Path $P(p, q)$} for two pseudo-disks represented by $p$ and $q$ consists of several proper curves visiting the representative points of the pseudo-disks on the path. We use $|P(p, q)|$ to denote the hop length of a path $P(p, q)$. 

We can prove a generalized version of Lemma~\ref{lem:intersect-pseudo} for the hop distances of paths in the pseudo-disk graph. This will be useful for bounding the VC-dimension of set systems defined on the pseudo-disk graph. 
Consider four vertices $a, b, c, d$ representing four pseudo-disks $D_a, D_b, D_c, D_d$ and assume that there are two paths $P(a, b)$ and $P(c, d)$. We define a \EMPH{local crossing pattern} to be four distinct vertices \EMPH{$a', b', c', d'$} with $a', b'$ on path $P(a, b)$ (with $a'$ closer to $a$ than $b'$) and $c', d'$ on path $P(c, d)$ (with $c'$ closer to $c$ than $d'$) such that one of the four vertices $a', b', c', d'$ has edges to all the other three vertices. 

\begin{lemma}\label{lem:intersect-path}
Consider four vertices $a, b, c, d$ representing four pseudo-disks $D_a, D_b, D_c, D_d$ and assume that there is a local crossing pattern of the two paths $P(a, b)$ and $P(c, d)$, then the followings are true:
\begin{enumerate}
    \item Either there is a path $P'(a, c)$ whose hop length is at most $|P(c, d)|$ or there is a path $P'(b, d)$ whose hop length is at most $|P(a, b)|$.
    \item Either there is a path $P'(a, c)$ whose hop length is at most $|P(a, b)|$ or there is a path $P'(b, d)$ whose hop length is at most $|P(c, d)|$.
\end{enumerate} 
\end{lemma}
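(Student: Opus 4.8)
The plan is to treat a local crossing pattern as an honest crossing of the two paths and simply reroute. First I would fix representations $P(a,b)=(a=v_0,v_1,\dots,v_\ell=b)$ with $\ell=|P(a,b)|$ and $P(c,d)=(c=u_0,u_1,\dots,u_m=d)$ with $m=|P(c,d)|$, and write $a'=v_i$, $b'=v_j$, $c'=u_k$, $d'=u_l$. Since $a'$ and $b'$ are distinct and $a'$ lies closer to $a$ along $P(a,b)$, we get $i<j$, hence $j\ge i+1$; symmetrically $l\ge k+1$. These two inequalities are the only quantitative input the argument needs — there is no further topology involved, since the topological content (that two crossing paths produce such a pattern, Lemma~\ref{lem:intersect-pseudo}) is already packaged into the hypothesis.

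Next I would split on which of $a',b',c',d'$ is the center of the pattern. The renaming $a\mapsto c,\ b\mapsto d,\ c\mapsto a,\ d\mapsto b$ together with swapping the two paths exchanges the role of $a'$ with $c'$ and of $b'$ with $d'$, and it also exchanges conclusions~(1) and~(2); hence it suffices to establish both conclusions in the two cases ``$a'$ is the center'' and ``$b'$ is the center.'' When $a'$ is the center, the edge $a'c'$ gives the walk $a=v_0\to\cdots\to v_i=a'\to c'=u_k\to u_{k-1}\to\cdots\to u_0=c$ of length $i+k+1$, and the edges $b'a'$ and $a'd'$ give the walk $b=v_\ell\to\cdots\to v_j=b'\to a'\to d'=u_l\to\cdots\to u_m=d$ of length $(\ell-j)+(m-l)+2$; since a walk of length $L$ contains a path of length at most $L$ with the same endpoints, these are legitimate candidates for $P'(a,c)$ and $P'(b,d)$. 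When $b'$ is the center, I instead take $a\to\cdots\to a'\to b'\to c'\to\cdots\to c$ of length $i+k+2$ and $b\to\cdots\to b'\to d'\to\cdots\to d$ of length $(\ell-j)+(m-l)+1$, now using the edges $a'b'$, $b'c'$ and $b'd'$.

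Finally each case closes by a one-line dichotomy on whether the $a$--$c$ walk is already short. Take ``$a'$ is the center'' and conclusion~(1): if $i+k+1\le m$ the $a$--$c$ walk has length at most $|P(c,d)|$ and we are done; otherwise $i+k\ge m$, so $j+l\ge (i+1)+(k+1)\ge m+2$, whence the $b$--$d$ walk has length $\ell+m-j-l+2\le\ell=|P(a,b)|$. Conclusion~(2) is the identical computation with $\ell$ and $m$ interchanged (test $i+k+1\le\ell$; if it fails, $i+k\ge\ell$ forces $j+l\ge\ell+2$, so the $b$--$d$ walk has length at most $m$), and the ``$b'$ is the center'' case is the same after shifting one of the $+1$'s. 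The only thing that needs genuine care is the bookkeeping: verifying that each spliced walk uses only the (at most three) edges the center vertex is guaranteed to possess, handling the degenerate subcases where some index equals $0$, $\ell$, or $m$ so a sub-path is empty, and checking that the off-by-one terms coming from $j\ge i+1$ and $l\ge k+1$ line up exactly with the $\pm1$'s in the target lengths. I do not anticipate any real obstacle — once the coordinates are set, the lemma is a rerouting/triangle-inequality statement written in path indices.
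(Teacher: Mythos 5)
Your proof is correct and follows essentially the same strategy as the paper's: reroute each of the pairs $(a,c)$ and $(b,d)$ through the center of the local crossing pattern, express the walk lengths in terms of integer positions along the two paths, and close with an integer dichotomy. The small differences are cosmetic: the paper declares ``WLOG $a'$ is the center'' (which is justified by the renaming $(a,b,c,d)\mapsto(b,a,d,c)$, since that swaps $a'\leftrightarrow b'$ and exchanges conclusions (1) and (2)), whereas you reduce $c',d'$ to $a',b'$ via $(a,b,c,d)\mapsto(c,d,a,b)$ and then treat the $a'$ and $b'$ cases explicitly; the paper routes the $b$--$d$ walk through the subpath $P(a',b')$ while you take the single edge $b'a'$, giving a slightly tighter bound but leading to the same arithmetic; and the paper carves out a preliminary case where the two paths share a vertex, which your computation subsumes since a walk's hop count bounds a path's regardless of repeats. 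In short, same idea, slightly different bookkeeping.
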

\begin{proof}
If one vertex in the local crossing pattern stays on both paths $P(a, b)$ and $P(c, d)$, denote this vertex as vertex $o$.
Then we take the path $P'(a, c)$ that is composed of $P(a, o)$ (along $P(a, b)$) and $P(o, c)$ (along $P(d, c)$), and the path $P'(b, d)$ that is composed of $P(b, o)$ (along $P(b,a)$) and $P(o, d)$ (along $P(c, d)$). By definition $|P'(a, c)|=|P(a, o)|+|P(o,c)|$, $|P'(b, d)|=|P(b, o)|+|P(o, d)|$. If $|P(o,a)| \leq |P(o, d)|$, then $|P'(a, c)|\leq |P(c, d)|$. Otherwise, $|P'(b, d)| \leq |P(a, b)|$. Similarly, if $|P(o,c)| \leq |P(o, b)|$, then $|P'(a, c)|\leq |P(a, b)|$. Otherwise, $|P'(b, d)| \leq |P(c, d)|$.

From now on we assume that the two paths $P(a, b)$ and $P(c, d)$ do not share any pseudo-disks. Without loss of generality, assume that $a'$ has edges to $b', c' d'$. Take the subpath from $a$ to $a'$ on the path $P(a, b)$ to be $P(a, a')$. We assume that $|P(a, a')|=d_1$, $|P(b',b)|=d_2$, $|P(c, c')|=d_3$ and $|P(d, d')|=d_4$. 
See Figure~\ref{fig:intersection-path} for an example.
Note that any of $d_1, d_2, d_3, d_4$ could be zero. Therefore, 
\begin{alignat*}{3}
|P(a, b)| &={} &|P(a, a')|+|P(a', b')|+|P(b', b)| &\geq{} &d_1+d_2+1, \\
|P(c, d)| &={} &|P(c, c')|+|P(c', d')|+|P(d', d)| &\geq{} &d_3+d_4+1.  
\end{alignat*}

\begin{figure}
\centering
\includegraphics[width=0.34\linewidth]{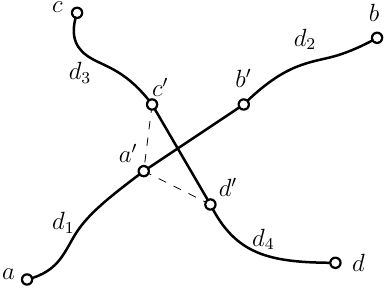}
\caption{If two paths $P(a, b)$ and $P(c, d)$ intersect with a local crossing pattern $a', b', c', d'$, then there is a path between $a, c$ that are no longer than $P(a, b)$ or there is a path between $b, d$ that is no longer than $P(c, d)$. }
\label{fig:intersection-path}
\end{figure}

Consider the path $P(a, a')$ followed by an edge $a'c'$ and then path $P(c',c)$. This is a path $P'(a, c)$ that has length $d_1+1+d_3$. Similarly, there is a path that connects $d$ to $b$ by using $P(d,d')$, the edge $d'a'$, path $P(a', b')$, and then path $P(b', b)$. This path, called $P'(d, b)$ has length $d_4+1+|P(a', b')|+d_2$. We argue that either $|P'(a, c)|\leq |P(c, d)|$ or $|P'(d, b)|\leq |P(a, b)|$. If otherwise, we have 
\begin{alignat*}{4}
|P'(a, c)| &= d_1+1+d_3 &\,>{} &|P(c, d)| &\,\geq{} &d_3+d_4+1 &\quad\Rightarrow\quad &d_1 > d_4, \\
|P'(d, b)| &= d_4+d_2+1+|P(a', b')| &\,>{} &|P(a, b)| &\,={} &d_1+|P(a', b')|+d_2 &\quad\Rightarrow\quad &d_4 +1 > d_1.
\end{alignat*}
Recall that $d_1, d_4$ must be integers thus it is impossible to have $d_4+1> d_1>d_4$.

Similarly, we argue that either $|P'(a, c)|\leq |P(a, b)|$ or $|P'(d, b)|\leq |P(c, d)|$. If otherwise, we have 
\begin{alignat*}{4}
|P'(a, c)| &= d_1+1+d_3 &\,> &\,|P(a, b)| &\,= &\,d_1+|P(a', b')|+d_2 &\quad\Rightarrow\quad & d_3 +1> d_2+|P(a', b')|, \\
|P'(d, b)| &= d_4+d_2+1+|P(a', b')| &\,> &\,|P(c, d)| &\,\geq &\,d_3+d_4+1 &\quad\Rightarrow\quad & d_2 +|P(a', b')| > d_3.
\end{alignat*}
Again this is impossible for $d_2, d_3$ to have integer values to satisfy $d_3 +1> d_2+|P(a', b')|> d_3$.
\end{proof}

\begin{remark} 
Lemma~\ref{lem:intersect-path} does not require the paths to be shortest. Furthermore, 
the argument in  Lemma~\ref{lem:intersect-path} is not true for weighted unit-disk graphs, where edges are given natural weights as the Euclidean length of the edges.     
\end{remark}

\subsection{VC-dimension of Unit-Disk Graphs and Pseudo-Disk Graphs}

The VC-dimension of a set system $(P, \mathcal{R})$ with $\mathcal{R}$ containing subsets of $P$ is the largest cardinality of a subset $S\subseteq P$ that can be \emph{shattered}, i.e., all subsets of $S$ can be obtained by the intersection of some sets in $\mathcal{R}$ with $P$. Here we consider the VC-dimension of two other set systems defined on a unit-disk graph or a pseudo-disk graph $G$, namely the \emph{distance VC-dimension} of $G$ and the \emph{distance encoding VC-dimension} of~$G$. 
We discuss them separately. 

\paragraph{Distance VC-dimension.}
In an unweighted graph $G$, consider the collection of balls $B(v, r)$, which is the set of all points within a hop distance of $r$ from a vertex $v$. 
Since we consider unweighted graphs, we assume $r$ to be non-negative integers. We define the \EMPH{ball system} of a graph $G$ on points $P$ as the sets 
\[
\EMPH{$\mathcal{B}(G)$} \coloneqq \Set{\big. B(v, r) : \text{$\forall v \in P$, $\forall r\in \mathbb{Z}$, $r\geq 0$} }.
\]

The VC-dimension of the set of balls with radius $r$ for all possible non-negative integers is referred to as the VC-dimension of the ball hypergraph of $G$, also called the \EMPH{distance VC-dimension} of $G$~\cite{ducoffe2022diameter}. 
It is known that the set system of balls of any undirected (weighted) $K_h$-minor-free graphs have VC-dimension at most $h-1$~\cite{Chepoi2007-ar}. Thus the set of balls for planar graphs has VC-dimension at most $4$, since a planar graph does not have $K_5$ as a minor. 
Notice that a unit-disk graph can be a complete graph thus is not $K_h$-minor-free for any $h\leq n$. 
Thus the above result does not immediately apply to a unit-disk graph.
Ducoffe \etal~\cite{ducoffe2022diameter} also showed that interval graphs have distance VC-dimension of two. 
Unit-disk graph is a natural extension of the interval graph to two dimensional space. Our \Cref{thm:distance-VC} below shows that the distance VC-dimension of a pseudo-disk graph is $4$. 
Figure~\ref{fig:4pt-shatter} is an example of $4$ points in a unit-disk graph that can be shattered. 
In fact, the same example shows that the distance VC-dimension of planar graphs is exactly 4.

\begin{figure}[t]
\centering
\includegraphics[width=0.5\linewidth]{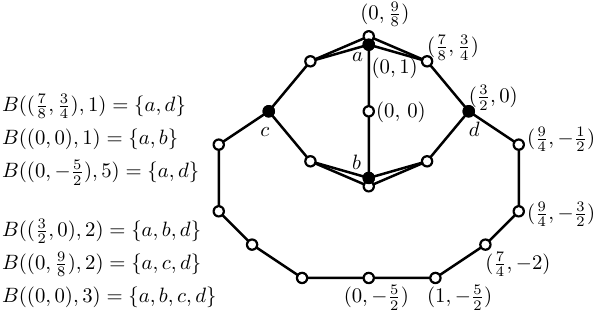}
\internallinenumbers
\caption{An example of $4$ points (drawn in solid) that can be shattered. The coordinates of the points are given and all edges of the unit-disk graph are drawn. Some examples of balls that shatter some subsets are given on the side. The remaining cases can be obtained by symmetry.}
\label{fig:4pt-shatter}
\end{figure}

\begin{theorem}\label{thm:distance-VC}
The distance VC-dimension of a pseudo-disk graph is $4$.
\end{theorem}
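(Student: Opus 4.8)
The plan is to prove the two directions of the claim separately: a lower bound of $4$ (exhibiting four points that can be shattered) and an upper bound of $4$ (showing no five points can be shattered). For the lower bound, I would simply invoke the explicit construction in Figure~\ref{fig:4pt-shatter} — four points with concrete coordinates in a unit-disk graph, together with the balls realizing each of the $16$ subsets (up to symmetry). This also shows the bound is tight for planar graphs, since the proper-curve drawing of a pseudo-disk graph is planar. So the real content is the upper bound.

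For the upper bound, suppose toward contradiction that five vertices $v_1,\dots,v_5$ are shattered by the ball system $\mathcal{B}(G)$. The standard move here (following the minor-free arguments of Chepoi–Estellon–Vax\`es and the pseudo-disk-flavored reworking we need) is to pick, for various subsets $A\subseteq\{v_1,\dots,v_5\}$, a ball $B(c_A, r_A)$ that contains exactly the vertices of $A$ among the five. In particular, for each pair $\{v_i,v_j\}$ there is a ball containing $v_i,v_j$ but none of the other three; fix a shortest path $P(v_i,v_j)$ inside that ball, which has hop length at most $2r_A$ and, crucially, avoids the other three shattered vertices (they are outside the ball). The goal is to derive a contradiction from the ``topology'' of these five paths drawn on the plane via the proper-curve drawing of the pseudo-disk graph: with five points and the connecting paths, two of the pair-paths, say $P(v_i,v_j)$ and $P(v_k,v_\ell)$ with $\{i,j\}\cap\{k,\ell\}=\emptyset$, must cross, and a crossing of two drawn paths yields a local crossing pattern in the sense defined before Lemma~\ref{lem:intersect-path}. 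Then Lemma~\ref{lem:intersect-path} gives short paths between the endpoints — e.g. a path $P'(v_i,v_k)$ of hop length at most $|P(v_i,v_j)|$ or $|P(v_k,v_\ell)|$ — which shortcuts distances in a way that is incompatible with the ball memberships dictated by shattering. The precise accounting is: shattering forces certain vertices to be far from certain ball centers while the new short paths force them close, and one plays the four inequalities of Lemma~\ref{lem:intersect-path} against the radii $r_A$ to reach an integer impossibility, exactly as in the proof of that lemma.

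The main obstacle — and where I expect most of the work to go — is setting up the crossing correctly. Unlike in planar graphs, a pseudo-disk graph is not sparse and its proper-curve drawing is not an embedding in the usual sense (edges are curves that may cross many times), so one cannot just invoke planarity to force a crossing. Instead I would argue more carefully: restrict attention to a carefully chosen set of five (or fewer) pair-paths forming, say, a configuration that must self-intersect for parity reasons — this is where Lemma~\ref{lem:pseudo-even} and the odd/even-crossing bookkeeping of Lemma~\ref{lem:intersect-pseudo} and Lemma~\ref{lem:intersect-path} come in. The delicate point is ensuring that whenever we extract a local crossing pattern, the ``none of the points sits inside another's pseudo-disk'' hypothesis of Lemma~\ref{lem:intersect-path} is met; this is exactly guaranteed by the shattering hypothesis, since the relevant vertices are chosen to be excluded from the relevant balls and hence (with the right choice of balls, e.g. radius forcing non-adjacency) not contained in the corresponding pseudo-disks. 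I would organize the upper bound as: (i) fix the five shattered vertices and, for each relevant subset, a witnessing ball and an internal shortest path avoiding the complementary vertices; (ii) show two disjoint-endpoint paths among these must produce a local crossing pattern; (iii) apply Lemma~\ref{lem:intersect-path} to obtain shortcut paths; (iv) combine the resulting hop-length inequalities with the radii of the witnessing balls to contradict one of the ``exactly this subset'' conditions via an integrality parity argument.

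Finally I would remark that the same proof goes through verbatim for the distance encoding set system of~\cite{lw2024} rather than the ball system, since both are built from hop distances and the only graph-theoretic facts used are Lemmas~\ref{lem:pseudo-even}–\ref{lem:intersect-path}, none of which reference balls specifically; this is what lets us state the VC-dimension-$4$ bound for both set systems uniformly. And since the proof is purely topological — it never uses that the objects are round — it applies to arbitrary pseudo-disk graphs, whereas dropping the pseudo-disk (two-crossings) hypothesis breaks it, as the $\Omega(\log n)$ example with rotated equilateral triangles shows.
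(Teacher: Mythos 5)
Your high-level skeleton matches the paper's: exhibit the four-point shattering of Figure~\ref{fig:4pt-shatter} for the lower bound, and for the upper bound assume five shattered vertices, build $\binom{5}{2}$ pair-paths through the witnessing ball centers $v_{ij}$, observe they realize a topological $K_5$, force a crossing, extract a local crossing pattern, and hit Lemma~\ref{lem:intersect-path} against the radii $r_{ij}$. But the part you explicitly flag as ``where most of the work goes'' --- upgrading an odd number of geometric crossings between two curves (guaranteed by Hanani--Tutte, which you only allude to as ``parity reasons'') into an actual local crossing pattern --- is precisely what you leave unresolved, and it is the technical heart of the paper's proof. Two proper curves in a pseudo-disk drawing can cross many times without producing four mutually related disks, so ``a crossing of two drawn paths yields a local crossing pattern'' is not true as stated. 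The paper's fix is specific and non-obvious: it \emph{re-arranges} every representative point to coincide with the exit point of the preceding proper curve (sliding beads along the necklace), so that each representative sits inside at least two consecutive pseudo-disks of its path; it then partitions each path into \emph{multi-covered} segments (inside $\geq 2$ consecutive disks) and \emph{single-covered} segments, argues a crossing on a multi-covered segment immediately yields a local crossing pattern, and for crossings confined to single-covered segments uses the oddness of the total crossing count to isolate a pair of single-covered pieces crossing oddly, then invokes Lemma~\ref{lem:pseudo-even} to force one endpoint of one piece inside the other's pseudo-disk --- which again yields a local crossing pattern. None of this machinery appears in your sketch, and without it the ``must produce a local crossing pattern'' step in your item~(ii) does not follow.

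Two smaller points. You attribute a ``none of the points sits inside another's pseudo-disk'' hypothesis to Lemma~\ref{lem:intersect-path}, but that hypothesis belongs to Lemma~\ref{lem:intersect-pseudo}; Lemma~\ref{lem:intersect-path} needs only the local crossing pattern itself, and the paper's proof of Theorem~\ref{thm:distance-VC} in fact invokes Lemma~\ref{lem:pseudo-even} directly on the single-covered pieces rather than routing through Lemma~\ref{lem:intersect-pseudo}. Also, describing $P(v_i,v_j)$ as ``a shortest path inside the ball'' obscures the detail that matters: the path is the concatenation of a shortest path $v_i\leadsto v_{ij}$ with a shortest path $v_{ij}\leadsto v_j$, and Lemma~\ref{lem:intersect-path} is applied to the half-paths from the center, so the contradiction is of the form $d(v_{ij}, v_k)\leq r_{ij}$, i.e.\ a forbidden vertex landing in the witnessing ball --- not a bound on $|P(v_i,v_j)|$ itself. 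Your concluding remarks about the argument transferring to the distance-encoding set system and being purely topological (hence pseudo-disk--general) are correct and match the paper.
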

\begin{proof}
We just need to show that the largest set that can be shattered by the balls is $4$. 
Equivalently, we show that any set of $5$ vertices cannot be shattered. For five vertices $a, b, c, d, e$, we assume that they can be shattered; that is, for any subset $S\subseteq \{a, b, c, d, e\}$, there is a vertex $v_{S}$ with radius $r_{S}$ which includes all vertices of $S$ but not vertices in $\{a, b, c, d, e\}\setminus S$.  
We argue for a contradiction.

\begin{figure}[t]
\centering
\includegraphics[width=0.35\linewidth]{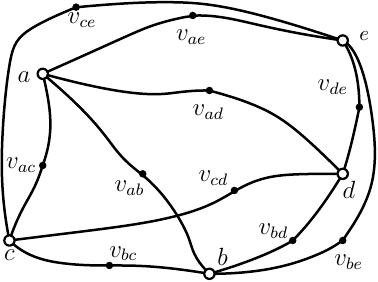}
\caption{$P(v_{ab},b)$ and path $P(v_{cd}, c)$ intersect.}
\label{fig:5pt}
\end{figure}

For each pair among $a, b, c, d, e$, we take the ball that separates this pair from the rest of the points. 
For example, the ball centered at $v_{ab}$ with radius $r_{ab}$ includes $a, b$ but not $c, d, e$. Thus we find a path $P(a, b)$ that is composed of the shortest path from $a$ to $v_{ab}$ and the shortest path from $v_{ab}$ to $b$. Similarly, we define paths for all  $5 \choose 2$  pairs of vertices. This becomes a graph $G$ that has $K_5$ as a minor. Thus $G$ is not planar. 
By Hanani–Tutte theorem~\cite{Hanani1934,Tutte1970,Schaefer2013-bf}, every drawing of $G$ in the plane contains a pair of paths $P(a, b)$ and $P(c, d)$, not sharing endpoints, that cross each other an odd number of times.

Let $G$ be a graph with vertices as representative points of the pseudo-disks involved and edges as proper curves connecting two neighboring pseudo-disks on the $5 \choose 2$ paths $P(x, y)$ with $x, y\in \{a, b, c, d, e\}$. 
We now re-arrange the representative points of the pseudo-disks and the proper curves (edges) of this graph $G$, but keep exactly the same planar drawing. Basically, for a path $P(x, y)$ with $x, y\in \{a, b, c, d, e\}$ and $x$ lexicographical earlier than $y$, keep the same drawing of the path $P(x, y)$ but we move the representative point of any pseudo-disk $z$ on $P(x, y)$, $z\neq x$ to be the exit point of the proper curve $\pi(z', z)$ with $z'$ as the preceding pseudo-disk of $z$ on $P(x, y)$. Essentially the representative point $z$ is just shifted forward along the path $P(x, y)$ to be on the boundary of the previous pseudo-disk --- like moving beads along a necklace. 
For a pseudo-disk $x\in \{b, c, d, e\}$, initially the paths $P(x, y)$ with different $y$ are joined at the representative point of $x$. Now they are still joined at the shifted representative point $x$, which is moved to the last exit point on path $P(a, x)$.
Last, we move the representative point $a$ to be the entrance point on the proper curve connecting $a$ with the next pseudo-disk on $P(a, b)$. 
After this re-arrangement, any representative point stays in at least two pseudo-disks. See Figure~\ref{fig:rearrange}.

\begin{figure}[t]
\centering
\includegraphics[width=0.92\linewidth]{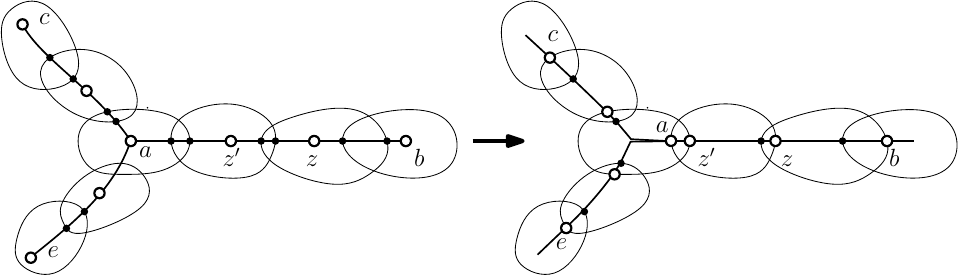}
\caption{Re-arrangement (Right) of the representative points (Left) in $G$.}
\label{fig:rearrange}
\end{figure}

Now for each path $P(x, y)$ of graph $G$, we partition it into pieces: the pieces that stay within at least two neighboring pseudo-disks on $P(x, y)$ are called \EMPH{multi-covered}, and the (open) pieces that are only inside one pseudo-disk are called \EMPH{single-covered}.   Every single-covered piece has two endpoints each staying in at least two pseudo-disks. And with the re-arrangement, each representative point is one of such endpoints. 
We emphasize that single-covered pieces may still intersect pseudo-disks outside~$P(x, y)$.

\medskip
To finish the proof, we prove two claims. First, in a planar drawing of $G$ there will be crossings of two paths that lead to a local crossing pattern, specifically, there are two neighboring pseudo-disks on each path and one pseudo-disk has edges to all the other three pseudo-disks. Second, this local crossing pattern leads to a contradiction.   

First we prove the second claim. 
Let's consider two paths $P(a,b)$ and $P(c, d)$. If there is a local crossing pattern anywhere on the two paths, we
have a contradiction. 
Indeed, say there are four distinct vertices $a', b', c', d'$ with $a', b'$ on $P(a,b)$ and $c', d'$ on $P(c, d)$ with one vertex having edges to all the other three vertices.
Without loss of generality, suppose $a', b'$ stay on path $P(v_{ab},b)$ and $c', d'$ stay on path $P(v_{cd}, c)$. 
By Lemma~\ref{lem:intersect-path}, either $d(v_{ab}, c)\leq d(v_{ab}, b)\leq r_{ab}$
or $d(v_{cd}, b)\leq d(v_{cd}, c)\leq r_{cd}$.
This means that either $c$ is in the ball $B(v_{ab}, r_{ab})$, or $b$ is in the ball $B(v_{cd}, r_{cd})$; either way, a contradiction.

Now we argue that in a planar drawing of $G$, there must be some local crossing patterns. 
First, any intersection $p$ of two path $P_1, P_2$ of $G$ (guaranteed by the Hanani-Tutte theorem) on a multi-covered piece will lead to a local crossing pattern --- if $p$ stays in two neighboring pseudo-disks $D_1$ and $D_2$ on path $P_1$ and one pseudo-disk $D'_1$ on path $P_2$, $D'_1$ has edges to both $D_1$ and $D_2$ from $P_1$ and an edge to one neighboring pseudo-disk on $P_2$. This is a local crossing pattern we are looking for.
Now we can assume that all crossings of paths in $G$ happen between pairs of single-covered pieces. 
Since the total number of crossings  between $P_1$ and $P_2$ is odd, there must be at least one pair of single-covered pieces $\pi_1$ and $\pi_2$ (not sharing endpoints) that intersect an odd number of times. 
Suppose $\pi_1$ is in pseudo-disk $D_1$ of path $P_1$ and $\pi_2$ is in pseudo-disk $D_2$ of path $P_2$. By Lemma~\ref{lem:pseudo-even}, if the endpoints of $\pi_1$ are outside of $D_2$ and the endpoints of $\pi_2$ are outside of $D_1$, we have a contradiction. Therefore at least one endpoint say $w$ of $\pi_1$ is inside $D_2$. 
Point $w$ is inside three pseudo-disks: $D_1$, $D'_1$ on path $P_1$, and $D_2$ on path $P_2$. See Figure~\ref{fig:crossing}.
Thus $D_2$ on path $P_2$ has edges to two neighboring pseudo-disks on $P_1$. $D_2$ also has a neighboring disk on $P_2$. This is a local crossing pattern. 

\begin{figure}
\centering
\includegraphics[width=0.35\linewidth]{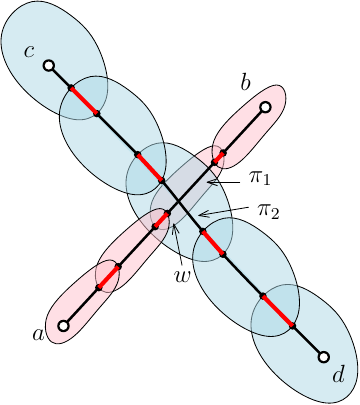}
\caption{We highlight in red the multi-covered segments on a path that stay within two or more pseudo-disks along the path. The intersection happens between $\pi_1$ on path $P(a, b)$ and $\pi_2$ on $P(c, d)$ with one endpoint $w$ of $\pi_1$ inside a pseudo-disk $D_2$ on $P(c, d)$. This triggers a local crossing pattern.}
\label{fig:crossing}
\end{figure}
Last, Figure~\ref{fig:4pt-shatter} is an example of $4$ points that can be shattered. 
Therefore the VC-dimension of a pseudo-disk graph is exactly $4$.
\end{proof}

\paragraph{Distance encoding VC-dimension.}
Li and Parter~\cite{Li2019-li} defined a distance encoding function in a graph and used it for computing diameter in a planar graph. 
Later Le and Wulff-Nilsen~\cite{lw2024} used a slighly revised one. We take the definition by  Le and Wulff-Nilsen~\cite{lw2024} and argue that this set defined on a unit-disk graph also has low VC-dimension.

\begin{definition}
Let $M\subseteq \reals$ be a set of real numbers. Let $S=\{s_0, s_1, \ldots, s_{k-1}\}$ be a sequence of $k$ vertices in an undirected weighted graph $G=(V, E)$. 
For every vertex $v$ define 
\[
\EMPH{$X(v)$} \coloneqq \Set{\big. (i, \Delta) : 1\leq i \leq k-1, \Delta \in M, d(v, s_i)-d(v, s_0)\leq \Delta }.
\]
Let $\EMPH{$\ecd$} \coloneqq \{X(v) : v\in V\}$ be a set of subsets of the ground set $[k-1]\times M$.
\end{definition}

The set $\ecd$ is a set of ``ranges'' where each range $X(v)$ corresponds to a vertex $v \in V$, which captures the distance to vertices in $S$ compared to the distance to $s_0$. We note that $s_{i}$ and $s_{i+1}$ may not be adjacent in $G$.

\begin{theorem}\label{thm:LW-dim}
Let $S$ be as any set of vertices of a pseudo-disk graph $G$ and $M \subseteq \reals$ be any set of real numbers. $\ecd$ has VC-dimension at most $4$.  
\end{theorem}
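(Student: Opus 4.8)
\textbf{Proof plan for \Cref{thm:LW-dim}.}

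The plan is to reduce the distance-encoding VC-dimension bound to the distance VC-dimension bound already established in \Cref{thm:distance-VC}, by mimicking the reduction used for minor-free graphs but replacing every graph-minor argument with the purely topological local-crossing-pattern machinery developed in \Cref{lem:intersect-path}. Concretely, suppose for contradiction that a set of five vertices $v_1,\dots,v_5$ is shattered by $\ecd$. For every subset $T\subseteq\{1,\dots,5\}$ there is a range $X(u_T)$ realizing exactly $\{v_i : i\in T\}$; membership of $v_i$ in $X(u_T)$ is governed by an inequality of the form $d(u_T,s_{j_i}) - d(u_T,s_0) \le \Delta_i$ for suitable indices and thresholds. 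The first step is to convert each such realized inequality into a pair of concrete paths in the pseudo-disk graph: the path through $u_T$ to $s_{j_i}$ (concatenation of shortest paths $u_T\leadsto s_{j_i}$) and the path through $u_T$ to $s_0$, whose hop lengths encode $d(u_T,s_{j_i})$ and $d(u_T,s_0)$ exactly since the graph is unweighted. This gives us a bounded family of paths, all passing through the $u_T$'s and the $s_j$'s, so we again obtain a drawn graph $G'$ in the plane (using the representative-point / proper-curve drawing from the previous subsection) that is forced to be non-planar, because the shattering of five points packs in enough connections to exhibit a $K_5$-minor among the relevant path endpoints.

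The second step is the heart of the argument: apply the Hanani--Tutte theorem to $G'$ to extract two drawn paths that cross an odd number of times, then run exactly the bead-sliding re-arrangement and multi-covered/single-covered decomposition from the proof of \Cref{thm:distance-VC} to locate a local crossing pattern (\Cref{lem:intersect-path} does not assume shortest paths, which is precisely why it applies here). From the local crossing pattern, \Cref{lem:intersect-path} produces a shortcut path: either a path $P'(a,c)$ of hop length at most $|P(c,d)|$ or a path $P'(b,d)$ of hop length at most $|P(a,b)|$, and symmetrically the second alternative bound. Translating back through the encoding, a shortcut of this kind forces one of the inequalities $d(u_T,s_{j_i}) - d(u_T,s_0)\le\Delta_i$ to flip its truth value relative to some other realized range --- i.e., it certifies that two of the $2^5$ subsets that were supposed to be distinctly realized actually coincide in the relevant coordinate, contradicting shattering. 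The careful bookkeeping here is choosing which five of the realized ranges to build paths from so that the forced $K_5$-minor and the resulting contradiction line up; I expect to pick the five singletons $\{i\}$ together with enough pair-sets, exactly as the distance VC proof used the $\binom{5}{2}$ pairs.

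\textbf{Main obstacle.} The delicate point is that the distance-encoding inequality is a \emph{difference} of two distances rather than a single distance bounded by a radius, so a single shortcut path does not immediately contradict anything --- shortening $d(u_T,s_{j_i})$ is harmless unless we simultaneously control $d(u_T,s_0)$. The resolution is to use \emph{both} conclusions of \Cref{lem:intersect-path} (the two ``either/or'' statements it proves), which together pin down the hop lengths of the shortcut paths from both sides and let us compare two ranges $X(u_T)$ and $X(u_{T'})$ on the coordinate $(j_i,\Delta_i)$ and derive $v_i\in X(u_T)\iff v_i\in X(u_{T'})$ for a pair $T\neq T'$ that shattering requires to differ exactly there. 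Making this comparison airtight --- in particular handling the role of the common basepoint $s_0$ and the fact that $M$ is an arbitrary real set, so the thresholds $\Delta$ carry no structure --- is where the real work lies; everything else is a transcription of the topological argument already carried out for \Cref{thm:distance-VC}.
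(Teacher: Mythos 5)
Your high-level strategy (mimicking \Cref{thm:distance-VC} via Hanani--Tutte and the local-crossing-pattern machinery of \Cref{lem:intersect-path}) is the right one, but the path construction you propose is not the one that works, and the ``main obstacle'' you flag is an artifact of that choice rather than an intrinsic difficulty. First, a setup point: the ground set of $\ecd$ is $[k-1]\times M$, so a shattered set consists of five \emph{pairs} $(s_i,\Delta_i)$, and after a short WLOG argument (which the paper carries out first) one may assume the five indices $s_1,\dots,s_5\in S$ are distinct. The paper then uses \emph{only} the $\binom{5}{2}$ two-element subsets of the shattered set: for each pair $\{i,j\}$, take the realizing vertex $v_{ij}$ with $X(v_{ij})\cap Y=\{(s_i,\Delta_i),(s_j,\Delta_j)\}$ and form the path $P_{ij}$ from $s_i$ through $v_{ij}$ to $s_j$ as a concatenation of two shortest paths. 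These $\binom{5}{2}$ paths immediately form a topological $K_5$ on $\{s_1,\dots,s_5\}$. There is no bookkeeping to do about ``which realized ranges to pick''; the singletons play no role, and there are no paths to $s_0$ in the drawing at all.

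Because no path to $s_0$ is ever drawn, your obstacle about having to simultaneously control $d(u_T,s_0)$ simply does not arise. The Hanani--Tutte step plus \Cref{lem:intersect-path} produce a shortcut of the form $d(v_{ab},s_c)\le d(v_{ab},s_b)$ (or symmetrically $d(v_{cd},s_b)\le d(v_{cd},s_c)$). Since $(s_b,\Delta_b)\in X(v_{ab})$ already gives $d(v_{ab},s_b)\le \Delta_b + d(v_{ab},s_0)$, the shortcut yields $d(v_{ab},s_c)\le \Delta_b + d(v_{ab},s_0)$, i.e.\ $(s_c,\Delta_b)\in X(v_{ab})$ --- the $s_0$-term sits on both sides of a chain of inequalities anchored at the \emph{same} vertex $v_{ab}$ and cancels for free. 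So the ``difference of two distances'' aspect of the encoding is handled by the choice of paths, not by juggling shortcuts to $s_0$; your proposed construction with paths from $u_T$ to both $s_{j_i}$ and $s_0$ would not even give a $K_5$ drawing (five singleton paths to $s_0$ form a planar spider), which is why you were unsure how to force non-planarity. Drop $s_0$ and the singletons, build $P_{ij}=P(s_i,v_{ij})\cup P(v_{ij},s_j)$, and the rest of the argument really is a transcription of \Cref{thm:distance-VC}.
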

\begin{proof}
The proof is by contradiction. Suppose there is a set $Y$ of size $5$ that is shattered by $\ecd$. Without loss of generality let $Y=\{(s_1, \Delta_1), \cdots, (s_5, \Delta_5)\}$. 

By definition, that $Y$ is shattered, \emph{no two tuples share the same vertex}, i.e., $s_i \neq s_j$, for $1\leq i, j \leq 5$. If otherwise, suppose we have $s_1=s_2$. Without loss of generality, suppose $\Delta_1 < \Delta_2$. Since $Y$ is shattered, that is a set $X(v)$ such that $X(v) \cap Y=\{ (s_1, \Delta_1)\}$. Therefore, $d(s_1, v)\leq d(s_0, v)+\Delta_1 <   d(s_0, v)+\Delta_2$, thus $(s_1, \Delta_2)$ is also inside $X(v) \cap Y$, which is a contradiction. 

Define $v_{ij}$ to be the vertex such that $\{(s_i, \Delta_i), (s_j, \Delta_j)\}=X(v_{ij})\cap Y$. 
We construct a path $P_{ij}$ which connects from $s_i$ to $v_{ij}$ via a shortest path $P(s_i, v_{ij})$ and from $v_{ij}$ to $s_j$ via another shortest path $P(s_j, v_{ij})$. 
Now consider the five vertices $s_1, s_2, \cdots, s_5$; the paths $\{P_{ij} : \forall 1\leq i, j\leq 5\}$ topologically form a complete graph $K_5$ which is not planar. 
We will argue a contradiction in nearly the same way as in the proof of \Cref{thm:distance-VC}. Suppose paths $P_{ab}$, $P_{cd}$ intersect. It must be that one of the shortest paths,  $P(s_a, v_{ab})$ and $P(s_b, v_{ab})$, intersects one of the shortest paths $P(s_c, v_{cd})$, $P(s_d, v_{cd})$. Without loss of generality, let's assume that $P(v_{ab}, s_b)$ and path $P(v_{cd}, s_c)$ intersect.
By Lemma~\ref{lem:intersect-path}, if $v_{ab}, s_b, v_{cd}, s_c$ are pairwise disjoint, either $d(v_{ab}, s_c)\leq d(v_{ab}, s_b)\leq \Delta_b+d(v_{ab}, s_0)$
or $d(v_{cd}, s_b)\leq d(v_{cd}, s_c)\leq \Delta_c+d(v_{cd}, s_0)$. This means that either $(s_c, \Delta_b)$ is in the set $X(v_{ab})$ or $(s_b, \Delta_c)$ is in the set $X(v_{cd})$. This leads to a contradiction. Again we will need to handle the boundary cases when some pairs in $v_{ab}, s_b, v_{cd}, s_c$  intersect each other, this part is the same as in the proof of \Cref{thm:distance-VC}.
\end{proof}

\subsection{Unit-Disk and Pseudo-Disk Graphs with Dummy Vertices}

Let $G$ be an (unweighted) intersection graph of pseudo-disks. Let $\mathcal{C}$ be a set of \emph{vertex-disjoint} cliques in~$G$. For every clique $C\in \mathcal{C}$, we add a \EMPH{dummy vertex $s_C$}, remove all edges of $C$, and add edges of weight $1/2$ from $s_C$  to all vertices in $s_C$. In other words, we replace $C$ by a star centered at $s_C$ where all the edges of the star have weight $1/2$. 
Let \EMPH{$G_{\mathcal{C}}$} be the \EMPH{$\mathcal{C}$-dummy graph} obtained from $G$ by applying the star replacement to every clique in $\mathcal{C}$. 
Note that $G_{\mathcal{C}}$ is an edge-weighted graph where an edge has weight $1$ if it is in $G$ and weight $1/2$ otherwise. 
We are interested in a dummy graph because it will show up when we compute a $+1$-approximation of the diameter in \Cref{sec:plusone}. Here we  show that the VC dimension results in \Cref{thm:distance-VC} and \Cref{thm:LW-dim} for $G$ also hold for the dummy graph. 

\begin{theorem}\label{thm:dummy-graph-VC} Let $G$ be a pseudo-disk graph and $\mathcal{C}$ be a set of vertex-disjoint cliques of $G$. Let $G_{\mathcal{C}}$  be the $\mathcal{C}$-dummy graph obtained from $G$. Then:
\begin{itemize}
    \item The VC-dimension of $\mathcal{B}(G_{\mathcal{C}})$ is at most 4.
    \item Let $S$ be as any set of vertices of $G_{\mathcal{C}}$ and $M \subseteq \reals$ be any set of real numbers. Then $L_{G_{\mathcal{C}},M}(S)$ has VC-dimension at most $4$.  
\end{itemize}
\end{theorem}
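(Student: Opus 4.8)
The plan is to reduce the claim about the dummy graph $G_{\mathcal{C}}$ to the two theorems already proven for genuine pseudo-disk graphs (\Cref{thm:distance-VC} and \Cref{thm:LW-dim}). The natural way to do this is to realize $G_{\mathcal{C}}$, up to a rescaling of distances, as a pseudo-disk graph on a \emph{larger} vertex set, so that distances in $G_{\mathcal{C}}$ are faithfully encoded. Concretely, I would scale all weights by $2$ so that original edges have weight $2$ and star edges have weight $1$; then subdivide every original (weight-$2$) edge of $G$ with a midpoint vertex, giving an unweighted graph $H$ whose hop-distances between original vertices and dummy vertices are exactly twice the $G_{\mathcal{C}}$-distances. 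The key geometric step is to exhibit $H$ as a pseudo-disk graph: keep the original pseudo-disks (slightly shrunk so that cliques of $\mathcal{C}$ no longer form cliques but each pseudo-disk in $C$ still meets a tiny pseudo-disk placed at the dummy vertex $s_C$), place a small pseudo-disk at each edge-subdivision vertex that meets exactly its two endpoints' pseudo-disks, and place a small pseudo-disk at each $s_C$ meeting exactly the (shrunk) pseudo-disks of $C$. One must check the resulting family is still a family of pseudo-circles — each pair of boundary curves crosses at most twice — which holds because the new disks are arbitrarily small and can be placed in general position near a single point of the relevant intersection region.

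Once $H$ is a pseudo-disk graph, \Cref{thm:distance-VC} gives that $\mathcal{B}(H)$ has VC-dimension at most $4$. To transfer this to $\mathcal{B}(G_{\mathcal{C}})$, I would observe that a ball $B_{G_{\mathcal{C}}}(v, r)$ centered at an original-or-dummy vertex $v$, intersected with the set $P$ of original-plus-dummy vertices, equals $B_H(v, 2r) \cap P$ (using $d_H = 2 d_{G_{\mathcal{C}}}$ on $P\times P$, where weights are half-integers so $2r$ ranges over nonnegative integers as $r$ ranges over the half-integers). Since the shattered set in the definition of VC-dimension of $\mathcal{B}(G_{\mathcal{C}})$ consists of vertices of $G_{\mathcal{C}}$, i.e. of $P$, any set shattered by $\mathcal{B}(G_{\mathcal{C}})$ is shattered by $\{B \cap P : B \in \mathcal{B}(H)\}$, whose VC-dimension is at most that of $\mathcal{B}(H)$, namely $4$. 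The same bound for the distance-encoding system $L_{G_{\mathcal{C}}, M}(S)$ follows analogously: for $v \in P$ and $s_i \in S \subseteq P$ we have $d_{G_{\mathcal{C}}}(v, s_i) - d_{G_{\mathcal{C}}}(v, s_0) = \tfrac12\big(d_H(v,s_i) - d_H(v, s_0)\big)$, so $X^{G_{\mathcal{C}}}_{S,M}(v) = X^{H}_{S, 2M}(v)$ under the identification $(i,\Delta) \leftrightarrow (i, 2\Delta)$, and \Cref{thm:LW-dim} applied to $H$ with the scaled real set $2M$ gives VC-dimension at most $4$.

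An alternative, perhaps cleaner, route avoids subdivision entirely: one can work directly with the topological machinery of \Cref{lem:intersect-path}, whose proof only used that hop-paths in a pseudo-disk graph admit a planar drawing via proper curves and that certain crossing patterns force the existence of short paths by integrality of hop-lengths. In $G_{\mathcal{C}}$, every shortest path between two $P$-vertices that uses a dummy vertex $s_C$ either passes through $s_C$ (contributing weight $1$ for the two star edges) or uses an original edge; replacing each two-step detour $u \to s_C \to w$ by the direct weight-$2$ "super-edge" recovers an integer-weighted structure, which is exactly the $H$ picture above. I would present the subdivision argument since it lets us cite \Cref{thm:distance-VC} and \Cref{thm:LW-dim} as black boxes rather than re-auditing their proofs. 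The main obstacle is the geometric verification that $H$ is genuinely a pseudo-disk graph — in particular, after shrinking the original pseudo-disks so $\mathcal{C}$-cliques are destroyed, we must still ensure every pair of disks in such a clique \emph{mutually} intersects the small disk at $s_C$ but the shrunk disks themselves may or may not pairwise intersect; this is fine because $s_C$'s disk can be taken as a point-like region inside $\bigcap_{D \in C} D$ (which is nonempty and connected since $C$ was a clique of pseudo-disks, by the property cited after \Cref{clm:pseudodisk-inter}), and all newly added curves are pairwise-and-with-old-curves in at-most-two-point crossing position by an elementary perturbation argument.
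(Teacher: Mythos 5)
Your primary construction---rescale by $2$, subdivide every weight-$2$ edge, and realize the resulting unweighted graph $H$ as a pseudo-disk graph---fails at the realization step: $H$ is in general not a pseudo-disk graph, and in fact not even a string graph. The culprit is a classical fact: the full $1$-subdivision $S(K)$ of a non-planar graph $K$ is not a string graph. (Given any string representation of $S(K)$, the original-vertex curves $C_v$ are pairwise disjoint, and the subdivision-vertex curves $D_{uv}$ are pairwise disjoint and meet only $C_u$ and $C_v$; concatenating a piece of $C_u$, then $D_{uv}$, then a piece of $C_v$ yields a drawing of $K$ in which every pair of \emph{independent} edges crosses zero times, and strong Hanani--Tutte then forces $K$ to be planar.) Now take $G = K_5$, realized by five pairwise-overlapping disks, and $\mathcal{C} = \{\{v_1,v_2\}\}$. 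Then your $H$ is exactly $S(K_5)$, with the dummy $s_C$ playing the role of the subdivision vertex of $v_1v_2$. More generally, whenever $G$ contains a $K_5$ with no two of its five vertices sharing a clique of $\mathcal{C}$, $H$ contains $S(K_5)$ as an \emph{induced} subgraph, so $H$ is not a string graph, hence not a pseudo-disk graph, and \Cref{thm:distance-VC} and \Cref{thm:LW-dim} cannot be invoked on it. The ``arbitrarily small disks placed in general position'' step is precisely where this breaks: once a non-planar adjacency pattern must be encoded, you cannot simultaneously make the subdivision disks pairwise disjoint, keep the shrunk original disks pairwise disjoint, and have each subdivision disk meet exactly its two intended neighbors. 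A secondary issue: placing $s_C$'s disk in $\bigcap_{D\in C} D$ presupposes that a pseudo-disk clique has a common point, which holds for round disks (Helly) but not for general pseudo-disks.

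The paper's actual proof is your ``alternative, perhaps cleaner'' route carried out in full. It never embeds $G_{\mathcal{C}}$ into a pseudo-disk graph; instead it reruns the topological $K_5$/Hanani--Tutte argument of \Cref{thm:distance-VC} directly on $G_{\mathcal{C}}$, interpreting a shortest path to a dummy $a$ as a genuine path in $G$ to some $a'\in C$ followed by the notional half-weight edge $a'a$, so that all geometric crossings occur among proper curves in the original pseudo-disk arrangement, where \Cref{lem:pseudo-even} and \Cref{lem:intersect-path} still apply. One additional case then has to be handled that your sketch passes over: a realized path may pass ``through'' another clique $C'$ while touching none of its vertices, yet separate the vertices of $C'$; the paper shows this too forces a local crossing pattern, using vertex-disjointness of $\mathcal{C}$. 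Any black-box reduction would require a target class that provably satisfies \Cref{thm:distance-VC} and \Cref{thm:LW-dim} \emph{and} contains $H$; pseudo-disk graphs is not such a class.
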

\begin{proof}
We focus on proving the first result, which is bounding the VC-dimension of  $\mathcal{B}(G_{\mathcal{G}})$. The proof for the second result follows the same line. 
  
  We assume as before that there are five dummy vertices $a, b, c, d, e$ that can be shattered. Then we try to argue contradiction. We use almost the same argument as in \Cref{thm:distance-VC}. First, there is a ball centered on a vertex $v_{ab}$ with radius $r_{ab}$ includes $a, b$ but not $c, d, e$.  We take a path $P(a, b)$ as the shortest path $P(a, v_{ab})$ from $a$ to $v_{ab}$ and $P(v_{ab}, b)$ from $v_{ab}$ to $b$. Since $a, b$ are dummy nodes, the paths $P(a, v_{ab})$ can be realized by a shortest path connecting a vertex $a'$ in the clique represented by $a$ and a dummy edge connecting $a'$ to $a$ of length $1/2$. Further, we can find a representative point for $a'$ inside the pseudo-disk of $a'$, which is on the boundary of the next pseudo-disk of $a'$ along $P(a, b)$. We define the same for all pairs of vertices of $\{a, b, c, d, e\}$. These paths together with the five dummy nodes form a graph $K_5$, which has five dummy stars and ${5 \choose 2}$ paths that are also realized as geometric curves in the plane. We take $P(a', b')$ to be the subpath of $P(a, b)$ without the two dummy edges $aa'$ and $bb'$.  If two of such paths intersect (i.e., no dummy vertices/edges/cliques are involved), we use the same argument as in the proof of \Cref{thm:distance-VC}.

\begin{figure}
\centering
\includegraphics[width=0.85\linewidth]{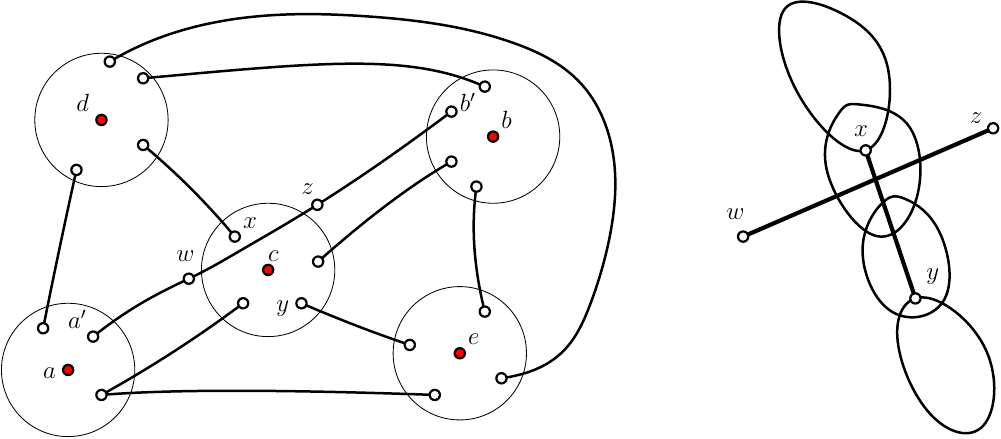}
\caption{Proof of distance VC-dimension of pseudo-disk graphs. }
\label{fig:dummy}
\end{figure}

Now we assume otherwise.  Recall that the cliques are disjoint --- no two cliques have the same vertex.  First, if any path say $P(a', b')$ involves a vertex $x$ in a different clique $c$, we immediately have a contradiction, since $d(c, v_{ab}) \leq d(a, v_{ab})$. 
The only possibility left is that one of the path $P(a', b')$ must ``go through'' another clique say the one represented by $c$, does not have any vertex of the clique $c$ on the path, and separates the vertices of the clique $c$ on both sides.
See \Cref{fig:dummy}  for an example. Specifically, we have a pair of vertices $x, y$ in the clique of $c$, staying on different side of path $P(a', b')$. Specifically, there is a pair of vertices $w, z$ on this path, none of them in the clique of $c$, and that the proper curve connecting the representative points of $w, z$ intersects (an odd number of times) with the proper curve connecting $x, y$. By the same operation as in \Cref{thm:distance-VC}, the representative points $x, y, w, z$ all stay inside two pseudo-disks. Now there would be an easy case study similar as before. Either none of $x, y, w, z$  stays inside another pseudo-disk of $x, y, w, z$---in which case we have a local crossing pattern due to \Cref{lem:intersect-pseudo}---or one of $x, y, w, z$  stays within the other pseudo-disks $x, y, w, z$, but this again gives a local crossing pattern: suppose $w$ stays within pseudo-disk $x$, then $x$ has edges to $y, w$,
as well as the upstream neighbor (towards $a$) of $w$ on path $P(a, b)$. Whenever we identify a local crossing pattern, the rest of the argument follows the same as \Cref{thm:distance-VC}.
\end{proof}

\section{\texttt{+}2-Approximation for Diameter in Unit-Disk Graphs}\label{sec:diameter-UDG}

As we discussed in \Cref{subsec:ideas}, we will combine the distance encoding with the clique-based $r$-division in \Cref{lm:clique-division}, along the line of Le and Wulff-Nilsen~\cite{lw2024}. 

\paragraph{Distance encoding.} 
Fix a sequence of vertices $S = \langle s_0,s_1,\ldots, s_{k-1}\rangle$. Following previous work~\cite{FMW21,lw2024}, we define a \EMPH{pattern} of $v$ with respect to $S$, denoted by \EMPH{$\patt_v$}, to be a $k$-dimensional vector where:
\begin{equation}\label{eq:pattern-def}
    \EMPH{$\patt_v[i]$} \coloneqq d_G(v,s_i)- d_G(v,s_0) \quad\text{for every $0\leq i \leq k-1$.}
\end{equation}
Note that $\patt_v[0] = 0$ by definition.  The following lemma is from~\cite{lw2024,FMW21}.

\begin{lemma}[\cite{lw2024,FMW21}]
\label{lm:pattern-bound-undir} 
Let $G$ be an unweighted graph and $S$ be a sequence of vertices. Suppose that $\ecd$ has VC-dimension at most $d$ for any $M$. Let $\EMPH{$\Patt$} \coloneqq \{\patt_v: v\in V\}$ be the set of all patterns with respect to $S$. Suppose that $d_G(s_i,s_0)\leq \Delta$ for every $i\in [k-1]$, then $|\Patt| = O((k\cdot \Delta)^{d})$. 
\end{lemma}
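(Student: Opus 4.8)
The plan is to bound the number of distinct patterns by relating $\Patt$ to the set system $\ecd$ and invoking the Sauer–Shelah lemma. First I would observe that since $G$ is unweighted and $d_G(s_i,s_0)\le\Delta$, the triangle inequality gives $|\patt_v[i]| = |d_G(v,s_i) - d_G(v,s_0)| \le d_G(s_i,s_0) \le \Delta$ for every $i$, and moreover each $\patt_v[i]$ is an integer. Hence every coordinate of $\patt_v$ takes one of at most $2\Delta+1$ integer values, so a priori $|\Patt| \le (2\Delta+1)^{k}$; the point of the lemma is that the VC-dimension bound replaces the exponent $k$ by $d$ (up to the $k\cdot\Delta$ base), which is a huge saving.

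The key step is to choose the right ground set $M$. I would take $M \coloneqq \{-\Delta, -\Delta+1, \ldots, \Delta-1, \Delta\}$, the set of all integers of absolute value at most $\Delta$, so that $|M| = 2\Delta+1$. Then the ground set of $\ecd$ is $[k-1]\times M$, which has size $(k-1)(2\Delta+1) = O(k\Delta)$. The crucial claim is that the map $v \mapsto \patt_v$ factors through $v \mapsto X(v)$: that is, $\patt_v$ is determined by $X(v)$ (and vice versa). Indeed, knowing $X(v)$ means knowing, for every $i\in[k-1]$ and every integer $\Delta'\in M$, whether $d_G(v,s_i)-d_G(v,s_0)\le \Delta'$; since $\patt_v[i] = d_G(v,s_i)-d_G(v,s_0)$ is an integer in $[-\Delta,\Delta]$, its exact value is recovered as the least $\Delta'\in M$ for which the inequality holds (and $\patt_v[0]=0$ always). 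Conversely $\patt_v$ clearly determines $X(v)$. Therefore $|\Patt| = |\ecd|$. Finally, by hypothesis $\ecd$ has VC-dimension at most $d$, so by the Sauer–Shelah lemma the number of distinct sets in a set system over a ground set of size $m = O(k\Delta)$ is $O(m^d) = O((k\Delta)^d)$, which gives $|\Patt| = O((k\Delta)^d)$ as claimed.

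The main obstacle — really the only subtlety — is making sure the chosen $M$ is simultaneously large enough that every pattern value is captured (so that $v\mapsto X(v)$ is injective on the set of realized patterns) and small enough that $|M| = O(\Delta)$; the bound $|\patt_v[i]|\le\Delta$ from the triangle inequality together with integrality of distances in an unweighted graph is exactly what makes $M = \{-\Delta,\ldots,\Delta\}$ work. One should also note that $\Delta$ may be assumed to be a nonnegative integer here (distances are integers, so replacing $\Delta$ by $\lfloor\Delta\rfloor$ loses nothing), and that the "$O$" in the Sauer–Shelah bound hides a constant depending on $d$, which is consistent with the statement since $d$ is treated as a fixed constant (here $d=4$ by \Cref{thm:LW-dim}).
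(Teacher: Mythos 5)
Your proof is correct and is essentially the standard argument used in the cited works~\cite{lw2024,FMW21}: choose $M=\{-\Delta,\dots,\Delta\}$ so that (by the triangle inequality and integrality of distances in an unweighted graph) the maps $v\mapsto\patt_v$ and $v\mapsto X(v)$ determine one another, then apply the Sauer--Shelah lemma on the ground set $[k-1]\times M$ of size $O(k\Delta)$. The paper itself only cites this lemma without reproving it, and your argument matches the expected proof.
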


\paragraph{Computing approximate diameter.}
Our algorithm will be based on a clique-based $r$-clustering. Let $(\mathcal{R},\mathcal{C})$ be a clique-based $r$-clustering for a parameter $r$ to be chosen later. Recall that for each cluster $R\in \mathcal{R}$ the boundary vertices $\bdry R$ belong to $O(r)$ cliques in $\mathcal{C}$. 

For subgraph $R\in \mathcal{R}$, we define a sequence $\EMPH{$S_R$}$ $= \langle s_0,s_1,\ldots, s_{k_R} \rangle$ from all the clique representatives in $\rep(R)$. Note that $k_R = O(r)$ and $\rep(R) \subseteq \bdry R$ by \Cref{def:clique-r-division-2nd}. Since $R$ is connected,  by the triangle inequality, $d_G(s_i,s_0)\leq |V(R^{\circ})| + |\rep(R)| + 2  = O(r)$. For each vertex $u\in V$, we form a pattern $\patt_u$ with respect to $S_R$, and let $\EMPH{$\Patt_R$} \coloneqq \set{\patt_u : u \in V}$.  

Given a pattern $\patt_u$ of $u$ and a vertex $v\in G$, we want to estimate the distance $d_G(u,v)$ via $\patt_u$. 
This leads to the definition of distance $d(\patt,v)$ between a pattern $\patt$ and a vertex $v$:
\begin{equation}\label{eq:dis-patt-v}
    \EMPH{$d(\patt,v)$} \coloneqq \min_{i} \Set{\big. d_G(v,s_i) + \patt[i]}
\end{equation} 

Previous work~\cite{FMW21,lw2024} showed that if the sequence $S_R$  contains \emph{all} vertices of $\bdry R$, then $d_G(u,v) = d(\patt_u,v) + d_G(u,s_0)$. 
In our setting $S_R$ only contains a \emph{subset} of vertices of $\bdry R$, so recording $d(\patt,v)$ does not give us exact distances; 
however, we get a $+2$-approximation as shown by the following lemma.

\begin{lemma}
\label{lm:dist-via-pattern} 
Suppose that $\pi(u,v,G)\cap \bdry R\not=\varnothing$ where $\pi(u,v,G)$ is a shortest path from $u$ to $v$ in $G$. Let
\begin{equation}
\label{eq:approx-dist}
    \EMPH{$\tilde{d}_G(u,v)$} \coloneqq d_G(u,s_0) + d(\patt_u,v).
\end{equation}
Then, $d_G(u,v)\leq \tilde{d}_G(u,v)\leq d_G(u,v)+2$. 
\end{lemma}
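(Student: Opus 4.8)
The plan is to establish the two inequalities in $d_G(u,v)\le \tilde d_G(u,v)\le d_G(u,v)+2$ separately. The lower bound $d_G(u,v)\le \tilde d_G(u,v)$ should be essentially free: for every index $i$ we have $d_G(u,v)\le d_G(u,s_i)+d_G(s_i,v)$, and since $\patt_u[i]=d_G(u,s_i)-d_G(u,s_0)$ this rearranges to $d_G(u,v)\le d_G(u,s_0)+\bigl(d_G(v,s_i)+\patt_u[i]\bigr)$. Taking the minimum over $i$ on the right gives exactly $d_G(u,v)\le d_G(u,s_0)+d(\patt_u,v)=\tilde d_G(u,v)$. This uses only the triangle inequality and the definitions, so it is routine.

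For the upper bound I would use the hypothesis that some shortest path $\pi(u,v,G)$ meets $\bdry R$. Pick a vertex $w\in \pi(u,v,G)\cap\bdry R$; then $d_G(u,v)=d_G(u,w)+d_G(w,v)$ because $w$ lies on a shortest $u$–$v$ path. Now $w\in\bdry R$ means $w$ belongs to some clique $C\in\mathcal{C}(\bdry R)$, and the sequence $S_R$ contains the representative $x\in\rep(R)$ of that clique. Since $w$ and $x$ lie in a common clique of $G$ they are adjacent (or equal), so $d_G(w,x)\le 1$. Pick the index $i$ with $s_i=x$. Then by the triangle inequality $d(\patt_u,v)\le d_G(v,s_i)+\patt_u[i] = d_G(v,x)+d_G(u,x)-d_G(u,s_0)$, and using $d_G(v,x)\le d_G(v,w)+1$ and $d_G(u,x)\le d_G(u,w)+1$ we get $d(\patt_u,v)\le d_G(v,w)+d_G(u,w)+2-d_G(u,s_0)=d_G(u,v)+2-d_G(u,s_0)$. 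Adding $d_G(u,s_0)$ to both sides yields $\tilde d_G(u,v)=d_G(u,s_0)+d(\patt_u,v)\le d_G(u,v)+2$, as desired.

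The main thing to be careful about — and the only real obstacle — is making sure the relevant clique representative actually appears in $S_R$, i.e. correctly invoking \Cref{def:clique-r-division-2nd}: $w\in\bdry R$ forces $w$ into some clique $C\in\mathcal{C}(\bdry R)$, whose representative is by construction in $\rep(R)$, and $S_R$ is built from exactly $\rep(R)$. Once that bookkeeping is in place, the estimate loses $1$ going from $w$ to its clique representative on the $u$-side and $1$ on the $v$-side, which is precisely where the additive $+2$ comes from (and why replacing cliques by stars with weight-$1/2$ edges, as in \Cref{sec:plusone}, recovers $+1$: each hop $w\leftrightarrow x$ costs $1/2$ instead of $1$). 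I would also note in passing that the degenerate cases $w=u$, $w=v$, or $x=w$ only make the bounds tighter, so no separate argument is needed.
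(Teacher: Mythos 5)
Your proof is correct and follows essentially the same approach as the paper: both directions use the triangle inequality, and the upper bound picks a vertex $w$ on the shortest path lying in $\bdry R$, passes to the clique representative $s_x\in S_R$, and charges $+1$ on each side for the hop between $w$ and $s_x$. The paper just phrases it by first simplifying $\tilde d_G(u,v)=\min_i\{d_G(u,s_i)+d_G(v,s_i)\}$ and then plugging in $i$ such that $s_i=s_x$, which is the same calculation you did unfolded.
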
 
\begin{proof}
By definition, $d_G(u,s_{i}) = \patt_u[i] + d_G(u,s_0)$ holds for each boundary vertex $s_{i}$ where $0\leq i\leq  k_R$. First, observe that:
 \begin{equation}\label{eq:dtilde-uv-expand}
 \begin{split}
     \tilde{d}_G(u,v) &= d_G(u,s_0) +   \min_{0\leq i \leq  k_R} \Set{\Big.  d_G(v,s_i) + \patt_u[i] }\\
     &=   \min_{0\leq i \leq  k_R} \Set{\Big.  d_G(u,s_0) +  d_G(v,s_i) + \patt_u[i] }\\
     &= \min_{0\leq i \leq  k_R} \Set{\Big.  d_G(u,s_i) +  d_G(v,s_i) }
\end{split}
\end{equation}
Thus, $\tilde{d}_G(u,v) \geq d_G(u,v)$ holds by triangle inequality. 

For the other direction, let $x\in \pi(u,v,G)\cap \bdry R$; $x$ exists by the assumption of the lemma. Let $s_{x}$ be the boundary vertex in $S_R$ that is in the same clique with $x$.  
By \Cref{eq:dtilde-uv-expand} and the triangle inequality:
\[
   \tilde{d}_G(u,v)\leq   d_G(u,s_x) + d_G(s_x,v) \leq d_G(u,x) + d_G(x,v) + 2 = d_G(u,v)+2, 
\]
as desired.
\end{proof}

 We now describe our algorithm. 
The \EMPH{eccentricity} of a vertex $u$ is defined to be $\EMPH{$\ecc(u)$} \coloneqq \max_{v\in V(G)}d_G(u,v)$.  
We will compute the approximate diameter by computing the approximate eccentricities for all vertices in $G$; that is, for each vertex $u$, we will compute an approximation $\widetilde{\ecc}(u)$, and then output $\max_{u}\widetilde{\ecc}(u)$.  Our algorithm is similar to the algorithm of Le and Wullf-Nilsen~\cite{lw2024} for computing exact diameter in minor-free graphs. Here, we use clique-based $r$-clustering in place of an $r$-division and have to handle the cliques in $\mathcal{C}$. We also have to be more careful in the way we handle clusters in $\mathcal{R}$ as a cluster could have a very large size. The algorithm has three steps:

\begin{itemize}
\item \textsf{Step 1.~} Construct a clique-based $r$-clustering $(\mathcal{R},\mathcal{C})$ of $G$. For each clique $K(x)$ in $\mathcal{C}$ represented by a vertex $x$, we find the shortest path distances from $x$ to all other vertices of $G$ using a single-source shortest path algorithm~\cite{Efrat2001-hm}.  For each cluster $R\in \mathcal{R}$, form a sequence of boundary vertices $S_R$ as described above.  We compute a set of patterns $\Patt_R = \{u\in V: \patt_u\}$ with respect to $S_R$. We store all the information computed in this step in a table \EMPH{$T^{(1)}_R$}.

\item \textsf{Step 2.~} For each cluster $R\in \mathcal{R}$, each pattern $\patt \in \Patt_R$, and each vertex $v\in R^{\circ}$, we compute $d(v,\patt)$. 
Then we find $\EMPH{$v_{\patt}$} \coloneqq \argmax_{v\in V(R^{\circ})} d(v,\patt)$, which is the \emph{furthest vertex} from $\patt$.  
We store both $d(v,\patt)$ and $d(\patt,v_{\patt})$ in a table  \EMPH{$T^{(2)}_R$}.

\item \textsf{Step 3.~} We now compute $\widetilde{\ecc}(u)$ for each vertex $u\in V$. For each cluster $R\in \mathcal{R}$, we compute the approximate distance from $u$ to the vertex $v\in R^{\circ}$ furthest from $u$, denoted by  \EMPH{$\Delta(u,R^{\circ})$}, as follows. 
Let $\patt_u$ be the pattern of $u$ with respect to $S_R$ computed in Step~1.
\begin{itemize}
    \item \textsf{Step 3a.~} If $u\not\in R^{\circ}$,  let $v$ be the furthest vertex from $\patt_u$, computed in Step 2. We return $\Delta(u,R^{\circ}) \coloneqq d_G(u,s_0) + d(\patt_u,v)$ where $s_0$ is the first vertex of $S_R$. 
    
    \item  \textsf{Step 3b.~}  If $u\in R^\circ$, then we compute a distance \EMPH{$d_{R^{\circ}}(u,v)$} from $u$ to every $v\in R^{\circ}$  where this distance is in the intersection graph of the disks in $R^{\circ}$.   Then, compute $\tilde{d}_G(u,v) = d_G(u,s_0) + d(\patt_u,v)$ and finally return $\Delta(u,R^{\circ}) \coloneqq \max_{v\in R^{\circ}} \min\Set{\big. \tilde{d}_G(u,v), d_{R^{\circ}}(u,v)}$.
\end{itemize}
We are not done yet: we have to compute the maximum approximate distance, denoted by $\Delta(u,\mathcal{C})$ from $u$ to vertices in cliques in $\mathcal{C}$: 
\begin{equation}\label{eq:delta-u-C}
    \Delta(u,\mathcal{C}) = 1 + \max_{K(x)\in \mathcal{C}} d_G(u,x)
\end{equation}
Here $K(x)$ is the clique in $\mathcal{C}$ represented by $x$.  The distance $ d_G(u,x)$ was computed and store in $T^{(1)}_R$ in Step 1. Finally, we compute:
\[    \widetilde{\ecc}(u) = \max \left\{\max_{R\in \mathcal{R}}\Delta(u,R^{\circ}), \Delta(u,\mathcal{C})\right\}~.
\]
\end{itemize}

\paragraph{Correctness.}  
Let $v^*$ be the furthest vertex from $u$; that is, $d_G(u,v^*) = \ecc(u)$. 
If $v^*$ belongs to some clique $K(x) \in \mathcal{C}$, then by the triangle inequality, $d_G(u,x)-1\leq d_G(u,v^*)\leq d_G(u,x)+1$, implying that  $d_G(u,v^*)\leq d_G(u,x)+1 \leq d_G(u,v^*) + 2$. 
Thus, $\Delta(u,\mathcal{C})$ computed in \Cref{eq:delta-u-C} is a $+2$ approximation of $d_G(u,v^*)$, and hence $\ecc(u)$, in this case. 

Otherwise, $v^*$ does not belong to some clique $K(x) \in \mathcal{C}$. By the definition of $r$-clustering, Item 4, $v^*\in R^{\circ}$ for some cluster $R\in \mathcal{R}$.  If $u\not\in R^{\circ}$, then  $\pi(u,v^*,G)\cap \bdry R\not=\varnothing$ and hence $\Delta(u,R^{\circ})$ is a $+2$-approximation of $\max_{v\in R^{\circ}} d_G(u,v) =  d_G(u,v^*)$ by \Cref{lm:dist-via-pattern}. If $u\in R^{\circ}$, the algorithm has to account for the fact that $\pi(u,v^*,G)$ could contain vertices outside $R$. If $\pi(u,v^*,G)\cap \bdry R\not=\varnothing$, then $d_G(u,s_0) + d(\patt_u,v^*)$ is a $+2$-approximation of $d_G(u,v^*)$ by \Cref{lm:dist-via-pattern}. Otherwise, $\pi(u,v^*,G)\subseteq R^{\circ}$ and hence $d_{R^{\circ}}(u,v^*) = d_G(u,v^*)$. Considering both cases, we conclude that $\tilde{d}_G(u,v^*)$ is a $+2$-approximation of $d_G(u,v^*)$ and hence $\Delta(u,R^{\circ})$ is a $+2$-approximation of $\tilde{d}_G(u,v^*)$. 

In both cases, we have $\ecc(u)\leq \widetilde{\ecc}(u) \leq \ecc(u)+2$, implying that the algorithm returns a $+2$-approximation of the diameter.

\paragraph{Running time.}   
In Step 1, we compute the shortest distances from representatives of cliques in $\mathcal{C}$ to all other vertices. As $|\mathcal{C}| = O(n/\sqrt{r})$ and finding single-source shortest paths in unit-disk graphs can be done in $O(n\log n)$ time~\cite{Efrat2001-hm,Cabello2015-vo,Chan2016-sy}, the running time to compute all these distances is $\Tilde{O}(n^2/\sqrt{r})$. Then for each vertex $u\in V$, and for each $R\in \mathcal{R}$, computing $\patt_u$ can be done by looking up the distances from the representatives in $\rep(R)$.  The running time is $O(|\rep(R)|) = O(|\mathcal{C}(\bdry R)|)$ for each $u$ and $R$. Thus, the total running time is:
\begin{equation*}
\begin{split}
    \sum_{u\in V} \sum_{R\in \mathcal{R}}  O(|\mathcal{C}(\bdry R)|) &= \sum_{u\in V} O(n/\sqrt{r}) \qquad \text{(by Item 3 in \Cref{def:clique-r-division-2nd})}\\
    &= O(n^2/\sqrt{r}).     
\end{split}
\end{equation*}
Therefore, Step 1 could be implemented in $\Tilde{O}(n^2/\sqrt{r})$ time.  

Next for Step 2, by \Cref{lm:pattern-bound-undir}, the number of patterns $|\Patt_R| = r^{2d}$. (Note that $d = 4$ by \Cref{thm:LW-dim}.) 
Thus, we can implement Step 2 in $O(\frac{n}{\sqrt{r}} \cdot r^{2d} \cdot r) = O(nr^{(4d+1)/2})$ time.

Finally, we account for the running time in Step 3. 
Step 3a could be done in $O(1)$ time per vertex $u$ and cluster $R$. 
Thus, the total running time is $O(n|\mathcal{R}|) = O(n^2/\sqrt{r})$. For Step 3b, we only restrict to $u$ belongs to $R^\circ$ and there are only $r$ such vertices. 
Computing $\Delta(u,R^\circ)$ in this case can be done in $\Tilde{O}(r)$ time per vertex $u$, using single-shortest paths in unit-disk graphs~\cite{Efrat2001-hm}. 
Thus, the total running time of Step 3b over all $u\in R^{\circ}$ and all $R\in \mathcal{R}$ is $\Tilde{O}(|\mathcal{R}| \cdot r^2) = \Tilde{O}(nr^{3/2})$. Lastly, computing $\Delta(u,\mathcal{C})$ \emph{for all $u$} can be done in time $O(n|\mathcal{C}|) = O(n^2/\sqrt{r})$ by looking up the distances computed from Step 1. Thus, the total running time of Step 3 is $O(n^2/\sqrt{r} + n\cdot r^{3/2})$. 

In summary, the total running time of the entire algorithm is
\begin{equation}
    \Tilde{O}\Paren{ \frac{n^2}{\sqrt{r}} + n r^{(4d+1)/2}} = \Tilde{O}(n^{2-\frac{1}{4d+2}}) = \Tilde{O}(n^{2 - 1/18})
\end{equation}
by setting $r = n^{1/(2d+1)} = n^{1/9}$.

\section{\texttt{+}2-Approximation Distance Oracle for Unit-Disk Graphs}\label{sec:oracle-UDG}

Similar to Section~\ref{sec:diameter-UDG}, we now show how to construct distance oracle on unit-disk graphs with merely +2 error.
First we describe the construction of the distance oracle.

\begin{itemize}
\item \textsf{Step 1.~}  
Construct a clique-based $r$-clustering $\mathcal{R}$ of $G$ using Lemma~\ref{lm:clique-division}.  
For each subgraph $R\in \mathcal{R}$, form a sequence of $S_R \coloneqq \Seq{s_0,\dots,s_{k_R}}$ from all clique representatives in $\rep(R)$.
We have $S_R = \abs{O(r)}$ by \Cref{def:clique-r-division-2nd}.
We compute a set of patterns $\Patt_R \coloneqq \set{ \patt_u : u\in V  }$ with respect to $S_R$, and store it in a table \EMPH{$T^{(1)}_R$}.

\item \textsf{Step 2.~} For each subgraph $R\in \mathcal{R}$ and each vertex $v$: 
(a) if $v\in R^\circ$ or $v$ is a representative in $S_R$, we compute and store $d(v,\patt)$ for each pattern $\patt \in \Patt_R$;
(b) if $v \not\in R^\circ$, we find the pattern $\patt_v$ of $v$ with respect to $S_R$, and store a pointer from $v$ to $\patt_v$ and the distance $d_G(v,s_0)$ in a table \EMPH{$T^{(2)}_R$}.

\item \textsf{Step 3.~} For each subgraph $R\in \mathcal{R}$, compute $d_G(u,v)$ for every pair of vertices $(u,v)$ in $R^\circ$, and store them in a table \EMPH{$T^{(3)}_R$}.
\end{itemize}

\noindent For any distance query between vertices $u$ and $v$, we perform the following.
\begin{itemize}
\item
If there is a subgraph $R\in \mathcal{R}$ such that $R^\circ$ containing both $u$ and $v$, then we can return their distance in $G$ using table $T^{(3)}_R$.

\item
Otherwise, let $R$ be the subgraph containing $v$. 
We compute the approximate distance from $u$ to $v$ as follows. 
Let $\patt_u$ be the pattern of $u$ with respect to $S_R$ computed in Step~1.
First we look up the distance between $u$ and $s_0$ and the pointer from $u$ to $\patt_u$ from table $T^{(2)}_R$. 
If $v$ belongs to some clique in $\bdry R$ with representative point $x$, then we look up the distance the distance $d(\patt_u, x)$ from table $T^{(2)}_R$.  
And we return $\tilde{d}_G(u,v) \coloneqq d_G(u,s_0) + d(\patt_u, x)$.

\item
Else, $v$ belongs to none of the cliques in $\bdry R$, and by definition of $r$-clustering, $v$ must be in $R^\circ$.
Then we look up  the distance $d(\patt_u, v)$ again from table $T^{(2)}_R$.  
Finally we return $\tilde{d}_G(u,v) \coloneqq d_G(u,s_0) + d(\patt_u, v)$.
\end{itemize}

\paragraph{Analysis.}
The correctness of the construction again follows from Lemma~\ref{lm:dist-via-pattern}.
Querying the distance between a given pair of vertices $(u,v)$ takes $O(1)$ time as every necessary information are stored in the tables.
As for space analysis, the number of patterns is $r^{2d}$ by \Cref{lm:pattern-bound-undir}.  (Here $d=4$ by \Cref{thm:LW-dim}.)
Table $T^{(1)}_R$ takes $O(\frac{n}{\sqrt{r}} \cdot r^{2d} \cdot r) = O(n r^{2d+1/2})$ space.
Table $T^{(2)}_R$ takes $O(\frac{n}{\sqrt{r}} \cdot (r\cdot r^{2d} + n)) = O(n r^{2d+1/2} + \frac{n^2}{\sqrt{r}})$ space.
Table $T^{(3)}_R$ takes $O(\frac{n}{r} \cdot r^{2}) = O(n r)$ space.
Thus in total the distance oracle uses $O(n r^{2d+1/2} + \frac{n^2}{\sqrt{r}})$ space.   
Taking $r = n^{1/(2d+1)} = n^{1/9}$ gives us $O(n^{2-1/18})$ space.

\section{Well-Separated Clique-Based Separator Decomposition}
\label{sec:clique-division}

In this section, we prove Lemma~\ref{lm:clique-division}; see \Cref{subsec:ideas} for an overview of the argument.  

\begin{definition}[Well-separated clique-based separators]\label{def:well-sep} 
Let $D$ be a set of $n$ unit-disks. 
Let $G$ be its geometric intersection graph.   We say a family of disjoint subsets of cliques of $G$, denoted by \EMPH{$\mathcal{S}$}, is a \EMPH{well-separated clique-based separator} of $D$ if all following conditions hold:
\begin{itemize}
    \item  {\normalfont [Balanced.]}  Every connected component of $G\setminus \mathcal{S}$ contains at most $2n/3$ disks.  
    \item  {\normalfont [Well-separated.]}  For every two disks $a$ and $b$ in different components of $G\setminus \mathcal{S}$, the minimum Euclidean distance $\Edist{a}{b}$ between points in $a$ and points in~$b$ is greater than $2$. 
    \item  {\normalfont [Low-ply.]}  The disks in each clique in $\mathcal{S}$ are stabbed by a single point. Furthermore, we could choose for each clique $X\in \mathcal{S}$ a \emph{representative disk} $x$ such that the ply with respect to the intersection graph of all representative disks in $\mathcal{S}$ is  $O(1)$. 
\end{itemize}
We say that the \EMPH{size} of $\mathcal{S}$ is the number of cliques in $\mathcal{S}$.
\end{definition}

We will show that by adapting the clique-based separator theorem for geometric intersection graphs~\cite{BBKMv20,deBerg23}, 
we can construct a well-separated clique-based separator for unit-disk graphs in near-linear time. 
The proof of the following lemma can be found in Section~\ref{subsec:well-separated-separator}.

\begin{lemma}
\label{lm:disk-well-sep} Let $D$ be a set of $n$ unit disks. 
We can construct a well-separated clique-based separator $\mathcal{S}$ for $D$ of size $O(\sqrt{n})$ in $O(n\log n)$ time,
such that for every disk $y$, there are $O(1)$ cliques in $\mathcal{S}$ that intersect $y$. 
Furthermore, we can compute the list of the representative disks of cliques in $\mathcal{S}$ that intersect $y$ for every disk $y$ in a total of $O(n\log n)$ time.  
\end{lemma}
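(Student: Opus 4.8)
## Proof Proposal for Lemma~\ref{lm:disk-well-sep}

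The plan is to start from the known clique-based separator theorem for intersection graphs of fat objects~\cite{BBKMv20,deBerg23}, which already yields $O(\sqrt{n})$ vertex-disjoint cliques whose removal splits $G$ into components of size at most $2n/3$, together with the low-ply guarantee that the disks in each separator clique are stabbed by a single point (equivalently, each clique lies inside a single grid cell of the natural grid of side $\Theta(1)$). The key additional property we must engineer is \emph{well-separatedness}: any two disks left in different components must be at Euclidean distance greater than $2$, i.e. non-adjacent in $G$. The natural way to force this is to \emph{thicken} the separator. Concretely, impose a grid of side length, say, $3$ (so that two disks whose centers are more than one grid-column apart cannot intersect), and first apply the de~Berg et~al.\ separator to get an initial cliquey separator $\mathcal{S}_0$; then for every component boundary, add to $\mathcal{S}$ all disks lying in the grid cells immediately adjacent to cells touched by $\mathcal{S}_0$, grouping the disks of each such cell into one clique (the disks in a single cell of bounded diameter form a clique since any two of them intersect; if the cell side is larger than $2$ we instead sub-partition each cell into $O(1)$ sub-cells of diameter $\le 2$, each sub-cell a clique). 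After this thickening, two disks in distinct surviving components sit in cells separated by at least one fully-removed column of cells, so their distance exceeds $2$, giving the well-separated condition. The balance condition is preserved up to constants because each original separator clique spawns only $O(1)$ new cliques.

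The main steps in order: (1) build the grid and, using a standard sweep / point-location structure, bucket the $n$ disks by cell in $O(n\log n)$ time; (2) invoke the de~Berg--Bose--Kirkpatrick--Maheshwari--Verbeek (and de~Berg) clique separator construction, which runs in $O(n\log n)$ time (or faster) and returns $O(\sqrt{n})$ cliques $\mathcal{S}_0$ each inside a single cell, with components of size $\le 2n/3$; (3) thicken: for each cell $c$ touched by $\mathcal{S}_0$, add the $O(1)$ neighboring cells (and their sub-cell cliques) to $\mathcal{S}$, and recompute the connected components of $G\setminus\mathcal{S}$ — still $O(\sqrt n)$ cliques, still balanced; (4) choose representatives: in each separator clique $X$ pick the representative disk $x$ to be, say, the one whose center is lexicographically smallest; since each $X$ lies in one cell of diameter $O(1)$ and each cell contributes $O(1)$ separator cliques, the representative disks in any cell are $O(1)$ in number, and a disk $y$ can intersect representatives only in its own and neighboring cells, hence $O(1)$ separator cliques intersect $y$, giving both the low-ply property and the $O(1)$ intersection bound; (5) for the ``furthermore'' clause, for each disk $y$ report the $O(1)$ cells near $y$ and scan their $O(1)$ representative disks, testing intersection — summing over all $y$ this is $O(n)$ work on top of the $O(n\log n)$ bucketing, and the $O(\sqrt n)$ representative disks themselves have constant ply by the same cell-counting argument.

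The part that needs the most care is verifying that the thickening does not blow up the \emph{number} of cliques beyond $O(\sqrt n)$ while simultaneously guaranteeing well-separatedness and preserving the balance constant $2n/3$. One has to be a little careful about what ``adjacent cell'' means so that the removed cells form a genuine topological barrier in the plane (so no short edge sneaks across) yet only $O(1)$ cells are added per original separator clique; using a grid of side exactly $>2$ and adding the one-cell-wide Moore neighborhood around every cell hit by $\mathcal{S}_0$ does both, but one should double-check the corner cases where several original separator cliques sit in the same or neighboring cells (there the added cells overlap, which only helps the count). A secondary subtlety: after thickening, a component of $G\setminus\mathcal{S}_0$ might fragment into several components of $G\setminus\mathcal{S}$, but each fragment is a subset of the original, so the $\le 2n/3$ bound is inherited immediately. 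Finally, one must confirm that each cell of diameter $O(1)$ indeed decomposes into $O(1)$ cliques under the intersection graph — true because a cell of diameter $c$ can be covered by $O(c^2)=O(1)$ sub-squares of diameter $\le 2$, and within such a sub-square every pair of unit disks intersects.
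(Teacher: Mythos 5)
Your plan has a genuine gap in the well-separatedness argument. You treat the de Berg separator as a black box, obtain its output cliques $\mathcal{S}_0$, and then remove all disks in cells (of side $3$) touched by $\mathcal{S}_0$ together with their Moore neighborhoods, claiming this forces disks in different components to be far apart. The problem is that $\mathcal{S}_0$ is a set of \emph{disks}, not a spatial region: if the de Berg boundary (say a square boundary $\partial H_i$) passes through a region of the plane where no disk of the input actually touches it, then no $\mathcal{S}_0$-clique lives there, no cell ``touched by $\mathcal{S}_0$'' covers that stretch of boundary, and the thickening does nothing there. You can then have a disk $a$ just inside $H_i$ and a disk $b$ just outside $H_i$, both at distance roughly $1$ from $\partial H_i$ and hence non-adjacent (so they land in different parts of the partition), with $\|a,b\|\approx 1 < 2$, and neither is removed by your thickened $\mathcal{S}$. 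So the thickened separator is balanced and small, but not well-separated. Relatedly, your sentence ``Euclidean distance greater than $2$, i.e.\ non-adjacent in $G$'' conflates two very different conditions: unit disks are non-adjacent iff their set distance is $>0$ (center distance $>2$), while well-separatedness demands set distance $>2$ (center distance $>4$). This may be what led you to believe a one-cell-deep post-processing around the separator's disks suffices.

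The paper avoids this by not treating the separator construction as a black box. It re-derives de Berg's construction and enlarges the \emph{spatial barrier}: instead of removing disks touching a single square boundary $\partial H_i$, it removes disks wholly contained in the annulus between four consecutive nested squares $H_i,\dots,H_{i+3}$, and the spacing is chosen so that each layer width is at least the (scaled) disk diameter. Then any disk not in $S$ lying inside $H_i$ and any disk not in $S$ lying outside $H_{i+3}$ are forced apart by at least the annulus width minus two disk radii, which is $\ge 2$ regardless of where disks actually sit; the guarantee is purely geometric and does not depend on the separator happening to contain disks near every stretch of boundary. (A separate construction, the paper's Step~4, handles the regime where the disks are large relative to the layer spacing; your fixed side-$3$ grid does not adapt to this scaling.) To salvage your approach you would have to thicken around the de Berg \emph{boundary} rather than around its output disks, which means opening up the black box — at which point you are essentially reproducing the paper's modification.
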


\paragraph{Clique-based \boldmath{$r$}-clustering algorithm.~} Let $D$ be the set of $n$ disks and $G$ be its geometric intersection graph. In this step, we recursively partition $D$ into a family of sets of disks such that each set has at most $r$ disks and at most $r$  boundary cliques.  We also maintain a (global) set of cliques $\mathcal{C}$ and their representative disks. For each representative disk $x$, let \EMPH{$K(x)$} be the clique in $\mathcal{C}$ represented by $x$. 

At each intermediate recursive step, we will maintain an (explicit) set \EMPH{$\hat{D}$} of size at least $r$ that includes two types of disks:~\EMPH{regular disks} and  \EMPH{representative disks}. We assume that $|\hat{D}|\geq r$; otherwise, the algorithm will stop in the previous step.  
For each regular disk $y \in \hat{D}$, we maintain a list of representative disks, denoted by \EMPH{$\rho(y)$},  in $\hat{D}$ such that for each $x \in \rho(y)$ the clique $K(x)$ it represents has at least one disk that intersects with $y$.  (Notice that the representative $x$ itself might not intersect $y$.) 
Furthermore, we will show below (\Cref{clm:boundary-T}) that every neighbor of $y$ in $G$ is in a clique represented by some disk in $\rho(y)$. 
Let \EMPH{$\Gamma(\hat{D})$} be the graph obtained from $\hat{D}$ by first taking the intersection graph of $\hat{D}$ and, for every regular disk $y$, adding an edge $(y,x)$ for every $x\in \rho(y)$.  (Intuitively, we pretend as if the representative $x$ itself intersects $y$ instead of the clique $K(x)$.)
We call $\Gamma(\hat{D})$  the \EMPH{extended intersection graph} of $\hat{D}$. We will ensure that $\Gamma(\hat{D})$  is a connected graph. 
Note that we will not explicitly maintain $\Gamma(\hat{D})$ as it could have super-linear many edges, where as our goal is near-linear time. 
Initially, $\hat{D} = D$ and $\Gamma(\hat{D}) = G$, and all disks in $\hat{D}$ are regular disks. 

\begin{figure}[htbp]
\centering
\includegraphics[width=0.4\linewidth]{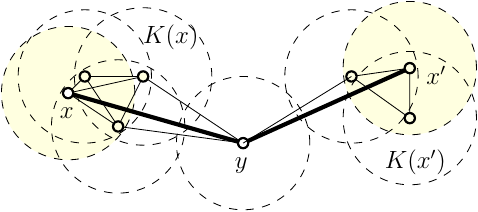}
\caption{The extended intersection graph $\Gamma(\hat{D})$. 
The edges connecting a regular disk $y$ and representative disks in $\rho(y)$ are shown in solid edges.}
\label{fig:extended-intersection-graph}
\end{figure}

We then apply \Cref{lm:disk-well-sep} to construct a well-separated clique-based separator \EMPH{$\hat{\mathcal{S}}$} for $\hat{D}$. If there is a representative disk $x$ in $\hat{D}$ contained in a clique in $\hat{\mathcal{S}}$, we split $x$ out of the clique and consider $x$ an independent clique in $\hat{\mathcal{S}}$. We then add new cliques in  $\hat{\mathcal{S}}$ to $\mathcal{C}$. 

Let \EMPH{$R(\hat{\mathcal{S}})$} be the set of representative disks of cliques in $\hat{\mathcal{S}}$.  
We partition $\hat{D}\setminus \hat{\mathcal{S}}$ into two set of disks $D_1,D_2$ each contains at most $2|\hat{D}|/3$ disks. 
For each $D_i$, we construct a spanning forest $F$ of the extended intersection graph $\Gamma(D_i\cup R(\hat{\mathcal{S}}))$. 
For each connected component $T$ of $F$, if $T$ has at most $r$ vertices, we form a cluster 
\EMPH{$R_T$} containing all regular and representative disks of $T$, and all disks in the clique of the representative disks of $T$, and add $R_T$ to $\mathcal{R}$.  Otherwise, $T$ has at least $r$ vertices, we recurse on the set of disks, say $\hat{D}_i$, corresponding to vertices of $T$. 
The extended intersection graph of $\hat{D}_i$ will be connected. 

\paragraph{Running time analysis.} First we bound the number of disks, denoted by $A(n)$, counted with multiplicity, over the course of the algorithms; these are disks in $\hat{D}$ for every $\hat{D}$ that appeared in the recursion. Observe that $A(n)$ satisfies: 
 \begin{equation}
     A(n) \leq A(n_1) + A(n_2) + O(n)
 \end{equation}
 for $n_1$ and $n_2$ such that $n_1 + n_2 = n + O(\sqrt{n})$ and $n_1,n_2\geq n/3$. By induction, we could show that $A(n) = O(n\log n)$. Therefore, to show that the total running time is $O(n \log^2 n)$, it suffices to show that in each recursion step, the total running time is $O(|\hat{D}|\log n)$. 

Let $\hat{n} \coloneqq |\hat{D}|$.  Observe that constructing $\hat{\mathcal{S}}$ for $\hat{D}$ takes $O(\hat{n}\log \hat{n})$ time. 
Then, for $D_1$ (as well as $D_2$), we construct a spanning forest $F$ of $\Gamma(D_1\cup R(\hat{\mathcal{S}}))$. Note that each regular disk $y$ maintains a list of representative disks $\rho(y)$, and by \Cref{lm:disk-well-sep}, $|\rho(y)| = O(\log n)$ as the recursion depth is $O(\log n)$. Then, to compute  $F_1$, we simply compute a spanning forest for the intersection graph of regular disks, which could be done in $O(\hat{n}\log \hat{n})$ by computing the Delaunay triangulation of the centers of these disks, and then add edges to the representative disks. The total running time is $O(\hat{n}\log n)$, as desired.

\paragraph{Bounding boundary cliques of \boldmath{$\mathcal{R}$}.} 
The same argument of Frederickson (proof of Lemma 1 in~\cite{Frederickson1987}) applies to bound $\sum_{R\in \mathcal{R}}|\mathcal{C}(\partial\! R)|$, which is the number of cliques in $\mathcal{C}$ generated by the $r$-clustering algorithm, counted with multiplicity.  
We reproduce Frederickson's argument here almost verbatim for completeness. Let $B(n,r)$ be the number of cliques in $\mathcal{C}$ counted with multiplicities.  Then we have:
 \begin{equation}
 \begin{split}
     B(n,r) &\leq c_0 \sqrt{n} + B(\alpha n + O(\sqrt{n}),r) +  B((1-\alpha) n + O(\sqrt{n}),r) \qquad \text{for $n >  r$}\\
     B(n,r) &= 0 \qquad \text{for $n \leq   r$}
 \end{split}
 \end{equation}
for some $\alpha \in [1/3,2/3]$. Then, by induction $B(n,r)\leq c_0 n/\sqrt{r} - d \sqrt{n}$ for a sufficiently large constant $d$, implying Item (3) of Definition~\ref{def:clique-r-division-2nd}.

\paragraph{Analyzing properties of \boldmath{$\mathcal{R}$}.} 
Recall that each cluster $R_T$ formed from a spanning $T$ and some disks in the cliques of the representative disks in $T$. 
Thus, $R_T$ induced a connected subgraph of $G$, the geometric intersection graph of $D$. 
Furthermore, $T$ has at least one representative disk based on the way we constructed the extended intersection graph $\Gamma(D_1\cup R(\hat{\mathcal{S}}))$, so the number of clusters in $\mathcal{R}$ is at most the number of cliques in $\mathcal{C}$ counted with multiplicity, which is $O(n/\sqrt{r})$ as shown above. 
This implies Item (1) in Definition~\ref{def:clique-r-division-2nd}. 
To show Item (2), we claim that:

\begin{claim}\label{clm:boundary-T} For every regular disk $y$ in a tree $T\in F$, any neighbor (in $G$) of $y$ not in $T$ belongs to some clique represented by representative disks in $T$. 
\end{claim}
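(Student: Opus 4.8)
The plan is to trace how a regular disk $y$ acquired its list $\rho(y)$ of representative disks through the recursion, and argue that this list stays ``complete'' in the sense that every $G$-neighbor of $y$ lies in a clique represented by some disk in $\rho(y) \cap V(T)$. First I would set up the key invariant, which we maintain throughout the recursive construction: for every intermediate set $\hat D$ and every regular disk $y \in \hat D$, every neighbor of $y$ in the original intersection graph $G$ either lies in $\hat D$ itself (as a regular disk) or lies in the clique $K(x)$ for some representative disk $x \in \rho(y)$ with $x \in \hat D$. At the top of the recursion $\hat D = D$ and $\rho(y) = \varnothing$, so the invariant holds trivially since every neighbor of $y$ is a regular disk in $\hat D$.

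The inductive step is where the real work happens. Suppose the invariant holds for $\hat D$, and we apply Lemma~\ref{lm:disk-well-sep} to obtain the well-separated clique-based separator $\hat{\mathcal S}$, split $\hat D \setminus \hat{\mathcal S}$ into $D_1, D_2$, and recurse on (the component-sets of the spanning forest of) each $\Gamma(D_i \cup R(\hat{\mathcal S}))$. Fix a regular disk $y$ that survives into $D_i$ and consider an arbitrary $G$-neighbor $z$ of $y$. By the invariant for $\hat D$, either $z$ is a regular disk of $\hat D$, or $z \in K(x)$ for some $x \in \rho(y) \cap \hat D$. In the second case, $x$ is still a representative disk; either it lands in $D_i$ (and then it is adjacent to $y$ in $\Gamma(D_i \cup R(\hat{\mathcal S}))$ because the clique $K(x)$ contains $z$ which intersects $y$, so the edge $(y,x)$ is added), or it is separated away — but then, crucially, we use the \emph{well-separated} property: $y$ and $z$ are at Euclidean distance at most $2$ (they are $G$-neighbors), so they cannot lie in different components of $G \setminus \hat{\mathcal S}$; hence if $z \notin \hat{\mathcal S}$ then $z$ lies in the same part $D_i$ as $y$, contradiction with $z$ being hidden behind a separated-away representative, and if $z \in \hat{\mathcal S}$ then $z$ lies in some clique of $\hat{\mathcal S}$, which is added to $\mathcal C$ with a representative that is placed into $R(\hat{\mathcal S})$ and hence into every $\Gamma(D_i \cup R(\hat{\mathcal S}))$ whose part touches that clique. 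The first case ($z$ a regular disk of $\hat D$) is handled the same way: $z$ is a $G$-neighbor of $y$ at distance $\le 2$, so $z$ is not in a component of $G\setminus\hat{\mathcal S}$ different from $y$'s; thus $z$ is either a regular disk in $D_i$ — in which case after building the spanning forest, $y$ and $z$ are in the same component $T$ (they are adjacent in $\Gamma$) and $z \in V(T)$ — or $z \in \hat{\mathcal S}$, handled as above. In all sub-cases we either keep $z$ as a regular disk in $y$'s component, or we record a new representative $x'$ with $z \in K(x')$ into $\rho(y)$ that belongs to $y$'s component; this re-establishes the invariant for the recursive call.

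Finally, when the recursion bottoms out at a tree $T$ with at most $r$ vertices, the cluster $R_T$ is formed from $V(T)$ together with all disks in the cliques of representative disks in $T$. The invariant, specialized to this last step, says exactly that any $G$-neighbor of a regular disk $y \in T$ is either another regular disk of $T$ (hence in $R_T$) or lies in $K(x)$ for a representative $x \in \rho(y) \cap V(T)$ — which is precisely the statement of the claim. The main obstacle, and the place that requires care, is the bookkeeping in the inductive step: correctly arguing that a $G$-neighbor $z$ of $y$ is never ``lost'' — either it is separated together with $y$ into the same part, or it is absorbed into $\hat{\mathcal S}$ and then faithfully represented — and this is exactly where the well-separated property of Definition~\ref{def:well-sep} (Euclidean distance $> 2$ between different components) is indispensable; without it a neighbor of $y$ could be cut away while $y$ stays, breaking completeness of $\rho(y)$.
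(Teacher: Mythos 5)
Your overall plan---induction over the recursion with the invariant that every $G$-neighbor of a regular disk $y$ is either still present in $\hat D$ or lives in a clique $K(x)$ with $x\in\rho(y)\cap\hat D$---is the same one the paper uses, and your handling of the case where the neighbor $z$ is itself still a disk of $\hat D$ is fine. But there is a genuine gap in the other case, when $z\notin\hat D$ and $z\in K(x)$ for some representative $x$.

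When you try to rule out ``$x$ is separated away'' you invoke the well-separated property on the pair $(y,z)$: you say $y$ and $z$ are $G$-neighbors, so they cannot be in different components of $G\setminus\hat{\mathcal S}$, hence ``$z$ lies in the same part $D_i$ as $y$.'' This reasoning does not apply: $z$ is \emph{not} in $\hat D$ at this level of the recursion (it was absorbed into a clique and removed at an earlier level), so $z$ is neither in $\hat{\mathcal S}$ nor in $D_1$ nor in $D_2$, and the well-separated property of $\hat{\mathcal S}$---which concerns components of the intersection graph of the current set $\hat D$---says nothing directly about $z$. You also state that $G$-neighbors are ``at Euclidean distance at most $2$,'' but two intersecting disks are at distance $0$; the nontrivial $\le 2$ bound only shows up in what you actually need to control, namely $\Edist{y}{x}$, which is not guaranteed to be $0$ since $x$ need not intersect $y$. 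The paper closes this gap with a short ``bridge'' argument: $z$ intersects $y$ (so $\Edist{y}{z}=0$), and $z$ and $x$ are stabbed by a common point because they lie in the same clique $K(x)$, so $\Edist{z}{x}=0$; since $z$ is a unit disk of diameter $2$, any point of $y\cap z$ is within distance $2$ of the stabbing point, giving $\Edist{y}{x}\le 2$. Only now can the well-separated property be applied, to $x$ and $y$ (both genuinely in $\hat D$), yielding that $x$ cannot lie in $D_{3-i}$; combined with the sub-case $x\in\hat{\mathcal S}$ (where $x\in R(\hat{\mathcal S})$ and $(y,x)$ is an edge of $\Gamma$ via $\rho(y)$), this forces $x$ into $y$'s tree $T$. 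So the missing ingredient in your proof is this $\Edist{y}{x}\le 2$ bridge through $z$; without it, the contradiction in the ``$x$ separated away'' sub-case is not actually derived.
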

\begin{proof}
    We prove the claim by induction on the recursive steps of the algorithm.   
    Given the current set of disks $\hat{D}$, we inductively assume that any neighbor of $y$ (in $G$) is either in $\hat{D}$ or in a clique $K(x)$ of some representative disk $x$ in $\hat{D}$. 
    Without loss of generality, we assume that $y\in D_1$ and let $z$ be a neighbor in $G$ of $y$ that is not in $T$. 
    Then edge $yz$ is not in $T$ and thus $z$ cannot be in $\hat{D}$; by induction hypothesis we have $z\in K(x)$. 
    If $x\in \hat{\mathcal{S}}$, we are done, since we consider all representative disks of cliques in $\hat{\mathcal{S}}$ in the construction of $F$. 
    Otherwise, we show that $x$ must belong to $D_1$: otherwise $x$ is in $D_2$ as it is not in $\hat{\mathcal{S}}$. 
    However, this contradicts the well-separated property of $\hat{\mathcal{S}}$ in \Cref{def:well-sep} since $\Edist{y}{x}\leq 2$. 
    Thus, $x\in D_1$, and hence $y$ has an edge to $x$ in $\Gamma(D_1\cup R(\hat{\mathcal{S}}))$, meaning that $y$ must belongs to the connected component of $x$ in     $\Gamma(D_1\cup R(\hat{\mathcal{S}}))$, which is $T$. 
\end{proof}

As $T$ has at most $r$ vertices, it has at most $r$ representative disks and regular disks. 
By \Cref{clm:boundary-T}, every disk in $\partial\! R$ belongs to a clique represented by a representative disk in $T_1$.  
Note that the regular disks of $T_1$ are those in $R^{\circ}$. Thus, Item (2) in Definition~\ref{def:clique-r-division-2nd} follows. Furthermore, by construction, every vertex of $G$ either belongs to $R^\circ$ for some cluster $R$ in $\mathcal{R}$ or a clique in $\mathcal{C}$, implying Item (4) in Definition~\ref{def:clique-r-division-2nd} and hence Lemma~\ref{lm:clique-division}.

\begin{remark}\label{remark:clique-based}  
We could generalize our algorithm for constructing a clique-based $r$-clustering for unit-disk graphs to more general cases. 
We note that \Cref{def:well-sep} applies to intersection graphs of any geometric objects, assuming that the diameter of every object is at most $1$ by scaling.  
Recall that to construct a clique-based $r$-clustering, we need 
(1) an algorithm for constructing a well-separated clique-based separator running in $\Tilde{O}(n)$ time and 
(2) an algorithm to construct a spanning tree of the geometric intersection graphs running in $\Tilde{O}(n)$ time. 
As long as we have these two components, our algorithm in this section gives a clique-based $r$-clustering with running time  $\Tilde{O}(n)$. 
\end{remark}

\subsection{Well-separated clique-based separator}\label{subsec:well-separated-separator}

It remains to prove \Cref{lm:disk-well-sep}. 
Our algorithm is a modification of the algorithm by de Berg~\cite[Theorem~2]{deBerg23}, which is an efficient implementation of the clique-based separators for geometric intersection graphs by de Berg \etal~\cite{BBKMv20}. 
We tailor their algorithm to unit-disk graphs to get a well-separated clique-based separator. The algorithm has several steps. 

 \begin{itemize}
     \item \textsf{Step 1.} Let \EMPH{$H$} be a minimal square such that the interior of $H$ contains at most $n/12$ disks. 
     By scaling, we assume that $H$ has a side length of $1$. Let \EMPH{$\bar{r}$} be the \emph{scaled radius} of the disks. 
     
     \item  \textsf{Step 2.} Consider $\sqrt{n}$ slightly bigger squares $\EMPH{$\mathcal{H}$} = \{H_i\}_{i=1}^{\sqrt{n}}$ with side-length equally spaced between $1$ and $2$, sharing the same center with $H$. More precisely, for every $i\in [1 \,..\, \sqrt{n}]$, the square $H_i$ has side length $1+\frac{i}{\sqrt{n}}$.  
     Every square $H_i$ is contained in a square $Q$ of side-length $2$. 
     Furthermore, by the minimality of $H$, at most $4 \cdot n/12 \leq n/3$ disks intersect $Q$, because $Q$ can be divided into $4$ squares of side-length $1$, thus each intersecting at most $n/12$ disks.
     
     \item \textsf{Step 3.}   If every disk has a diameter at most $1/\sqrt{n}$, this means each disk could intersect the boundary of at most two squares in $\mathcal{H}$. Thus, there exists at least one $i \in [1,\sqrt{n}-2]$ such that for 4 consecutive squares $H_i,H_{i+1}, H_{i+2}, H_{i+3}$, the number of disks completely contained inside $H_{i+3}\setminus H_i$ is at most $O(\frac{n}{\sqrt{n}-3}) = O(\sqrt{n})$ disks in total. 
     Let \EMPH{$S$} be this set of disks. We could take each clique to be a single disk in $S$, but the representative disks of resulting cliques might have unbounded ply. 
     To reduce the ply, we considered the grids of cell length $\bar{r}/2\times \bar{r}/2$ restricted to $H_{i+3}\setminus H_i$; any disk in $S$ will intersect some of the grid points. Then, we create a set of cliques \EMPH{$\mathcal{S}$} by adding disks in $S$ stabbed by the same grid point as a single clique. 
     (If a disk is stabbed by more than one grid point, then arbitrarily assign it to one of the cliques.) Observe that $\mathcal{S}$ satisfies all the properties in \Cref{def:well-sep}. Specifically, the well-separated properties follow from the fact that for any two disks $a,b$ not in $S$ such that $a$ intersects the boundary or completely outside of $H_{i+3}$ and $b$ intersects the boundary or completely inside of $H_{i}$, $\Edist{a}{b} > \Edist{H_{i+1}}{H_{i+2}}\geq \bar{r}$, which is $2$ in the unscaled distance.  Also, each disk in $D\setminus S$ could only intersects $O(1)$ cliques in $\mathcal{S}$. 

     \item  \textsf{Step 4.}  Otherwise, let $\EMPH{$r_i$} \coloneqq 2^{i}/\sqrt{n}$ be such that the diameter of the disks is in $(r_i/2,r_i]$ for some $i\in [1,\sqrt{n}]$. Then each disk could intersect the boundary of at most $O(2^{i})$ squares in $\mathcal{H}$. 
     Look at the subset $\mathcal{H}_i$ of $\mathcal{H}$ where the boundaries of these squares are equally spaced at distance $2^{i}\cdot \sqrt{n}$. 
     Then $|\mathcal{H}_i|= \Theta(\sqrt{n}/2^i)$. 
     Let \EMPH{$N_i$} be the set of grid points (of the $\bar{r}/2\times \bar{r}/2$ grid) in the big square $Q$ (of side length 2). Note that $\bar{r} \in (r_{i}/4, r_i/2]$.  
     Thus, $|N_i| = O(1/r_i^2) = O(\frac{n}{2^{2i}})$. 
     Each grid point $p\in N_i$ defines a clique of disks stabbed by $p$, and furthermore, every disk in $Q$ must stab a point in $N_i$. Thus, $|N_i|$ is the upper bound on the number of cliques in $Q$. 

    On the other hand, every disk stabbed by a point $p\in N_i$ could intersect $O(1)$ boundaries of squares in  $\mathcal{H}_i$, which is in the distance $O(1) \cdot r_i$ from the point $p$. Furthermore, the number of grid points of $N_i$ within distance $c\cdot r_i$, for any constant $c\geq 1$, from the boundary of each square in $\mathcal{H}_i$ is at most:
    \(
        O(c/r_i) =  O(\frac{\sqrt{n}}{2^i})
    \)
    points in $N_i$. 
    Thus, \emph{any square} in $\mathcal{H}_i$ defines a well-separated clique-based separator, which contains those stabbed by points in $N_i$ within distance $c\cdot r_i$, for a sufficiently large~$c$, from the boundary of $H_i$. 
    The number of cliques is $ O(\frac{\sqrt{n}}{2^i}) = O(\sqrt{n})$, as claimed in \Cref{lm:disk-well-sep}. 
 \end{itemize}

\paragraph{Running time.} 
The only difference between our algorithm and the algorithm for finding the clique-based separator for unit-disk graphs is that in Step 3, we take cliques defined by at most 4 consecutive squares instead of using only 1. Thus, our running time is the same as the running time of the implementation outlined by de Berg~\cite{deBerg23} for geometric intersection graphs, which is $O(n\log n)$.  Indeed, implementing our algorithm is much simpler as we do not have to deal with ``large objects'' and ``small objects'' separately, as every disk has the same size.

\begin{remark}
\label{rm:well-sepclique-basd} (a) For each clique in the separator, all objects in the clique could be stabbed by a single point; this fact will be helpful in the case of bounded ply. 

(b) Our algorithm to construct a well-separated clique-based separator for unit-disk graphs could be applied to construct a well-separated clique-based separator for geometric intersection graphs of \textul{similar-size objects with constant complexity}.  Specifically, for these objects, the same notion in \Cref{def:well-sep} applies, assuming that we scale the objects so that {each has a diameter at most $1$} and at least $\Omega(1)$.  
Then, in Step 3, instead of considering $4$ consecutive squares, we consider $c$ consecutive squares for a sufficiently big constant $c$. Step 4 remains unchanged, except that the constant $c$ now depends on the minimum size of the objects. The separator algorithm of de Berg~\cite{deBerg23} works for geometric intersection graphs of fat objects with constant complexity, which applies in our case. 
\end{remark}
\section{Extension to Graphs of Similar Size Pseudo-Disks}

We consider a pseudo-disk graph, where the graph is defined as the intersection graph of a set of pseudo-disks. The following algorithm works for pseudo-disks that are of roughly the same size and have constant complexity. Specifically, we assume that the pseudo-disks are fat objects that are sandwiched between two disks of the same center of radius $r$ and $R$, $r\leq R$ being two fixed constants. 
We refer to a pseudo-disk with center $p$ as $C_p$. We also assume that the boundary of each object can be represented by a constant number of algebraic arcs.

\subsection{Approximate diameter and distance oracles}

In this section, we prove \Cref{thm:pseudodisk-diam-addive} and \Cref{thm:pesodo-dks-oracle}. 
Recall that in computing a $+2$-approximation of diameter and distance oracle for unit-disk graphs in \Cref{sec:diameter-UDG} and \Cref{sec:oracle-UDG}, respectively, we used two technical ingredients: (i) a clique-based $r$-clustering computable in $\Tilde{O}(n)$ time and (ii) an algorithm for computing single-source shortest path in unit-disk graphs in $\Tilde{O}(n)$ time. As long as we have the two technical ingredients for the intersection graphs of similar-size pseudo-disks of constant complexity, we then have the algorithms for computing a $+2$ approximation of diameter and distance oracle with the same guarantees for these graphs.

In \Cref{subset:SSP}, we show how to compute the single-source shortest path for the intersection graphs of similar-size pseudo-disks in $\Tilde{O}(n)$ time; see \Cref{thm:SSSP}. 
Here, we show how clique-based $r$-clustering for intersection graphs of similar-size pseudo-disks can also be done in $\Tilde{O}(n)$ time; this will implies \Cref{thm:pseudodisk-diam-addive} and \Cref{thm:pesodo-dks-oracle}.

\begin{lemma}\label{lm:rclustering-pseudo-disk} 
For any given $r$ and an $n$-vertex intersection graph $G$ of similar-size pseudo-disk of constant complexity, we can find the implicit representation of a clique-based $r$-clustering $(\mathcal{R},\mathcal{C})$ of $G$ in $\Tilde{O}(n)$~time.
\end{lemma}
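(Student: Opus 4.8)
The plan is to invoke \Cref{remark:clique-based}: the construction in \Cref{sec:clique-division} turns two subroutines into a clique-based $r$-clustering in $\Tilde O(n)$ time, namely (a) a $\Tilde O(n)$-time algorithm that, given an explicit set $\hat D$ of objects, outputs a well-separated clique-based separator $\hat{\mathcal S}$ of size $O(\sqrt{|\hat D|})$ together with, for every object $y\in\hat D$, the list of representatives of the cliques of $\hat{\mathcal S}$ that $y$ intersects; and (b) a $\Tilde O(n)$-time algorithm that outputs a spanning forest of the intersection graph of $\hat D$. After rescaling so that every pseudo-disk has diameter at most $1$ and at least $\Omega(1)$, subroutine (a) is already provided by \Cref{rm:well-sepclique-basd}(b): the algorithm of \Cref{subsec:well-separated-separator} runs with the single change that Step~3 groups $c$ consecutive concentric squares instead of $4$, where $c=c(R/r)$ is a constant chosen large enough that the Euclidean separation guaranteed by \Cref{def:well-sep} exceeds the diameter of any object; this keeps the separator size $O(\sqrt n)$, the running time $O(n\log n)$, and --- since de~Berg's implementation already operates on fat objects of constant complexity via range searching --- the per-object representative lists are computable within $O(n\log n)$ time as well. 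So the only genuinely new ingredient is subroutine (b).

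Computing the spanning forest is the main obstacle: pseudo-disks admit no Delaunay-triangulation shortcut, and a constant-size planar region may contain $\Omega(n)$ pairwise-intersecting objects, so enumerating edges is out of the question. I would use a grid instead. Lay down a grid whose cell diameter is smaller than twice the inner radius shared by all the pseudo-disks (a positive constant after scaling); since each $C_p$ contains the disk of that radius about its center $p$, any two objects whose centers share a cell must intersect, so the objects of one cell form a clique and are spanned by a star. For inter-cell edges, every object has diameter $O(1)$ and cells have side $\Omega(1)$, so two objects whose center-cells are more than $O(1)$ grid steps apart cannot intersect; hence each non-empty cell has only $O(1)$ relevant neighbouring cells. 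For each ordered pair of relevant non-empty neighbouring cells I would run a red--blue intersection-detection procedure on the constant-complexity objects they contain, returning one witness intersecting pair whenever one exists (detecting a crossing of the $O(1)$ boundary arcs, or containment of one object's center inside another object, by a plane sweep over $O(1)$-complexity algebraic arcs in $O(m\log m)$ time on an instance with $m$ objects). Since each object belongs to exactly one cell and that cell has $O(1)$ relevant neighbours, each object participates in $O(1)$ of these instances, so the total instance size is $O(n)$ and the total time $O(n\log n)$. Contracting each cell to a super-node and inserting the witness edges yields a multigraph with $O(n)$ nodes and $O(n)$ edges; any spanning forest of it, found in $O(n)$ time, expands --- by reattaching the per-cell stars --- to a spanning forest of the whole intersection graph. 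Alternatively, one may extract the spanning forest directly from the grid-based data structure underlying the $\Tilde O(n)$-time single-source shortest path algorithm of \Cref{subset:SSP} (\Cref{thm:SSSP}).

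With (a) and (b) in hand, the proof is completed by rereading \Cref{sec:clique-division} with ``disk'' replaced by ``similar-size pseudo-disk of constant complexity''. The recursion, the formation of the clusters $R_T$ from the trees of the spanning forest of $\Gamma(D_i\cup R(\hat{\mathcal S}))$, and the maintenance of the lists $\rho(\cdot)$ all proceed unchanged. The accounting is geometry-free: the total object multiplicity satisfies $A(n)\le A(n_1)+A(n_2)+O(n)$ with $n_1+n_2=n+O(\sqrt n)$ and $n_1,n_2\ge n/3$, hence $A(n)=O(n\log n)$; each recursive step costs $\Tilde O(|\hat D|)$ (dominated by subroutines (a) and (b)), so the running time is $\Tilde O(n)$; Frederickson's recurrence gives $\sum_{R\in\mathcal R}|\mathcal C(\bdry R)|=O(n/\sqrt r)$ and $|\mathcal R|=O(n/\sqrt r)$; and the proof of \Cref{clm:boundary-T} uses only that two intersecting objects lie within a constant multiple of their diameter, which by the choice of $c$ is below the well-separation threshold of $\hat{\mathcal S}$, so it carries over verbatim. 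Together these establish Items (1)--(4) of \Cref{def:clique-r-division-2nd}, proving the lemma.
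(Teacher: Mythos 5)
Your proposal is correct and follows essentially the same route as the paper: invoke \Cref{rm:well-sepclique-basd}(b) for the separator, then supply a $\Tilde O(n)$-time spanning-forest subroutine. The paper dispatches the spanning-forest step in one line by invoking the SSSP algorithm of \Cref{subset:SSP} (which you note as an alternative); your grid-plus-red--blue witness-edge construction is a self-contained re-derivation of the same underlying mechanism, so the two arguments coincide in substance.
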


\begin{proof} 
By \Cref{rm:well-sepclique-basd}, $G$ has a well-separated clique-based separator that can be constructed in $\Tilde{O}(n)$ time.  Thus, by \Cref{remark:clique-based}, we only need to have a $\Tilde{O}(n)$ time algorithm to construct a spanning tree of the intersection graph of the pseudo-disks. Here, we could use our single-source shortest-path algorithm in \Cref{subset:SSP} to find a spanning tree; this implies the lemma.
\end{proof}

\subsection{Exact diameter for small ply}

In this section, we prove \Cref{thm:pseudodisk-diam-exact}. Observe that when the objects have ply $k$, by \Cref{rm:well-sepclique-basd}(a), we could construct a balanced separator of size $O(k\sqrt{n})$ as each clique has at most $k$ vertices, and the clique-based separator has $\sqrt{n}$ cliques. Thus, using standard algorithms~\cite{Frederickson1987,Wulff-Nilsen11}, $G$ admits an $r$-division $\mathcal{R}$ such that:
\begin{enumerate}
    \item $\mathcal{R}$ has $O(kn/\sqrt{r})$ clusters, each induced a connected subgraph of $G$ of size at most $r$. 
    \item Each region $R\in \mathcal{R}$ has at most $O(kr)$ vertices having edges outside $R$, called boundary of $R$, and denoted by $\bdry R$. 
    \item $\sum_{R\in \mathcal{R}} |\bdry R|= O(kn/\sqrt{r})$. That is, the total number of boundary vertices counted with multiplicity is $O(kn/\sqrt{r})$.
    \item  Every vertex of $G$ is in a region in $\mathcal{R}$.
\end{enumerate}

By \Cref{rm:well-sepclique-basd}(b), we can construct an implicit representation of a clique-based separator for $G$ in $\Tilde{O}(n)$ time. As each clique has size at most $k$, from the implicit representation, we could obtain all the vertices in the balanced separator in time $\Tilde{O}(nk)$. Thus, following standard techniques~\cite{Frederickson1987}, we can construct an $r$-division in $\Tilde{O}(nk)$ time.

Next, we use distance encoding following the approximate diameter algorithm in \Cref{sec:diameter-UDG}. 
Here, the difference is that we no longer need to choose a representative per clique; we have all the boundary vertices and the number of boundary vertex per cluster $R\in \mathcal{R}$ is at most $O(kr)$. For each cluster $R$, we take all vertices in $\bdry R$ to construct the sequence of vertices in the pattern construction. Therefore, the distance  $\hat{d}_G(u,v)$ as defined in \Cref{eq:approx-dist} is the exact distance between $d_G(u,v)$. Now we apply the same algorithm in \Cref{sec:diameter-UDG} with the following modifications:

\begin{itemize}
    \item The set of cliques $\mathcal{C}$ being boundary vertices $\bigcup_{R\in \mathcal{R}} \! \bdry R$; each boundary vertex is a singleton~clique.
    \item  In \Cref{eq:delta-u-C}, we do not add $1$. Specifically, we set $\Delta(u,\mathcal{C}) = \max_{x\in C}d_G(u,x)$.
\end{itemize}

Then, we get an algorithm for computing the \emph{exact eccentricities} of every vertex. 
Thus, the returned diameter is an exact diameter. Next, we analyze the running time.

\paragraph{Running time.~} Note that SSSP in similar-size pseudo-disk graphs with constant complexity could be computed in $\Tilde{O}(n)$ by \Cref{thm:SSSP} in \Cref{subset:SSP}. As $\sum_{R\in \mathcal{R}} |\bdry R|= O(kn/\sqrt{r})$ the running time of Step 1 is $\Tilde{O}(kn^2/\sqrt{r})$.  The number of patterns is $(kr^2)^d = k^d r^{2d}$. Thus, Step 2 could be implemented in $O((nk/\sqrt{r})\cdot (kr^2)^d \cdot r) = nk^{d+1}r^{(4d+1)/2}$. The running time of the last step is $\Tilde{O}(kn^2/\sqrt{r} + knr^{3/2})$. Thus, the total running time of the algorithm is:
\begin{equation*}
    \Tilde{O}\left(kn^2/\sqrt{r} + nk^{d+1}r^{(4d+1)/2}\right) =  \Tilde{O}(k^{11/9}n^{2-1/18})
\end{equation*}
for $r = (n/k^d)^{\frac{1}{2d+1}}$ and $d = 4$.

\begin{remark} For unit disk graphs, one can obtain a better dependency on $k$ by using the separator by~\cite{MTTV97} that has size $O(\sqrt{k n})$, which then implies a truly subquadratic time algorithm for computing exact diameter in unit disk graphs for a slightly larger value of $k$. 
\end{remark}

\subsection{Single-source shortest paths in pseudo-disk graphs}\label{subset:SSP}

We study the problem of single-source shortest paths in a pseudo-disk graph.  For an unweighted unit-disk graph, computing the single-source shortest paths (SSSP) can be done in time $O(n\log n)$, with $n$ as the number of vertices in the graph. 
There are a number of algorithms~\cite{Efrat2001-hm,Cabello2015-vo,Chan2016-sy}  with this running time as reported in the literature.  Notice that the running time is tight \cite{Cabello2015-vo} --- reduction from the problem of finding the \emph{maximum gap} in a set of numbers shows that deciding if the unit disk graph is connected requires $\Omega(n\log n)$ time. 
In the following we adapt the $O(n\log n)$ algorithm by Chan and Skrepetos~\cite{Chan2016-sy} to the setting of pseudo-disks.

The main idea is to implement the breadth-first search without explicitly constructing the entire graph. 
The algorithm starts from the source $s$ and proceeds in $n-1$ steps. 
In step $i$, suppose we have already found all pseudo-disks $S_{i-1}$ within distance exactly $i-1$ from the source $s$ (the frontier) and now finds the pseudo-disks of distance $i$ from $s$, i.e., the disks that can be reached from the disks $S_{i-1}$ and are not yet found in earlier steps (i.e., in $\bigcup_{j<i} S_j$). 
To aid the steps, we put a grid of side length $r\sqrt{2}$ and bucket the centers of the pseudo-disks in these grid cells. 
Any two pseudo-disks with centers in the same cell intersect with each other for sure. 
Therefore, once we identify a pseudo-disk $C_p$ in $S_{i-1}$ with center $p$ in a grid cell, all pseudo-disks with centers in the same grid cell will be included in $S_{i}$ if they are not already included. 
Furthermore, for any two grid cells with distance at least $2R$ from each other, two pseudo-disks centered at the two cells respectively do not intersect. Thus for each cell $c$ touched by $S_{i-1}$, we check at most $O(R^2/r^2)=O(1)$ cells (called the \emph{neighboring cells} of $c$) potential pseudo-disks to be included in $S_{i}$.

\paragraph{Red-blue intersection.}
To efficiently find $S_{i}$ from $S_{i-1}$, we use the \emph{red-blue intersection algorithm} to identify, for a set of pseudo-disks in $S_{i-1}$ with center in one cell $c$ (denoted by $S_{i-1}(c)$), the pseudo-disks in another cell $c'$ that intersect at least one pseudo-disk in $S_{i-1}(c)$. 
The name is justified if we color the pseudo-disks in $S_{i-1}(c)$
\emph{red} and the pseudo-disks in cell $c'$ \emph{blue}. 
The centers of the pseudo-disks in the two sets are separated by a line that separates $c$ and $c'$.
The following algorithm only needs the assumption that the boundary of each pseudo-disk is defined by a constant number of algebraic arcs.  \begin{definition}[Red-blue intersection problem]
Given a set of $n_r$ red pseudo-disks with centers below a horizontal line $h$ and another set of $n_b$ blue pseudo-disks with centers above $h$, determine for each blue pseudo-disk whether there is a red pseudo-disk that intersects with it. 
\end{definition}

We adapt the algorithm in~\cite[Subproblem~2]{Chan2016-sy} to accommodate pseudo-disks. 
We first compute the upper envelope of the red pseudo-disks $U_r$ and then run a sweeping line algorithm to check for each blue pseudo-disk whether any part of it is below $U_r$.

To compute the upper envelope $U_r$, we take each pseudo-disk $C_p$ and consider part of the boundary $C_p^+$ that is above $h$. 
For any two pseudo-disks $C_p, C_q$, $C_p^+$ and $C_q^+$ intersect at most twice.  
Now we compute the upper envelope of the segments $\{C_p^+\}$ for all red pseudo-disks $\{C_p\}$.
We call a collection of curves \EMPH{$s$-intersecting} if any pair in a set of curves (or curve segments) only intersects at most $s$ times.  The upper envelope of $n$ curve segments that are $s$-intersecting has complexity $\lambda_{s+2}(n)$, where 
$\lambda_s(n)$ is the maximum length of an $(n, s)$-Davenport-Schinzel sequence%
\footnote{An \EMPH{$(n, s)$-Davenport-Schinzel sequence} is a sequence of $n$ symbols such that no two adjacent symbols are the same and there is no subsequence of any alternation of length $s+2$ with two distinct symbols. It is known~\cite{Toth2017-du,Pettie2015-pm} that $\lambda_2(n)=2n-1$, $\lambda_3(n)=2n\alpha(n)+O(n)$, and $\lambda_4(n)=O(n2^{\alpha(n)})$, where $\alpha(n)$ is the inverse Ackermann function.}. 
Computing the upper envelope of $n$ curve segments where each pair intersects at most $s$ times can be done in time $O(\lambda_{s+1}(n)\log n)$~\cite{Hershberger1989-zu}. For our case, $s=2$, thus we can find $U_r$ in time $O(n_r\alpha(n_r)\log n_r)$ and the complexity of $U_r$ is $O(n_r 2^{\alpha(n_r)})$.

Once we have the upper envelope $U_r$ of the red pseudo-disks, we will run a sweeping line algorithm and check for each blue pseudo-disk whether any part of it is below $U_r$. After we sort all the boundary vertices of the blue pseudo-disks, the scan can be done in time linear to the complexity of $U_r$ and $n_b$, since a vertical sweeping line only intersects a pseudo-disk at an interval. 

In summary, we have the following lemma. 

\begin{lemma}\label{lm:red-blue-gen} 
In time $O(n_b\log n_b +n_r\alpha(n_r)\log n_r+n_r 2^{\alpha(n_r)})$, we can solve the red-blue intersection problem of $n_r$ pseudo-disks and $n_b$ blue pseudo-disks.
\end{lemma}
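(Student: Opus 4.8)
The plan is to reduce the red–blue intersection problem to one \emph{upper-envelope} computation followed by a single vertical sweep, adapting the disk algorithm of Chan and Skrepetos~\cite{Chan2016-sy} by replacing circular arcs with the $O(1)$ algebraic arcs that bound each pseudo-disk and the $\lambda_2$ bounds for circles by the Davenport–Schinzel bounds for curves crossing at most twice. For each red pseudo-disk $C_p$ (center below the line $h$), let $C_p^+$ be the portion of $\partial C_p$ lying above $h$; it is a constant number of arcs. I would first prove the equivalence: a blue pseudo-disk $C_q$ (center above $h$) meets \emph{some} red pseudo-disk iff $C_q$ reaches below the upper envelope $U_r$ of the family $\{C_p^+\}$ over all red $p$. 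The easy direction is that a point of $C_p\cap C_q$ lies weakly below the top of $\partial C_p$, hence below $U_r$; the other direction uses that red centers are below $h$ and blue centers above $h$, so any point of $C_q$ strictly below $U_r$ must lie inside the red pseudo-disk attaining $U_r$ there.

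Next I would bound the combinatorics. The pseudo-circle property gives that any two pseudo-disk boundaries cross at most twice, and this is inherited by the truncated arcs, so $\{C_p^+\}$ is $2$-intersecting; hence $U_r$ has complexity $\lambda_4(n_r)=O(n_r 2^{\alpha(n_r)})$ and, by the divide-and-conquer envelope algorithm of Hershberger~\cite{Hershberger1989-zu}, can be computed in $O(\lambda_3(n_r)\log n_r)=O(n_r\alpha(n_r)\log n_r)$ time. Then comes the sweep: sort the $O(n_b)$ boundary-arc endpoints of the blue pseudo-disks by $x$-coordinate in $O(n_b\log n_b)$ time and sweep a vertical line across $U_r$ and the blue pseudo-disks together; since a vertical line meets any pseudo-disk in an interval, each blue pseudo-disk contributes $O(1)$ events and one can test at each event whether the current blue interval dips below $U_r$. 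This costs $O((|U_r|+n_b)\log n)$, and summing the three phases yields the stated $O(n_b\log n_b+n_r\alpha(n_r)\log n_r+n_r 2^{\alpha(n_r)})$.

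The main obstacle — and the only place where the pseudo-disk setting is genuinely more delicate than disks — is the structural/correctness step: one must check that ``the part of $\partial C_p$ above $h$'' is a well-defined union of $O(1)$ arcs on which the at-most-two-crossings property still holds, that the vertical cross-section of a pseudo-disk is an interval so the sweep is sound, and that the equivalence above is valid given only the $h$-separation of centers. Once those geometric facts are in hand, the running time falls out mechanically from $\lambda_3(n)=O(n\alpha(n))$, $\lambda_4(n)=O(n 2^{\alpha(n)})$, and the $O(\lambda_{s+1}(n)\log n)$ envelope-construction bound.
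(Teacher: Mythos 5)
Your proposal is correct and follows essentially the same approach as the paper: truncate each red boundary to the part above $h$, exploit the pseudo-circle property to get $2$-intersecting arcs, compute the upper envelope via Hershberger in $O(\lambda_3(n_r)\log n_r)$ time with output complexity $\lambda_4(n_r)$, and finish with a vertical sweep over the sorted blue boundary events. You flag the same loose ends the paper leaves implicit (correctness of the envelope reduction and that vertical cross-sections behave well), and you correctly observe that once these are accepted the time bound follows mechanically from the Davenport–Schinzel estimates.
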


Now we can conclude the SSSP algorithm for fat pseudo-disks of comparable size.

\begin{theorem}\label{thm:SSSP}
For $n$ fat pseudo-disks of bounded size, we can solve the single-source shortest paths problem in time $O\Paren{n\cdot(2^{\alpha(n)}+\alpha(n)\log n)}$, where $\alpha(n)$ is the inverse Ackermann function.
\end{theorem}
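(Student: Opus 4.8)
The plan is to run a layered breadth‑first search from the source $s$ \emph{without ever materializing the intersection graph}, exactly along the lines sketched above, reducing each layer to the red–blue intersection subroutine of \Cref{lm:red-blue-gen}. First I would lay down the grid of cell side $r\sqrt{2}$ and bucket the $n$ centers into its cells, which costs $O(n\log n)$. Since the cell diagonal is $2r$, any two pseudo-disks whose centers lie in a common cell must intersect (each contains the radius‑$r$ disk about its center); and since each pseudo-disk lies inside the radius‑$R$ disk about its center, two pseudo-disks whose centers lie in cells farther apart than $2R$ are disjoint. Hence every cell $c$ has only $O(R^2/r^2)=O(1)$ cells — the \emph{neighbors} of $c$ — that can contain a center of a pseudo-disk meeting one centered in $c$, and any two distinct grid cells are separated by an axis‑parallel line, which is precisely the input format required by \Cref{lm:red-blue-gen} (after a rotation if the separating line is vertical).

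Given the frontier $S_{i-1}$, I would build $S_i$ as follows. For every cell $c$ with $S_{i-1}(c)\neq\varnothing$, where $S_{i-1}(c)$ denotes the pseudo-disks of $S_{i-1}$ centered in $c$: (a) move every still‑unvisited pseudo-disk centered in $c$ into $S_i$, since all of them meet anything in $S_{i-1}(c)$; and (b) for each of the $O(1)$ neighbor cells $c'$ of $c$, call \Cref{lm:red-blue-gen} with red set $S_{i-1}(c)$ and blue set the still‑unvisited pseudo-disks centered in $c'$, and move every reported blue pseudo-disk into $S_i$. Correctness is the standard BFS argument: a pseudo-disk is placed into $S_i$ only when witnessed adjacent to a member of $S_{i-1}$ (or lying in the same cell as one), so its assigned layer never undershoots its true hop distance; conversely every edge of the intersection graph whose endpoints are centered in cells $c$ and $c'$ is detected — if $c=c'$ it is handled in step (a), and otherwise it is detected in a red–blue call of step (b) during any layer in which one endpoint lies in $S_{i-1}(c)$, which does occur (see the next paragraph).

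The running-time analysis rests on one structural fact, which I expect to be the main point to state carefully. Let $\ell_c$ be the first layer in which some pseudo-disk centered in $c$ is visited; then step (a), applied in layer $\ell_c+1$, flushes \emph{all} remaining pseudo-disks of $c$ into $S_{\ell_c+1}$, so every pseudo-disk of $c$ lies in $S_{\ell_c}(c)\cup S_{\ell_c+1}(c)$. Consequently $c$ is ``active'' — i.e.\ triggers processing because $S_{i-1}(c)\neq\varnothing$ — in at most the two layers $i=\ell_c+1$ and $i=\ell_c+2$. It follows that each pseudo-disk appears in the red set of only $O(1)$ red–blue calls (it is red in layer $i$ only when it lies in $S_{i-1}$, which is one specific layer, across the $O(1)$ neighbor cells of its home cell) and in the blue set of only $O(1)$ red–blue calls (its home cell has $O(1)$ neighbors, each active $O(1)$ times, and the pseudo-disk is blue only while unvisited). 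Summing the bound of \Cref{lm:red-blue-gen} over all calls with $\sum n_r=O(n)$, $\sum n_b=O(n)$, and using monotonicity of $\alpha$ and $\log$ together with $n_r,n_b\le n$, we get $\sum O(n_b\log n_b)=O(n\log n)$, $\sum O(n_r\alpha(n_r)\log n_r)=O(n\alpha(n)\log n)$, and $\sum O(n_r 2^{\alpha(n_r)})=O(n 2^{\alpha(n)})$. The remaining bookkeeping — bucketing, flushing cells in step (a), and marking newly found pseudo-disks — is $O(n\log n)$ in total, so the overall running time is $O\!\big(n\,(2^{\alpha(n)}+\alpha(n)\log n)\big)$, as claimed.
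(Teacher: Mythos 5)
Your proposal is correct and follows essentially the same approach as the paper: a layered BFS over a grid of cell side $r\sqrt{2}$, with same-cell flushing plus constant-many red–blue intersection calls (\Cref{lm:red-blue-gen}) per active cell per layer, and the observation that each nonempty cell is active only $O(1)$ times, yielding $\sum n_r, \sum n_b = O(n)$. You spell out the "active in at most two layers" bookkeeping a bit more explicitly than the paper does, but the algorithm, the reduction to the red–blue subroutine, and the resulting running time $O\bigl(n\,(2^{\alpha(n)}+\alpha(n)\log n)\bigr)$ all match.
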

\begin{proof}
For step $i$ of the BFS algorithm, by induction we have the pseudo-disks $S_{i-1}$ at exactly distance $i-1$ from source $s$. 
For each cell $c$ with at least one pseudo-disk in $S_{i-1}$, we consider all cells $c'$ of distance within $2R$ from $c$ and run the red-blue intersection algorithm to find disks that intersect with at least one disk of $c$ in $S_{i-1}$. 
We filter out disks that are already discovered and arrive at $S_i$. Each cell $c$ that is non-empty (containing at least one center of the pseudo-disks) is only visited at most a constant number of times, either when one of the pseudo-disks centered inside $c$ enters the frontier or when at least one of the pseudo-disks centered at a neighboring cell of $c$ enters the frontier. 
The total running time is $O\Paren{n\cdot (2^{\alpha(n)}+\alpha(n)\log n)}$.
\end{proof}

\section{Computing Diameter+1}\label{sec:plusone}

In this section we present a truly-subquadratic time algorithm for +1-approximation of the graph diameter for unit-disk graphs and pseudo-disk graphs. The current algorithm achieves a +2 approximation and the main reason is that we take a representative vertex from each clique in the clique-based $r$-clustering. 
Suppose the shortest path between two vertices $u$ and $v$, with $v$ inside a cluster $R$ and $u$ outside $R$, goes through a vertex $x$ in the clique $C$ on the boundary $\partial\! R$. The earlier algorithm considers the shortest path from $u$ to the representative vertex $s$ in the same boundary clique $C$ containing $x$. Clearly $d(u, s)\leq d(u, x)+1$ and $d(v, s)\leq d(v, x)+1$. Using distance $d(u, s)+d(v, s)$ gives possibly an overestimate of $+2$ error for $d(u, v)=d(u, x)+d(v, x)$. This is the argument in \Cref{lm:dist-via-pattern}.

To fix this issue, we instead choose a dummy vertex $s$ for clique $C$ (rather than an actual vertex in the clique $C$). We add dummy edges from $s$ to all vertices in $C$ with length of $1/2$. Essentially one can replace the clique $C$ by a star of the dummy edges --- the distance between any pair of vertices in $C$ is still~$1$. This operation does not change the shortest path length for any pair of input vertices $(u,v)$. 
If the shortest path $P(u, v)$ has zero or one vertex in $C$, $P(u, v)$ remains the same. If $P(u, v)$ has two vertices $x, y$ in $C$, then the edge $xy$ is replaced by two edges $xs$ and $sy$ with total length unchanged. $P(u, v)$ cannot have three or more vertices of $C$ since $P$ is the shortest.

Next we will calculate the shortest path of $d(u, s)$ and $d(v, s)$ and use $d(u, s)+d(v, s)$ as an estimate of $d(u, v)$. 
There are two scenarios. In the first scenario, the shortest path between $u$ and $v$ goes through one vertex $x$ in the clique $C$.
See~\Cref{fig:path-clique} (Left). In this case, $d(u, s)\leq d(u, x)+1/2$ and $d(v, s)\leq d(v, s) +1/2$ and thus we have an additive error of $+1$. In the second case, the shortest path between $u$ and $v$ goes through two vertices $x, y$ in the clique $C$.
See~\Cref{fig:path-clique} (Right). In this case $d(u, v)=d(u, x)+1+d(y, v)=d(u, s)+d(v,s)$. Thus the distance $d(u, s)+d(v, s)$ is accurate for $d(u, v)$. We use the same idea of representing the distance $d(u, s)+d(v, s)$ by the sum of $d(u, s_0)$---with only one dummy node $s_0$ for each cluster $R$ in the $r$-clustering---and $d(\patt_u, v)$, which can be obtained by computing the distance of $v\in R$ and all possible patterns $\patt$. By \Cref{thm:dummy-graph-VC}, the dummy graph (obtained by replacing every clique with a star) has VC-dimension of pattern vectors bounded by 4 and hence, the number of patters remains unchanged. The only ingredient we need for our main results is computing SSSP from dummy nodes.

\begin{figure}
\centering
\includegraphics[width=0.7\linewidth]{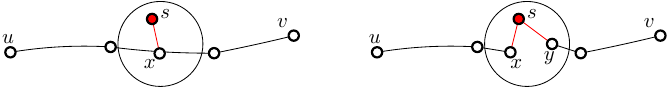}
\caption{Using shortest path from vertices to dummy representative vertices of cliques in the clique-based clustering gives an additive error of $+1$.}
\label{fig:path-clique}
\end{figure}

\paragraph{SSSP from dummy nodes.}
First we need to calculate the distance $d(u, s)$ for all dummy vertices $s$ of cliques on $\partial\! R$ with all vertices $u$ in $R$. In fact, we will run a SSSP from each of the dummy nodes for cliques on the clique-based $r$-clustering. This can be implemented by running a multi-source shortest path tree, starting from the real vertices of a clique $C$. At the end, we add a distance $1/2$ to all distances to obtain the distance from a dummy node $s$ to all vertices. Since the vertices in a clique stay in at most four unit-sized grid cells, we use the same technique of red-blue intersection in \Cref{subset:SSP}. Thus this can be done in time $O(n\log n)$ for unit-disk graph and in time $\tilde{O}(n)$ for pseudo-disks.

\section{Open Problems}

In this paper, we presented truly subquadratic algorithms for a +1-approximation to the graph \textsc{Diameter} in an unweighted unit-disk graph. 
The obvious open question is whether this can be done for the \emph{exact} diameter, thus resolving the long-standing open question. 
Another open problem is whether the results can be extended to the intersection graph of disks of possibly different radii. 
The challenge there is to develop something similar to an $r$-division. 
While the clique-based separator by de Berg~\cite{deBerg23} works for general disk graphs, there are challenges applying the separator (or some other variants) recursively to get a nice subdivision with bounded boundary size per piece.

\section*{Acknowledgements} 
The authors would like to thank Mark de Berg, Arnold Filtser and Da Wei (David) Zheng for useful discussion at CGWeek of 2024 about the improvement from $+2$ to $+1$-approximation. 
}

\newpage
\small
\bibliographystyle{alphaurl}
\bibliography{VC-dim}

\end{document}